\newtheorem{definition}{Definition}
\newtheorem{theorem}[definition]{Theorem}
\newtheorem{proposition}[definition]{Proposition}
\newtheorem{lemma}[definition]{Lemma}
\newtheorem{corollary}[definition]{Corollary}
\newcommand{\eps}{\varepsilon}
\begin{document}

\title{Faster Coherent Quantum Algorithms for Phase, Energy, and Amplitude Estimation}

\author{Patrick Rall}
\affiliation{Quantum Information Center, University of Texas at Austin}


\maketitle

\begin{abstract}
    We consider performing phase estimation under the following conditions: we are given only one copy of the input state, the input state does not have to be an eigenstate of the unitary, and the state must not be measured. Most quantum estimation algorithms make assumptions that make them unsuitable for this `coherent' setting, leaving only the textbook approach. We present novel algorithms for phase, energy, and amplitude estimation that are both conceptually and computationally simpler than the textbook method, featuring both a smaller query complexity and ancilla footprint. They do not require a quantum Fourier transform, and they do not require a quantum sorting network to compute the median of several estimates. Instead, they use block-encoding techniques to compute the estimate one bit at a time, performing all amplification via singular value transformation. These improved subroutines accelerate the performance of quantum Metropolis sampling and quantum Bayesian inference.
\end{abstract}

Phase estimation is one of the most widely used quantum subroutines, because it grants quantum computers two unique capabilities. First, it yields a quadratic speedup in the accuracy of Monte Carlo estimates \cite{0005055,1504.06987,1807.06456}. Achieving `Heisenberg limited' accuracy scaling has numerous applications in physics, chemistry, machine learning, and finance \cite{2010.04173,2012.06681,2004.06832,2012.06283}. Second, it allows quantum computers to diagonalize unitaries in a certain restricted sense: if $U = \sum_{j} e^{2\pi i \lambda_j} \ket{\psi_j}\bra{\psi_j}$ then phase estimation performs the transformation
\begin{align}
 \sum_j \alpha_j \ket{0^n} \ket{\psi_j} \to \sum_j \alpha_j \ket{\lambda_j} \ket{\psi_j} \label{eqn:coherent}
\end{align}
where $\ket{\lambda_j}$ is an $n$-bit estimate of $\lambda_j$.
This access to spectral information enables quantum speedups for linear algebra \cite{0811.3171,1511.02306}, studying physical systems \cite{0911.3635,1011.1468, 1910.01659,2005.13434}, estimating partition functions \cite{1504.06987}, and performing Bayesian inference \cite{1907.09965,2009.11270}.

Phase estimation is also a very complicated algorithm. Textbook phase estimation \cite{nc} requires the Quantum Fourier Transform (QFT), a tool we commonly associate with the exponential speedup of Shor's algorithm \cite{9508027}. However, phase estimation only delivers a quadratic speedup for estimation and the exponential speedup for linear algebra can sidestep the QFT \cite{1511.02306,1806.01838}. When applied to energies, phase estimation also requires Hamiltonian simulation, a quantum subroutine that is an entire subject of study in its own right: it requires recent innovations to apply optimally in a black-box setting \cite{1501.01715,1606.02685,1610.06546,1707.05391,1806.01838} and optimal Hamiltonian simulation for specific systems is still being actively studied \cite{2012.09194,2012.09238}. Furthermore, phase estimation demands median amplification to guarantee an accurate answer. This can be challenging to implement coherently on a quantum computer \cite{0102078,0211174,1207.2307}, because it requires many ancillae and a quantum sorting network. The probability with which phase estimation gives the correct answer (sometimes referred to as the `Fejer kernel' \cite{2004.04889}) is not always high enough to be amplified, which must be dealt with either by rounding or adaptivity \cite{1610.09619}. The conceptual complexity of textbook phase estimation and the resulting computational overhead motivates a search for alternatives.

Fortunately, a lot of simplification is possible if we allow for `incoherent' algorithms, where incoherence manifests via a variety of assumptions. We could be allowed to measure the quantum state, either because we are given many copies of the input state, or because we have the ability to prepare it inexpensively. Alternatively, we can assume a type of adaptivity where quantum states wait patiently without decohering while a classical computer performs a computation on the side. This assumption sometimes lets us repair the input state after using it \cite{0506068,0911.3635,1711.11025}. But most importantly, incoherent phase estimation algorithms usually require that the input state is an eigenstate of the unitary or Hamiltonian in question. 

If enough of these assumptions hold, then `iterative phase estimation' \cite{9511026} removes an enormous amount of conceptual and computational overhead. The estimate can be extracted one bit at a time, thereby removing the QFT. The accuracy of each bit can be amplified individually via many classical samples, removing the need for the quantum sorting network. The awkward notion of an `$n$-bit estimate' can be removed and replaced with a traditional additive-error estimate \cite{1610.09619}. Iterative phase estimation has seen many refinements and has been applied to gate set tomography and ground state energy estimation \cite{0709.2996,1502.02677,2102.11340}. An incoherent iterative approach also permits direct simplification of amplitude estimation \cite{1908.10846, 1912.05559}.

However, for some applications of phase estimation maintaining coherence remains crucial. The original strategy for quantum matrix inversion \cite{0811.3171}, quantum Metropolis sampling \cite{0911.3635,1011.1468,1910.01659}, and a protocol for partition function estimation \cite{1504.06987}, thermal state preparation, and Bayesian inference \cite{1907.09965,2009.11270} all violate the assumptions above. The eigenvalues must be estimated while preserving the superposition, and there is no guarantee that the input state is an eigenstate. These applications of `coherent' phase estimation motivate the main question of this paper: does there exist a conceptually and computationally simpler algorithm for performing the transformation (\ref{eqn:coherent}) while remaining coherent? That is: the input state is not necessarily an eigenstate, we are given exactly one copy, and there is no adaptive interaction with a classical computer. 

In this paper we answer this question in the affirmative. We present simplified coherent algorithms for phase estimation, energy estimation, and amplitude estimation. None of these algorithms require a QFT, and they can be made arbitrarily close to the ideal transformation without a quantum sorting network to compute a median. These algorithms are about 14x to 10x faster than traditional phase estimation in terms of their query complexity, a performance metric that neglects the fact that they also require fewer ancilla qubits. 

However, we also observe that there are unavoidable barriers to estimation in superposition. Consider an estimation algorithm that outputs a superposition of two estimates $\hat\lambda^{(1)}_{j}$ and $\hat\lambda^{(2)}_j$, both of which could be pretty close to the true value $\lambda_j$:
\begin{align}
    \sum_j \alpha_j\ket{0^n}  \ket{\psi_j} \to \sum_j\alpha_j \left( \xi_j \ket{\lambda^{(1)}_j} + \zeta_j \ket{\lambda_j^{(2)}} \right) \ket{\psi_j} \label{eqn:nondeterministic}
\end{align}
However, it is a well known fact \cite{9701001} that uncomputation cannot work in such a situation (unless one of $\xi_j$ or $\zeta_j$ is $\approx 0$). Thus, any algorithm that actually makes use of the estimates must necessarily damage the input superposition and can no longer be considered coherent.

Even worse, we show that \emph{any unitary quantum algorithm} for estimation must perform a map in the form (\ref{eqn:nondeterministic}), and there always exist some values of $\lambda_j$ where the neither of the $\xi_j$ and $\zeta_j$ are $\approx 0$. Therefore, the only way to get an algorithm that performs the deterministic transformation (\ref{eqn:coherent}) is to assume certain values of $\lambda_j$ do not appear. We refer to this as a \textbf{rounding promise,} and show that if the rounding promise holds, then our algorithms perform the map (\ref{eqn:coherent}). However, we also construct our algorithms in such a way that they give reasonable non-deterministic estimates as in (\ref{eqn:nondeterministic}) even when no rounding promise holds.

In the following we give a brief outline of the method we use to construct the algorithms. They strongly resemble iterative phase estimation \cite{9511026}, which works roughly like this: the estimate is computed one bit at a time, starting with the least significant bit. A `Hadamard-test' computes each bit with a decent success probability, which is then amplified to a high probability by taking the majority vote of many samples. This process could be made coherent naively by computing each sample into an ancilla, but this requires so many ancillae that any performance benefit over the QFT- and median-based approach is lost. The key idea is to amplify without using any new ancillae.

To manipulate the probabilities of the Hadamard-test, we use `block-encodings'. Block-encodings permit quantum computers to manipulate non-unitary matrices. In our case, the matrices' eigenvalues encode our probabilities. A unitary matrix $U_A$ is a block-encoding of $A$ if $A$ is in the top-left corner:
\begin{align}
    U_A = \begin{bmatrix} A & \cdot\hspace{1mm} \\ \cdot & \cdot \end{bmatrix}
\end{align}
Block-encoded matrices can be manipulated in two ways. First, \emph{linear combinations of unitaries} \cite{1501.01715, 1511.02306} allow us to build block-encodings $\alpha A  + \beta B$ given block-encodings of $A$ and $B$. Linear combinations of unitaries allow us to make block-encodings of Hamiltonians presented as a sum of local terms, covering most practical applications. Second, \emph{singular value transformation} \cite{1610.06546, 1606.02685, 1806.01838} lets us apply certain polynomials $p(x)$ to the singular values of a block-encoded matrix $A$, using $\text{deg}(p)$ many queries to controlled-$U_A$ and its inverse. 

Together, these two techniques permit the unification of many quantum algorithms into a single framework. For an accessible introduction to these methods, along with a comprehensive review of the most modern versions of these algorithms we refer to \cite{2105.02859}. This work also presents the independent discovery of an algorithm very similar to our improved method of phase estimation, which is also sketched in \cite{Chuang20}. 

To obtain the $k$'th bit ($0 \leq k < n-1$), iterative phase estimation performs a Hadamard-test on $U^{2^{n-k-1}}$, where $U = \sum_j e^{2\pi i \lambda_j} \ket{\psi_j}\bra{\psi_j}$. The outcome of the test is a coin toss that is heads with probability $\cos^2( 2^{n-k-1} \pi \lambda_i) $. We observe that the quantum circuit for the Hadamard-test resembles a linear combination of unitaries. Therefore, we can construct a block-encoding of a matrix whose eigenvalues encode the Hadamard-test probability for each eigenvector of $U$.

Iterative phase estimation then proceeds to perform the Hadamard-test several times and takes the majority vote. We observe that the probability that the majority vote is 1 is a polynomial in the Hadamard-test probability $p$:
\begin{align}
    \text{Pr}[ \text{majority vote is 1}  ] = \sum_{k = \lceil M/2\rceil}^{M} \binom{M}{k} p^k (1-p)^{M-k}
\end{align}
We can therefore apply such an `amplifying polynomial' \cite{0902.3757} directly to the block-encoding using singular value transformation, which requires no new ancillae. In fact, using techniques from \cite{1707.05391}, we can construct a polynomial that performs the same task with much smaller degree.

Now we have amplified the eigenvalues of the block-encoding to be close to 0 or 1, so we have a projector. To extract the 0/1-eigenvalue into a qubit we use the following novel technical tool: 
\\[2mm]
\noindent \textbf{Theorem.} \textit{ \textbf{Block-measurement.} Say we have an approximate block-encoding of a projector $\Pi$. Then there exists a quantum channel that approximately implements the map:
    \begin{align}
        \ket{0}\otimes \ket{\psi} \to   \ket{1} \otimes \Pi\ket{\psi} + \ket{0} \otimes (I-\Pi)\ket{\psi}
    \end{align}}
The channel is based on uncomputation \cite{9701001}, but the error analysis also features an interesting trick to deal with the case where the uncomputation fails. This theorem may find applications elsewhere.

Repeating the above procedure for each bit while carefully adjusting the phases at every step yields an algorithm that performs coherent phase estimation with no ancillae required. To estimate energies, we can construct a block-encoding with eigenvalues $\cos^2(2^{n-k-1}\pi \lambda_j)$ directly using the Jacobi-Anger expansion \cite{1806.01838} rather than going through Hamiltonian simulation, which boosts the performance further, although now the algorithm does require ancillae. Finally, to perform coherent amplitude estimation, we construct a block-encoding of a 1x1 matrix containing the amplitude to be estimated and then invoke energy estimation.

    A key research question of this paper is: Do these algorithms perform better than traditional phase estimation in practice? An asymptotic analysis is not sufficient to answer this question. Instead, we carefully bound the query complexity and failure probability of all algorithms involved using the diamond norm and carefully select constants to maximize the performance. Subsequently, we perform a numerical analysis of the query complexity. We find that we improve the query complexity of phase estimation by a factor of about 14x, and the query complexity of energy estimation is improved by about 10x. Our amplitude estimation algorithm inherits the speedup of the energy estimation algorithm.

    This paper is structured as follows. In section~\ref{sec:prelims} we carefully define the problem of coherent estimation and establish the notion of a rounding promise. Then we analyze the textbook method. In section~\ref{sec:coherent} we present our novel algorithms for phase and energy estimation. In section~\ref{sec:performance} we discuss a numerical analysis of the query complexities of the algorithms. Then, in section~\ref{sec:blockmeas} give a proof of the block-measurement theorem above and then show how to use energy estimation to perform non-destructive amplitude estimation in section~\ref{sec:ampest}.

\section{Preliminaries} \label{sec:prelims}

In this section we formally define the estimation tasks we want to solve (Definition~\ref{def:estimator}). This requires setting up the notion of a `rounding promise' (Definition~\ref{def:roundingpromise}) which highlights an inherent complication with coherent estimation that is not present in the incoherent case. We argue that this complication is unavoidable, and should be taken into account when coherent estimation is performed in practice. Next, we set up some simple technical tools for dealing with the error analysis of uncomputation (Lemmas~\ref{lemma:garbageremove},\ref{lemma:spectraltodiamond}). These will be used several times in this paper. Finally, we give a precise description and analysis of `textbook' phase estimation (Proposition~\ref{prop:phaseestimation}). Thus, this section clearly defines the problem and the previous state-of-the-art which we improve.

Our paper presents algorithms for phase, energy, and amplitude estimation. Amplitude estimation follows as a relativity simple corollary of energy estimation so we will only be talking about the prior two until Section~\ref{sec:blockmeas}. To talk about both phases and energies at once, we standardize input unitaries and Hamiltonians into the following form:
\begin{align}
    U = \sum_{j} e^{2\pi i \lambda_j} \ket{\psi_j}\bra{\psi_j} \hspace{2cm} H = \sum_j \lambda_j \ket{\psi_j}\bra{\psi_j}
\end{align}
We refer to the $\{\lambda_j\}$ as the `eigenvalues', and assume they live in the range $[0,1)$. While any unitary can be put into this form, the form places the constraint $0 \preceq H \prec I$ onto the Hamiltonian.

Our goal is to compute $\ket{\lambda_j}$ an `$n$-bit estimate of $\lambda_j$'. In this paper, we take this to be an $n$-qubit register containing a binary encoding of the number $\text{floor}(2^n \lambda_j)$. Since $\lambda_j \in [0,1) $ we are guaranteed that $\text{floor}(2^n \lambda_j)$ is an integer $\in \{0, ..., 2^n -1\}$ and thus has an $n$-bit encoding.

Consider phase estimation using a quantum circuit composed of elementary unitaries and controlled-$e^{2\pi i\lambda_j}$. Following an argument related to the polynomial method, we see that the resulting state must be of the form:

\begin{align}
    \sum_{x\in\{0,1\}^n} \sum_{y}  \alpha_{x,y}(e^{2\pi i\lambda_j}) \ket{x} \ket{ \text{garbage}_{x,y} } 
\end{align}
where $\alpha_{x,y}(e^{2\pi i\lambda_j})$ is some polynomial of $e^{2\pi i\lambda_j}$. We would like $\ket{x}$ to encode $\text{floor}(2^n \lambda_j)$, meaning that $\alpha_{x,y}(e^{2\pi i\lambda_j}) = 1$ if $x = \text{floor}(2^n \lambda_j)$ and $\alpha_{x,y}(e^{2\pi i\lambda_j}) = 0$ otherwise. This is impossible: $\alpha_{x,y}(e^{2\pi i\lambda_j})$ is a continuous function of $\lambda_j$, but the desired amplitude indicating $x = \text{floor}(2^n \lambda_j)$ is discontinuous. For energy estimation a similar argument applies, just that the amplitudes are of the form $\alpha_{x,y}(\lambda_j)$. This argument even holds in the approximate case when we demand that $\alpha_{x,y} \leq \delta$ or $\geq 1-\delta$ for some small $\delta$ - the discontinuity is present regardless. 

To some extent, this issue stems from the awkwardness of the notion of an `$n$-bit estimate' since it requires rounding or flooring, a discontinuous operation, when only continuous manipulation of amplitudes is possible. A more comfortable notion is that of an additive-error estimate, used by \cite{1610.09619,1603.08675} in their estimation algorithms.

However, the primary application of our algorithms is a situation where this simplification is not possible: Szegedy walks based on Hamiltonian eigenspaces \cite{1011.1468, 1910.01659,2005.13434,2107.07365}. The original quantum Metropolis algorithm \cite{0911.3635} implements a random walk over Hamiltonian eigenspaces, where the superposition is measured at every step and it is demonstrated that an additive error estimate is sufficient.  But in order to harness a quadratic speedup due to quantum walks \cite{Sze04}, the measurement of energies must be made completely coherent, which is not achieved by an additive-error estimate unless the error is smaller than the gap between any eigenvalues. Therefore, we retain the notion of an `$n$-bit estimate' in this paper. We could always fall back to an estimate with additive error $\eps/2$ by computing $n = \text{ceil}(\log_2(\eps^{-1}))$ bits of accuracy.  

However, the above argument still applies for additive-error estimation: the amplitude of any given estimate $\ket{\hat\lambda}$ is a continuous function of the eigenvalue $\lambda_j$. This means that the amplitude cannot be 0 or 1 everywhere, there must exist points where it crosses intermediate values in order to interpolate in between the two. In both the `$n$-bit estimate' case and the additive-error case, this causes a problem for coherent quantum algorithms since the estimate cannot always be uncomputed. For some eigenvalues $\lambda_j$, the algorithm will yield an output state of the form $\alpha \ket{\hat\lambda} + \beta \ket{\hat\lambda'}$. Even if $ \hat\lambda,\hat\lambda'$ are good estimates of $\lambda_j$, the uncompute trick \cite{9701001} cannot be used for such an output state. Thus, the damage to the input superposition over $\ket{\lambda_j}$ is irreparable. 

The only way to deal with this issue is to assume that the $\lambda_j$ do not take certain values. Then the amplitudes can interpolate between 0 and 1 at those points. Observe that the discontinuities in the function $\text{floor}(2^n \lambda_j)$ occur at multiples of $1/2^n$. A `rounding promise' simply disallows eigenvalues near these regions.


\begin{definition} \label{def:roundingpromise} Let $n\in \mathbb{Z}^+$ and $\alpha \in (0,1)$. A hermitian matrix $H$ satisfies an $(n,\alpha)$-\textbf{rounding promise} if it has an eigendecomposition $H = \sum_j \lambda_j \ket{\psi_j}\bra{\psi_j}$, all the eigenvalues $\lambda_j$ satisfy $0 \leq \lambda_j < 1$, and for all $x \in \{0,...,2^n\}$:
    \begin{align}
        \lambda_j \not\in \left[ \frac{x}{2^n} , \frac{x}{2^n} + \frac{\alpha}{2^n}  \right ]
    \end{align}

    Similarly, a unitary matrix $U$ satisfies an $(n,\alpha)$-rounding promise if it can be written as $U = \sum_j e^{2\pi i \lambda_j}\ket{\psi_j}\bra{\psi_j}$ and the phases $\lambda_j$ satisfy the same assumptions.
\end{definition}


We have $\lambda_j \in [0,1)$, and we have disallowed $\lambda_j$ from certain sub-intervals of $[0,1)$. The definition above is chosen such that the total length of these disallowed subintervals is $\alpha$, regardless of the value of $n$. We have essentially cut out an $\alpha$-fraction of the allowed eigenvalues. 

Guaranteeing a rounding promise demands an enormous amount of knowledge about the eigenvalues. We do not expect many Hamiltonians or unitaries in practice to actually provably satisfy such a promise. However, we expect that the estimation algorithms discussed in this paper will still perform pretty well even if they do not satisfy such a promise. One can increase the chances of success by setting $\alpha$ to be very small, so that the vast majority of the eigenvalues do not fall into a disallowed region. Then, if the input state is close to a uniform distribution over the eigenstates, then one can be fairly confident that only an $\alpha$ fraction of the eigenvalues in the support will be disallowed. The need for a rounding promise can be also sidestepped entirely by avoiding the use of energy estimation as a subroutine and approaching the desired problem directly. This has been accomplished for ground state finding \cite{2002.12508}.

The rounding promise is not just a requirement of our work in particular - it is a requirement for any coherent phase estimation protocol. The fact that coherent phase estimation does not work for certain phases is largely disregarded in the literature: for example, works like \cite{1603.08675} neglect this issue entirely. However, there are some works that have observed this problem and attempt to mitigate it. Such methods are called `consistent phase estimation' \cite{tashma13} since the error of their estimate is supposedly independent of the phase being estimated, thus allowing amplification to make the amplitudes always close to 0 or 1. We claim that all of these attempts fail, and furthermore that achieving coherent phase estimation without some kind of promise is impossible in principle. This is due to the polynomial method argument above: any coherent quantum algorithm's output state's amplitudes must be a continuous function of the phase, and continuous functions that are sometimes $\approx 0$ and sometimes $\approx 1$ must somewhere have an intermediate value. Recall that uncomputation only works when the amplitudes are close to 0 or 1. The error depends on if the phase is close to this transition point or not, so the error must depend on the phase. This issue was not taken into account by \cite{1010.4458}, \cite{1704.04992}, and \cite{1812.03584}, since all of these either implicitly or explicitly state that there is an algorithm that approximately performs $\ket{\psi_i}\ket{0^n}\to\ket{\psi_i}\ket{\lambda_i}$ in superposition while $\lambda_i$ is a deterministic computational basis state\footnote{\cite{1010.4458} makes use of such a map in Algorithm~4.  \cite{1704.04992} states this as Theorem~II.2.  \cite{1812.03584} sketches but does not analyze a protocol for this after Claim~4.5.}. A more promising approach is detailed in \cite{tashma13}, which, crudely speaking, shifts the transition points by a classically chosen random amount.  Now whether or not a phase is close to a transition point is independent of the phase itself, making amplification possible. \cite{1010.4458} describes a similar idea, calling it `unique-answer' eigenvalue estimation. However, we claim that this only works for a single phase. If we consider, for example, a unitary whose phases are uniformly distributed in $[0,1)$ at a sufficiently high density, then there will be a phase near a transition point for any choice of random shift. The only way to avoid the rounding promise is to sacrifice coherence and measure the output state. 

All of the algorithms in this paper, including textbook phase estimation, achieve an asymptotic runtime of $O(2^n \alpha^{-1} \log(\delta^{-1}))$, where $\delta$ is the error in diamond norm. Before we move on to the formal definition of the estimation task, we informally argue that the $\alpha^{-1}$ dependence is optimal, via a reduction to approximate counting. We are given $N$ items, $K$ of which are marked. Following the standard method for approximate counting \cite{0005055}, we construct a Grover unitary whose phases $\lambda_j$ encode $\arcsin(\sqrt{K/N})$. Given a $(1,\alpha)$-rounding promise, computing $\text{floor}(2^1 \lambda_j )$ amounts to deciding if $\lambda_j \leq \frac{1-\alpha/2}{2}$ or $\lambda_j \geq \frac{1+\alpha/2}{2}$ given that one of these is the case.  By shifting the $\lambda_j$ around appropriately we can thus decide if $K \geq (1/2 +  C\alpha )N $ or $K \leq (1/2- C \alpha)N$, for some constant $C$ obtained by linearising $\arcsin(\sqrt{K/N})$. We have achieved approximate counting with a promise gap $\sim\alpha$. Thus the $\Omega(\alpha^{-1})$ lower bound on approximate counting \cite{9804066} implies our runtime must be $\Omega(\alpha^{-1})$.

Equipped with the notion of a rounding promise, we can define our estimation tasks. Many algorithms in this paper produce some kind of garbage, which can be dealt with the uncompute trick \cite{9701001}. Rather than repeat the analysis of uncomputation in every single proof, we present a modular framework where we can deal with uncomputation separately. Furthermore, some applications may require computing some function of the final estimate, resulting in more garbage which also needs to be uncomputed. Rather than baking the uncomputation into each algorithm, it is thus more efficient to leave the decision of when to uncompute to the user of the subroutine. 


\begin{definition}\label{def:estimator} A \textbf{phase estimator} is a protocol that, given some $n \in \mathbb{Z}^+$ and $\alpha \in (0,1)$, and any error target $\delta > 0$, produces a quantum circuit involving controlled $U$ and $U^\dagger$ calls to some unitary $U$. If $U = \sum_j e^{2\pi i \lambda_j}\ket{\psi_j}\bra{\psi_j}$ satisfies an $(n,\alpha)$-rounding promise then this circuit implements a quantum channel that is $\delta$-close in diamond norm to the map:
    \begin{align}
        \ket{0^n} \ket{\psi_j} \to \ket{\text{floor}(\lambda_j 2^n)}\ket{\psi_j}
    \end{align}

    Similarly, an \textbf{energy estimator} is such a protocol that instead involves such calls to $U_H$ which is a block-encoding (see Definition~\ref{def:blockencoding}) of a Hamiltonian $H =  \sum_j \lambda_j \ket{\psi_j}\bra{\psi_j}$ that satisfies an $(n,\alpha)$-rounding promise.

    The \textbf{query complexity of an estimator} is the number of calls to $U$ or $U_H$ in the resulting circuit, as a function of $n,\alpha,\delta$.

    An estimator is said to `\textbf{have garbage}' or `\textbf{have $m$ qubits of garbage}' if it is instead close to a map that produces some $j$-dependent $m$-qubit garbage state in another register:
    \begin{align}
        \ket{0^n}\ket{0...0} \ket{\psi_j} \to \ket{\text{floor}(\lambda_j 2^n)} \ket{\text{garbage}_j} \ket{\psi_j}
    \end{align}
    (Note that the quantum circuit can allocate and discard ancillae, but that does not count as garbage.)

    An estimator is said to `\textbf{have phases}' if the map introduces a $j$-dependent phase $\varphi_j$:
    \begin{align}
        \ket{0^n} \ket{\psi_j} \to e^{i\varphi_j} \ket{\text{floor}(\lambda_j 2^n)}\ket{\psi_j}
    \end{align}
\end{definition}

If an estimator is both `with phases' and `with garbage', then we can just absorb the $e^{i\varphi_j}$ into $\ket{\text{garbage}_j}$ so the `with phases' is technically redundant. However, in our framework it makes more sense to treat phases and garbage independently since some of our algorithms are just `with phases'.


The reason we measure errors in diamond norm has to do with uncomputation of approximate computations with garbage. Consider for example a transformation $V$ acting on an answer register and a garbage register:
\begin{align}
    V \ket{0} \ket{0...0} =   \sqrt{1-\eps} \ket{0}\ket{\text{garbage}_0} + \sqrt{\eps} \ket{1} \ket{\text{garbage}_1} 
\end{align}
for some small nonzero $\eps$. We copy the answer register into the output register:
\begin{align}
    \to   \sqrt{1-\eps} \ket{0} \otimes  \ket{0}\ket{\text{garbage}_0}  + \sqrt{\eps} \ket{1}  \otimes \ket{1} \ket{\text{garbage}_1}
\end{align}
and then we project the answer and garbage registers onto $V\ket{0}\ket{0...0}$:
\begin{align}
    \to   \sqrt{1-\eps} \ket{0} \cdot \sqrt{1-\eps} +\sqrt{\eps} \ket{1} \cdot \sqrt{\eps} 
\end{align}
The resulting state $(1-\eps)\ket{0} + \eps\ket{1}$ is not normalized, meaning that the projection $V\ket{0}\ket{0...0}$ succeeds with some probability $< 1$. Thus the uncomputed registers are not always returned to the $\ket{0}\ket{0...0}$ state. 

At this point our options are either to postselect these registers to $\ket{0}\ket{0...0}$ or to discard them. Postselection improves the accuracy, but also implies that the algorithms do not always succeed. Since many applications of coherent energy estimation demand repeating this operation many times, it is important that the algorithm always succeeds. Thus, we need to discard qubits, so we need to talk about quantum channels. Therefore, the diamond norm is the appropriate choice for an error metric. Recall that the diamond norm is defined in terms of the trace norm \cite{9806029}:
\begin{align}
    \left| \Lambda \right|_\diamond := \sup_\rho \left| (\Lambda \otimes \mathcal{I}) (\rho) \right|_1
\end{align}

where $\mathcal{I}$ is the identity channel for some Hilbert space of higher dimension than $\Lambda$. The following analysis shows how to remove phases and garbage from an estimator, even in the approximate case.


\begin{lemma} \label{lemma:garbageremove} \textbf{Getting rid of phases and garbage.} Given a phase/energy estimator with phases and/or garbage with query complexity $Q(n,\alpha,\delta)$ that is unitary, we can construct a phase/energy estimator without phases and without garbage with query complexity $2Q(n,\alpha, \delta/2)$. 
\end{lemma}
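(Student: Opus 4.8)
The plan is to use the standard uncomputation trick, adapted to the setting where the estimator may introduce both a garbage register and a $j$-dependent phase. Suppose we are given a unitary estimator $V$ with query complexity $Q(n,\alpha,\delta)$ that, up to $\delta$-error in diamond norm, implements
\begin{align}
    \ket{0^n}\ket{0\ldots 0}\ket{\psi_j} \to e^{i\varphi_j}\ket{\mathrm{floor}(\lambda_j 2^n)}\ket{\mathrm{garbage}_j}\ket{\psi_j}.
\end{align}
First I would run $V$ with error target $\delta/2$, then copy the $n$-bit answer register into a fresh output register of $n$ qubits using CNOTs (this is a genuine copy of a computational-basis state, so no cloning issue arises), and then apply $V^\dagger$ to the original answer-plus-garbage register (again with error target $\delta/2$). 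In the ideal (exact) case, $V^\dagger$ uncomputes the answer, the garbage, and the phase $e^{i\varphi_j}$ all at once, leaving the register in $\ket{0^n}\ket{0\ldots 0}$ and the output register holding $\ket{\mathrm{floor}(\lambda_j 2^n)}$, with $\ket{\psi_j}$ untouched; the ancillae allocated inside the two invocations of $V$ are returned to $\ket{0}$ and discarded. The total query cost is $Q(n,\alpha,\delta/2) + Q(n,\alpha,\delta/2) = 2Q(n,\alpha,\delta/2)$, as claimed.

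The substantive part is the error analysis, and here I would lean on the modular tools the paper has already set up, namely Lemma~\ref{lemma:garbageremove}'s companions (Lemmas~\ref{lemma:garbageremove} and \ref{lemma:spectraltodiamond} are flagged in the preamble text as the uncomputation machinery). Let $\mathcal{V}$ denote the ideal channel and $\tilde{\mathcal{V}}$ the one actually implemented, so $|\mathcal{V}-\tilde{\mathcal{V}}|_\diamond \le \delta/2$. The composed circuit is $\tilde{\mathcal{V}}^\dagger \circ \mathrm{CNOT}^{\otimes n} \circ \tilde{\mathcal{V}}$ (with appropriate register bookkeeping), while the target is $\mathcal{V}^\dagger \circ \mathrm{CNOT}^{\otimes n} \circ \mathcal{V}$. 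By the triangle inequality and the fact that the diamond norm is subadditive under composition and unitary conjugation and tensoring with identity is non-increasing (indeed the definition of $|\cdot|_\diamond$ already stabilizes), each of the two approximate invocations contributes at most $\delta/2$, giving total error $\le \delta$. One subtlety to address carefully: $V^\dagger$ is the inverse of the \emph{exact} intended unitary, but what we can actually run is the inverse of the circuit, whose channel is $\delta/2$-close to $\mathcal{V}$, hence its adjoint is $\delta/2$-close to $\mathcal{V}^\dagger$ (the diamond norm is invariant under taking adjoints). So the bound goes through.

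The main obstacle I expect is not the arithmetic of the error budget but a modelling point: the estimator is only promised to be close to the ideal map \emph{on inputs of the form} $\ket{0^n}\ket{0\ldots 0}\ket{\psi_j}$ (and superpositions thereof via the diamond norm over the $\ket{\psi_j}$ register, which is the "higher-dimensional" system in the definition of $|\cdot|_\diamond$). After the CNOT copy, we feed $V^\dagger$ a state that, in the ideal case, is exactly $e^{i\varphi_j}\ket{\mathrm{floor}(\lambda_j 2^n)}\ket{\mathrm{garbage}_j}$ on the uncomputed registers tensored with $\ket{\mathrm{floor}(\lambda_j 2^n)}$ on the output register and $\ket{\psi_j}$ on the input — i.e. exactly the kind of state $V^\dagger$ is designed to act on, now with the output register playing the role of an innocent bystander that $V^\dagger$ ignores. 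So the diamond-norm guarantee for $V$ (which quantifies over \emph{all} inputs including entanglement with reference systems) applies to $V^\dagger$ acting here, with the output register and $\ket{\psi_j}$ absorbed into the reference. Once this is spelled out, the triangle-inequality bound closes and we get exactly $2Q(n,\alpha,\delta/2)$ queries for a phase/energy estimator with neither phases nor garbage. The energy-estimator case is identical with $U_H$ in place of $U$.
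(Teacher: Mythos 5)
Your proposal is correct and follows essentially the same route as the paper: run the estimator with error budget $\delta/2$, CNOT the answer into a fresh register, apply the inverse circuit with error budget $\delta/2$, discard the uncomputed registers, and add the two $\delta/2$ contributions to get $\delta$ and $2Q(n,\alpha,\delta/2)$ queries. The one step you gloss over is that discarding the uncomputed registers (a partial trace) cannot increase the trace-norm distance; the paper proves this explicitly ($|\mathrm{Tr}_A(\rho)|_1 \le |\rho|_1$ by expanding $\rho$ in its eigenbasis), whereas you implicitly fold it into "tensoring with identity is non-increasing", which is the stabilization built into the diamond norm, a related but not identical fact. Since contractivity of CPTP maps in trace norm is standard, this is a cosmetic omission, not a gap. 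Your extra paragraph on the modelling point (that the diamond-norm promise quantifies over all inputs including reference systems, so the guarantee for $V$ and $V^\dagger$ applies even to the entangled state fed into $V^\dagger$ after the CNOT) is a worthwhile clarification that the paper leaves implicit.
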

\begin{proof} Without loss of generality, we assume we are given a quantum channel $\Lambda$ that implements something close in diamond norm to the map:
    \begin{align}
        \ket{0^n} \ket{0...0}\ket{\psi_j} \to e^{i\varphi_j} \ket{\text{floor}(\lambda_j 2^n)} \ket{\text{garbage}_j} \ket{\psi_j} \label{eqn:idealuncompute}
    \end{align}
    If the channel is actually without phases then we can just set $\varphi_j =0$, and if it is actually without garbage then the following calculation will proceed without problems. Our strategy is just to use the uncompute trick, and then to discard the uncomputed ancillae. This requires us to implement $\Lambda^{-1}$, so we required that $\Lambda$ be implementable by a unitary.
\begin{align}
       \hspace{5mm} \begin{array}{c}\Qcircuit @C=1em @R=1em {
                    \lstick{\ket{0^n}}   & \qw & \targ & \qw  & \qw\\
                \lstick{\ket{0^n}} & \multigate{2}{\Lambda} & \ctrl{-1} & \multigate{2}{\Lambda^{-1}} &  \measuretab{\text{discard}}\\
           \lstick{\ket{0...0}}     & \ghost{\Lambda} & \qw & \ghost{\Lambda^{-1}} & \measuretab{\text{discard}}\\
           \lstick{\ket{\psi_j}}    & \ghost{\Lambda} & \qw & \ghost{\Lambda^{-1}} & \qw 
       }\end{array} \label{eqn:uncomputechannel}
\end{align}

    First, we consider the ideal case when $\Lambda$ implements the map (\ref{eqn:idealuncompute}) exactly. If we use $\ket{\psi_j}$ as input and we stop the circuit before the discards, we produce a state $\ket{\text{ideal}_j}$:
\begin{align}
    \ket{0^n} \ket{0^n} \ket{0...0}\ket{\psi_j} &\to e^{i\varphi_j} \ket{0^n}\ket{\text{floor}(\lambda_j 2^n)} \ket{\text{garbage}_j} \ket{\psi_j}\\
     &\to e^{i\varphi_j} \ket{\text{floor}(\lambda_j 2^n)}\ket{\text{floor}(\lambda_j 2^n)} \ket{\text{garbage}_j} \ket{\psi_j}\\
     &\to  \ket{\text{floor}(\lambda_j 2^n)}\ket{0^n} \ket{0...0} \ket{\psi_j} =: \ket{\text{ideal}_j}
\end{align}

    Let $\rho^\text{ideal}_{i,j} := \ket{\text{ideal}_i}\bra{\text{ideal}_j}$, so that we can write down the ideal channel:
    \begin{align}
        \Gamma_\text{ideal}(\sigma) := \sum_{i,j} \bra{\psi_i}\sigma\ket{\psi_j} \cdot \rho^\text{ideal}_{i,j}
    \end{align}

    If we apply $\Lambda$ with error $\delta/2$ instead we obtain the actual channel $\Gamma$ such that:
    \begin{align}
        \sup_\sigma  \left| (\Gamma \otimes \mathcal{I})(\sigma) - (\Gamma_\text{ideal} \otimes \mathcal{I})(\sigma)  \right|_1 \leq \delta
\end{align}
    If $A$ refers to the subsystems that start (and approximately end) in the states $\ket{0^n}\ket{0...0}$ then the final output state satisfies:
    \begin{align}
        &\sup_\sigma   \left|\text{Tr}_A\left((\Gamma \otimes \mathcal{I})(\sigma)\right) - \text{Tr}_A\left((\Gamma_\text{ideal} \otimes \mathcal{I})(\sigma) \right) \right|_1 \\
        \leq  &\sup_\sigma \left|\text{Tr}_A\left((\Gamma \otimes \mathcal{I})(\sigma) - (\Gamma_\text{ideal} \otimes \mathcal{I})(\sigma) \right) \right|_1 \\
        \leq  &\sup_\sigma \left|(\Gamma \otimes \mathcal{I})(\sigma) - (\Gamma_\text{ideal} \otimes \mathcal{I})(\sigma)  \right|_1 \leq \delta
\end{align}
    where we have used the fact that for all $\rho$ and subsystems $A$ we have $|\text{Tr}_A(\rho)|_1 \leq |\rho|_1$. Upon plugging in $\sigma = \ket{\psi_j}\bra{\psi_j}$ we can see that tracing out the middle register after applying $\Gamma_\text{ideal}$ implements the map
    \begin{align}
        \ket{0^n}\ket{\psi_j} \to \ket{\text{floor}(\lambda_j2^n)}\ket{\psi_j}
    \end{align}
    so therefore the circuit in (\ref{eqn:uncomputechannel}) is an estimator without phases and garbage as desired. 

    The proof is complete, up to the fact that $|\text{Tr}_A(\rho)|_1 \leq |\rho|_1$ for all $\rho$ and subsystems $A$. Write $\rho$ in terms of its eigendecomposition $\rho = \sum_i \lambda_i\ket{\phi_i}\bra{\phi_i}$, and let $\rho_i := \text{Tr}_A(\ket{\phi_i}\bra{\phi_i})$:
    \begin{align}
        \left|\text{Tr}_A(\rho)\right|_1 &=  \left|\text{Tr}_A\left( \sum_i \lambda_i \ket{\phi_i}\bra{\phi_i}\right )\right|_1=\left| \sum_i \lambda_i \rho_i \right|_1  \leq \sum_i |\lambda_i| \cdot |\rho_i|_1 = |\rho|_1
    \end{align}

\end{proof}


This establishes that uncomputation works in the approximate case as expected. While we are reasoning about the diamond norm, we also present the following technical tool which will come in handy several times. In particular, we will need it for our analysis of textbook phase estimation.

\begin{lemma} \label{lemma:spectraltodiamond} \textbf{Diamond norm from spectral norm}. Say $U,V$ are unitary matrices satisfying $|U-V| \leq \delta$. Then the channels $\Gamma_U(\rho):= U\rho U^\dagger$ and $\Gamma_V(\rho):= V\rho V^\dagger$ satisfy $|\Gamma_U - \Gamma_V|_\diamond \leq  2\delta$.

\end{lemma}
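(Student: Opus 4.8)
The plan is to bound the diamond norm distance between the two conjugation channels by a spectral-norm quantity and then estimate that quantity. First I would recall that by definition of the diamond norm,
\begin{align}
    |\Gamma_U - \Gamma_V|_\diamond = \sup_\rho \left| (U\otimes I)\rho(U\otimes I)^\dagger - (V\otimes I)\rho(V\otimes I)^\dagger \right|_1,
\end{align}
where the supremum is over density matrices on a doubled Hilbert space, but since $|(U\otimes I) - (V\otimes I)| = |U-V| \leq \delta$, it suffices to prove the statement directly for the un-tensored channels and the bound passes to the tensored ones verbatim. So the task reduces to: for unitaries $U,V$ with $|U-V|\leq\delta$ and any density matrix $\rho$, show $|U\rho U^\dagger - V\rho V^\dagger|_1 \leq 2\delta$.

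The key step is a telescoping (add-and-subtract) trick: write
\begin{align}
    U\rho U^\dagger - V\rho V^\dagger = U\rho U^\dagger - U\rho V^\dagger + U\rho V^\dagger - V\rho V^\dagger = U\rho(U-V)^\dagger + (U-V)\rho V^\dagger.
\end{align}
Then apply the triangle inequality for the trace norm, followed by the sub-multiplicativity bound $|ABC|_1 \leq |A|\cdot|B|_1\cdot|C|$ (operator norm on the outer factors, trace norm in the middle). Since $U$ and $V$ are unitary, $|U| = |V^\dagger| = 1$, and $|\rho|_1 = 1$ for a density matrix, so each of the two terms is bounded by $|U-V| \leq \delta$, giving the total bound $2\delta$. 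Taking the supremum over $\rho$ finishes it.

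The only mild subtlety — not really an obstacle — is being careful that the supremum in the diamond norm definition is over inputs on an enlarged space, so one should note explicitly that the estimate above is uniform in the dimension and in $\rho$, hence survives tensoring with the identity channel; the operator norm of $U\otimes I$ equals that of $U$, and $\rho$ on the larger space is still trace-one, so nothing changes. I expect the whole argument to be three or four lines once the telescoping identity is written down; the "hard part," such as it is, is simply invoking the correct Hölder-type inequality $|ABC|_1\leq \|A\|_\infty |B|_1 \|C\|_\infty$ rather than re-deriving it.
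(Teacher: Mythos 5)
Your argument is correct and follows essentially the same route as the paper: both reduce to a telescoping identity $U\rho U^\dagger - V\rho V^\dagger = U\rho(U-V)^\dagger + (U-V)\rho V^\dagger$, then bound each term by $\delta$ using the triangle inequality together with the H\"older-type bound $|ABC|_1 \leq \|A\|\,|B|_1\,\|C\|$, and note that tensoring with the identity leaves the spectral-norm gap unchanged. Your version is if anything a little cleaner in explicitly naming the norm inequality being invoked.
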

\begin{proof}
    We have that:
    \begin{align}
        \left| \Gamma_U  - \Gamma_V \right|_\diamond &:= \sup_\rho \left| (\Gamma_U \otimes \mathcal{I})(\rho)  -  (\Gamma_V \otimes \mathcal{I})(\rho)  \right|_1\\
        &= \sup_\rho \left| (U\otimes I)\rho(U \otimes I)^\dagger  - (V\otimes I)\rho(V \otimes I)^\dagger  \right|_1
    \end{align}
where $\mathcal{I}$ was the identity channel on some subsystem of dimension larger than that of $U,V$, and $|M|_1$ is the sum of the magnitudes of the singular values of $M$.

Let $\bar U = U \otimes I$ and $\bar V = V \otimes I$. Then:
    \begin{align}
        \left| \bar U - \bar V \right| = \left| U \otimes I - V \otimes I   \right| =  \left| (U-V) \otimes I  \right| =   \left| U-V\right| \leq \delta
    \end{align}
If we let $\bar E := \bar U - \bar V$ we can proceed with the upper bound:
    \begin{align}
        \left| \Gamma_U  - \Gamma_V \right|_\diamond &= \sup_\rho \left| \bar U\rho\bar U^\dagger  -\bar V\rho \bar V^\dagger  \right|_1\\
        &= \sup_\rho \left| \bar U\rho\bar U^\dagger  -\bar V\rho \bar V^\dagger + \left( \bar U \rho V^\dagger - U\rho V^\dagger\right)  \right|_1\\
        &= \sup_\rho \left|\left(\bar U\rho\bar U^\dagger  - U\rho V^\dagger\right) - \left(\bar V\rho \bar V^\dagger -  \bar U \rho V^\dagger \right)  \right|_1\\
        &= \sup_\rho \left|\bar U \rho (\bar U - \bar V)^\dagger + (\bar U - \bar V)\rho \bar V^\dagger  \right|_1\\
        &\leq \delta + \delta 
    \end{align}

\end{proof}


Now we have all the tools required to analyze the textbook algorithm \cite{nc}. This algorithm combines several applications of controlled-$U$ with an inverse QFT to obtain the correct estimate with a decent probability. One can then improve the success probability via median amplification: if a single estimate is correct with probability $\geq \frac{1}{2} + \eta$ for some $\eta$ then the median of $\lceil \ln(\delta^{-1})/ 2\eta^2 \rceil$ estimates is incorrect with probability $\leq \delta$.

However, median amplification alone is not sufficient to accomplish phase estimation as we have defined it in Definition~\ref{def:roundingpromise}. This is because any $\lambda_i$ that is $\approx 10\% \cdot 2^{-n-1}$ close to a multiple of $1/2^{n}$, the probability of correctly obtaining $\text{floor}(2^n \lambda_j)$ is actually \emph{less} than $1/2$! This means that no matter how much median amplification is performed, this approach cannot achieve $\alpha$ below $\approx 10\%$. (For reference, a lower bound $\gamma$ on the probability is plotted in Figure~\ref{fig:alpha_eta}, although reading this figure demands some notation from the proof of the following proposition. We see that when $|\lambda_j^{(x)} - 1/2| \lesssim 10\% / 2 $ the probability of obtaining $\text{floor}(2^n \lambda_j)$ is less than $1/2$.)

Furthermore, even if the signal obtained from the QFT could obtain arbitrarily small $\alpha$, it would not achieve the desired $\alpha^{-1}$ scaling. This is because the amplification gap $\eta$ scales linearly with $\alpha$, and median amplification scales as $\eta^{-2}$ due to the Chernoff-Hoeffding theorem. Therefore, we would obtain a scaling of $\alpha^{-2}$.

Fortunately, achieving $\sim\alpha^{-1}$ is possible, using a different method for reducing $\alpha$: we perform estimation for $r$ additional bits which are then ignored. This makes use of the fact that $\alpha$ is independent of the number of estimated bits, but the gap away from the multiples of $1/2^{n}$, which is $\alpha/2^{n+1}$, is not. Every time we increase $n$, the width of each disallowed interval is chopped in half, but to compensate the number of disallowed intervals is doubled. If we simply round away some of the bits, then we do not care about the additional disallowed regions introduced. If we estimate and round $r$ additional bits, we suppress $\alpha$ by a factor of $2^{-r}$ while multiplying our runtime by a factor of $2^r$ - so the scaling is $\sim\alpha^{-1}$.

We still need median amplification, though. In order to use the above strategy we need rounding to be successful: if an eigenvalue $\lambda_j$ falls between two bins $\text{floor}(2^n\lambda_j)$ and $\text{floor}(2^n\lambda_j)+1$, then it must be guaranteed to be rounded to one of those bins with failure probability at most $\delta$. Before amplification, the probability that rounding succeeds is $\geq 8/\pi^2$, a constant gap above $1/2$ corresponding to $\alpha = 1/2$. We cannot guarantee a success probability higher than $8/\pi^2$ using the rounding trick alone, and thus need median amplification to finish the job.

Below, we split our analysis into two regimes: the $\alpha \leq 1/2$ regime and the $\alpha > 1/2$ regime. When $\alpha > 1/2$ then we do not really need the rounding trick described above, and we can achieve this accuracy with median amplification alone. Otherwise when $\alpha \leq 1/2$ we use median amplification to get to the point where rounding is likely to succeed, and then estimate $r$ additional bits such that $\alpha 2^r \geq 1/2$.


\begin{proposition}\label{prop:phaseestimation} \textbf{Standard phase estimation.} There exists an phase estimator with phases and garbage. Consider a unitary satisfying an $(n,\alpha)$-rounding promise. Let:
    \begin{align}
        \gamma(x) := \frac{sin^2(\pi x)}{\pi^2 x^2}\hspace{1.5cm} \eta_0 := \frac{8}{\pi^2} - \frac{1}{2}\hspace{1.5cm} \delta_\text{med} := \frac{\delta^2}{6.25}
    \end{align}
If $\alpha \leq 1/2$,let $  r := \left\lceil \log_2 \left(\frac{1}{2\alpha}\right)  \right\rceil$.
Then the phase estimator has query complexity:
    \begin{align}
        (2^{n+r} -1) \cdot \left\lceil \frac{\ln(\delta^{-1}_\text{med})}{2 \eta_0^2} \right\rceil
    \end{align}
    and has $(n+r)\left\lceil \frac{\ln(\delta^{-1}_\text{med})}{2\eta_0^2}\right\rceil$ qubits of garbage. 

Otherwise, if $\alpha > 1/2$, let $\eta := \gamma\left( \frac{1-\alpha}{2} \right) - \frac{1}{2}$. Then the phase estimator has query complexity:
    \begin{align}
        (2^{n} -1) \cdot \left\lceil \frac{\ln(\delta^{-1}_\text{med})}{2 \eta^2} \right\rceil
    \end{align}
    and has $n\left\lceil \frac{\ln(\delta^{-1}_\text{med})}{2\eta^2}\right\rceil$ qubits of garbage. 
\end{proposition}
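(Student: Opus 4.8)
The plan is to present textbook phase estimation as a phase-shifted instance of ordinary $m$-bit phase estimation followed by a coherent version of median amplification, and to treat the two regimes $\alpha>1/2$ and $\alpha\le 1/2$ essentially uniformly, the only difference being the number of estimated bits $m$. I would first recall the standard $m$-qubit circuit: Hadamard the $m$ ancillae, apply controlled-$U^{2^k}$ for $k=0,\dots,m-1$ at a cost of $\sum_{k=0}^{m-1}2^k=2^m-1$ controlled-$U$ queries, then the inverse QFT; on an eigenstate $\ket{\psi_j}$ it outputs $\sum_b\beta_b\ket{b}\ket{\psi_j}$ where $\beta_b$ is a Dirichlet kernel in $\nu:=2^m\lambda_j$ with $|\beta_b|^2=\sin^2(\pi(\nu-b))/\big(2^{2m}\sin^2(\pi(\nu-b)/2^m)\big)$, which by $|\sin y|\le|y|$ yields the one inequality I will use, $|\beta_b|^2\ge\gamma(\nu-b)$ whenever $|\nu-b|\le 2^{m-1}$. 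Each $\beta_b$ also carries a $\lambda_j$- and $b$-dependent phase, which is the origin of the ``with phases'' clause.

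Next, the reduction to computing $\text{floor}(2^n\lambda_j)$: running the circuit on $e^{i\phi}U$ instead of $U$ shifts every phase by $\phi/2\pi$ at the cost only of a free controlled-phase on each ancilla (queries unchanged) plus an overall $j$-dependent phase. The $(n,\alpha)$-rounding promise gives $2^n\lambda_j\in(x+\alpha,x+1)$ with $x=\text{floor}(2^n\lambda_j)$, so when $\alpha\le 1/2$ I first pass to $m=n+r$ bits with $2^r\alpha\ge 1/2$. In both cases I pick the shift to be $-(1/2+2^{r-1}\alpha)$ in $(n{+}r)$-grid units (for $\alpha>1/2$, $r=0$, this is $-(1+\alpha)/2^{n+1}$), so that the allowed window for $\nu$ is centred inside the length-$2^r$ ``block'' $\{2^rx,\dots,2^r(x+1)-1\}$ of $(n{+}r)$-bit outcomes that truncate to $x$. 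For $\alpha>1/2$ the window then has half-width $(1-\alpha)/2<1/4$, so $|\nu-x|<(1-\alpha)/2$ and $|\beta_x|^2\ge\gamma((1-\alpha)/2)=\tfrac12+\eta$. For $\alpha\le 1/2$ I would case on whether the nearest outcome $b_0=\text{round}(\nu)$ is interior to the block or an edge point: if interior, $b_0$ together with one neighbour at distance $1-|\nu-b_0|$ are both in the block, so the block-outcome probability is $\ge\gamma(|\nu-b_0|)+\gamma(1-|\nu-b_0|)\ge 2\gamma(\tfrac12)=8/\pi^2$; if $b_0$ is an edge point, then either $\nu$ lies on the block side of $b_0$ (again an in-block neighbour at distance $1-|\nu-b_0|$, giving $\ge 8/\pi^2$), or $\nu$ lies on the boundary side, in which case the centring together with $2^r\alpha\ge 1/2$ forces $|\nu-b_0|\le \tfrac12-2^{r-1}\alpha\le\tfrac14$, so $|\beta_{b_0}|^2\ge\gamma(\tfrac14)=8/\pi^2$. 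Hence one (shifted) run yields $\text{floor}(2^n\lambda_j)$, after discarding the last $r$ bits, with probability $\ge \tfrac12+\eta$ (with $\eta=\eta_0$ when $\alpha\le 1/2$).

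Then comes the coherent amplification and diamond-norm bookkeeping: run the shifted circuit $M:=\lceil\ln(\delta_\text{med}^{-1})/(2\eta^2)\rceil$ times into separate $m$-qubit registers, reversibly compute the majority of the $M$ estimates, copy its first $n$ bits into the $n$-qubit output register, and uncompute the majority workspace. Reorganising the superposition by the output value and applying Chernoff--Hoeffding to the i.i.d.\ ``this run is correct'' Bernoulli trials shows the output register holds $\text{floor}(2^n\lambda_j)$ with amplitude-squared $\ge 1-\delta_\text{med}$; the $M$ raw registers ($mM$ qubits) are the garbage, and the residual $j$-dependent phase supplies ``with phases''. On each eigenstate the implemented unitary $V$ and the ideal with-phases-and-garbage unitary $W$ therefore satisfy $\|(V-W)\ket 0\ket{\psi_j}\|=\sqrt{2-2\sqrt{1-\epsilon_j}}\le\sqrt{\epsilon_j}\,(1-\epsilon_j)^{-1/4}$ with $\epsilon_j\le\delta_\text{med}$; since vectors built from distinct $\ket{\psi_j}$ are orthogonal the same bound holds for $\|V-W\|$ on the relevant input subspace, and Lemma~\ref{lemma:spectraltodiamond} (with $|\text{Tr}_A(\rho)|_1\le|\rho|_1$ for the discarded ancillae) turns it into $\|\Gamma_V-\Gamma_W\|_\diamond\le 2\sqrt{\delta_\text{med}}\,(1-\delta_\text{med})^{-1/4}\le\delta$ for the stated $\delta_\text{med}=\delta^2/6.25$. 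Substituting $m=n+r$ (resp.\ $m=n$) gives query complexity $(2^{n+r}-1)M$ (resp.\ $(2^n-1)M$) and garbage $(n+r)M$ (resp.\ $nM$), as claimed.

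I expect the main obstacle to be the $\alpha\le 1/2$ single-shot bound: proving that estimating and truncating $r$ extra bits really recovers the constant $8/\pi^2$ needs the exact Dirichlet kernel — in particular that $\gamma(t)+\gamma(1-t)$ attains its minimum $8/\pi^2$ at $t=\tfrac12$ — rather than any crude tail bound such as $\sum_{|b-\nu|\ge k}|\beta_b|^2=O(1/k)$, which is far too weak; and it needs the interior/edge case split to be verified against the precise centring, which is where all of the geometry of the rounding promise gets consumed. Pinning down the constant $6.25$ (resisting the lossy estimate $2-2\sqrt{1-\epsilon}\le 2\epsilon$ in favour of the tight $\epsilon(1-\epsilon)^{-1/2}$) is a smaller version of the same issue.
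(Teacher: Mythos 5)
Your proof takes essentially the same approach as the paper's: shifted $(n+r)$-bit textbook phase estimation, the Dirichlet-kernel lower bound $|\beta_b|^2\geq\gamma(\cdot)$, coherent majority-vote amplification via Chernoff--Hoeffding, and Lemma~\ref{lemma:spectraltodiamond} to convert the spectral bound $\sqrt{2-2\sqrt{1-\delta_\text{med}}}$ to a diamond-norm bound with $\delta_\text{med}=\delta^2/6.25$. One substantive remark: you use the phase shift $-(1+\alpha)/2^{n+1}$ (more generally $-(1/2+2^{r-1}\alpha)$ in $(n+r)$-grid units), which correctly centers the allowed window of $2^n\lambda_j$ about $\text{floor}(2^n\lambda_j)$, whereas the paper's proof writes $-(1-\alpha)/2^{n+1}$; with the paper's printed shift $\lambda^{(x)}_j$ ranges over $((3\alpha-1)/2,(1+\alpha)/2)$ rather than $(-(1-\alpha)/2,(1-\alpha)/2)$, under which the stated bound $|\beta(\lambda^{(x)}_j)|^2\geq\gamma((1-\alpha)/2)$ fails (take $\alpha=0.6$: $\lambda^{(x)}_j$ can reach $0.8$ where $\gamma\approx 0.05$), and the Figure~\ref{fig:alpha_eta} caption ($\lambda^{(x)}_j\not\in[(1-\alpha)/2,(1+\alpha)/2]$) is consistent with your sign, not the paper's. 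You have, in effect, corrected a typo, and your interior/edge case split for the two-bin bound $\geq 8/\pi^2$ is more explicit than the paper's appeal to the figure, but it is the same argument.
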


\begin{proof} We begin with the $\alpha > 1/2$ case which is simpler, and then extend to the $\alpha \leq 1/2$ case. The following is a review of phase estimation, which has been modified to estimate $\text{floor}(2^n \lambda_j)$ rather than $\text{round}\left( 2^n \lambda_j \right)$:

 \begin{enumerate}
        \item Prepare a uniform superposition over times: $\ket{+^{n}}\ket{\psi_j} =\frac{1}{\sqrt{2^n}} \sum_{t=0}^{2^n-1} \ket{t}\ket{\psi_j}$.
        \item Apply $U$ to the $\ket{\psi_j}$ register $t$ times, along with an additional phase shift of $ \frac{-2\pi t(1-\alpha)}{2^{n+1}}$ to account for flooring:
            \begin{align}
                \to  \frac{1}{\sqrt{2^n}}   \sum_{t=0}^{2^n-1}  \ket{t} \otimes U^t e^{-\frac{2\pi i t(1-\alpha)}{2^{n+1}}} \ket{\psi_j} =   \frac{1}{\sqrt{2^n}}  \sum_{t=0}^{2^n-1} e^{2\pi i t  \left(\lambda_j - \frac{1-\alpha}{2^{n+1}} \right) } \ket{t}\ket{\psi_j} 
            \end{align}
        \item Apply an inverse quantum Fourier transform to the $\ket{t}$ register:
            \begin{align}
                &\to \frac{1}{2^n}\sum_{t=0}^{2^n-1} \sum_{x=0}^{2^n-1} e^{2\pi i t \left(\lambda_j - \frac{1-\alpha}{2^{n+1}} \right)} e^{- \frac{2\pi i}{2^n} tx} \ket{x}\ket{\psi_j}  \\ 
                &=  \sum_{x=0}^{2^n-1} \left[ \frac{1}{2^n}\sum_{t=0}^{2^n-1}e^{  2i \pi t \left(\lambda_j - \frac{x}{2^n} - \frac{1-\alpha}{2^{n+1}} \right)}\right] \ket{x}\ket{\psi_j} \\
                &=  \sum_{x=0}^{2^n-1} \beta\left(\lambda^{(x)}_j \right) \ket{x}\ket{\psi_j} 
            \end{align}
            where in the final line we have defined:
            \begin{align}
                \lambda^{(x)}_j &:= 2^n \lambda_j  - x  - \frac{1-\alpha}{2} \\
                \beta(\lambda_j^{(x)}) &:= \frac{1}{2^n}\sum_{t=0}^{2^n-1}e^{  2 \pi i t \lambda^{(x)}_j/2^n }.
            \end{align}
        \item Let $\eta$ and $\delta_\text{med}$ be as in the theorem statement, and let:
    \begin{align}
        M &:= \left\lceil  \frac{\log(\delta^{-1}_\text{med})}{2\eta^2} \right\rceil
    \end{align}
         Repeat the above process for a total of $M$ times. If we sum $\vec x$ over $\{0,...,2^{n}-1\}^M$, then we obtain:
            \begin{align}
                \to  \sum_{\vec x} \bigotimes_{l=1}^{M}  \beta\left( \lambda^{(x_l)}_j \right) \ket{k_l} \otimes \ket{\psi_j}
            \end{align}
    \item Run a sorting network, e.g. \cite{1207.2307}, on the $M$ estimates, and copy the median into the output register. 
    \end{enumerate}

The query complexity analysis is straightforward: each estimate requires $2^n - 1$ applications of controlled-$U$, and there are $M$ estimates. So all we must do is demonstrate accuracy. It is clear that the process above implements a map of the form:
    \begin{align}
        \ket{0^n}  \ket{0...0} \ket{\psi_j} \to \sum_{x=0}^{2^n-1}  e^{i\varphi_{j,x}} \sqrt{p_{j,x}} \ket{x} \ket{\text{gar}_{j,k}} \ket{\psi_j} \label{eqn:phaseestmap}
    \end{align}
    for some probabilities $p_{j,x}$ and phases $\varphi_{j,x}$, since this is just a general description of a unitary map that leaves the $\ket{\psi_j}$ register intact. This way of writing the map lets us bound the error in diamond norm to the ideal map
    \begin{align}
        \ket{0^n}  \ket{0...0} \ket{\psi_j} \to e^{i\varphi_{j}}  \ket{\text{floor}(\lambda_j 2^n)} \ket{\text{gar}_{j}} \ket{\psi_j}\label{eqn:phaseestmapideal}
    \end{align}
    (selecting $\varphi_j := \varphi_{j,\text{floor}(\lambda_j 2^n)} $ and $\ket{\text{gar}_{j}}:= \ket{\text{gar}_{j,\text{floor}(\lambda_j 2^n)}} $)
    without uncomputing while still employing the standard analysis of phase estimation which just reasons about the probabilities $p_{j,x}$. These can be bounded from a median amplification analysis of the probabilities $\left| \beta\left(\lambda^{(x)}_j \right)\right|^2$. Consider a vector of $M$ estimates $\vec x$. Then:
    \begin{align}
        p_{j,x} := \sum_{\substack{ \vec x \text{ where} \\ \text{median}(\vec x ) =x }}   \prod_{l=1}^{M}  \left|\beta(  \lambda^{(x_l)}_j )\right|^2 
    \end{align}

    First we show that probabilities $\left| \beta\left(\lambda^{(x)}_j   \right)\right|^2$ associated with the individual estimates are bounded away from $\frac{1}{2}$. Using some identities we can rewrite:
    \begin{align}
        |\beta(\lambda^{(x)}_j)|^2 &= \left|\frac{1}{2^n}\sum_{t=0}^{2^n-1}e^{  2i \pi t \lambda^{(x)}_j /2^n }\right|^2 \\
        &= \left|\frac{1}{2^n} \frac{  e^{  2i \pi \lambda^{(x)}_j  } - 1  }{ e^{  2i \pi \lambda^{(x)}_j / 2^n } - 1   }  \right|^2 \\
        &= \frac{1}{4^n} \frac{ \sin^2( 2\pi \lambda^{(x)}_j ) + (\cos( 2\pi \lambda^{(x)}_j  )-1)^2  }{ \sin^2( 2\pi \lambda^{(x)}_j / 2^n  ) + (\cos( 2\pi \lambda^{(x)}_j / 2^n )-1)^2     }  \\
        &= \frac{1}{4^n} \frac{ 2 -  2\cos( 2\pi \lambda^{(x)}_j  )  }{ 2 -  2\cos( 2\pi \lambda^{(x)}_j / 2^n )  }  \\
        &= \frac{1}{4^n} \frac{ \sin^2( \pi \lambda^{(x)}_j  )  }{ \sin^2( \pi \lambda^{(x)}_j / 2^n )  }  
    \end{align}
    Using the small angle approximation $\sin(\theta) \leq \theta$ for $0 \leq \theta \leq \pi$, we can derive:
    \begin{align}
        |\beta( \lambda^{(x)}_j  )|^2 = \frac{1}{4^n} \frac{ \sin^2( \pi  \lambda^{(x)}_j )  }{ \sin^2( \pi  \lambda^{(x)}_j /2^{n})  }   \geq \frac{1}{4^n} \frac{\sin^2(\pi  \lambda^{(x)}_j )}{ \pi^2 (\lambda^{(x)}_j)^2  / 4^n} = \frac{\sin^2(\pi \lambda^{(x)}_j)}{\pi^2  (\lambda^{(x)}_j)^2 } =: \gamma(\lambda^{(x)}_j )
    \end{align}

\begin{figure}[h]
    \centering
    \includegraphics[width=\textwidth]{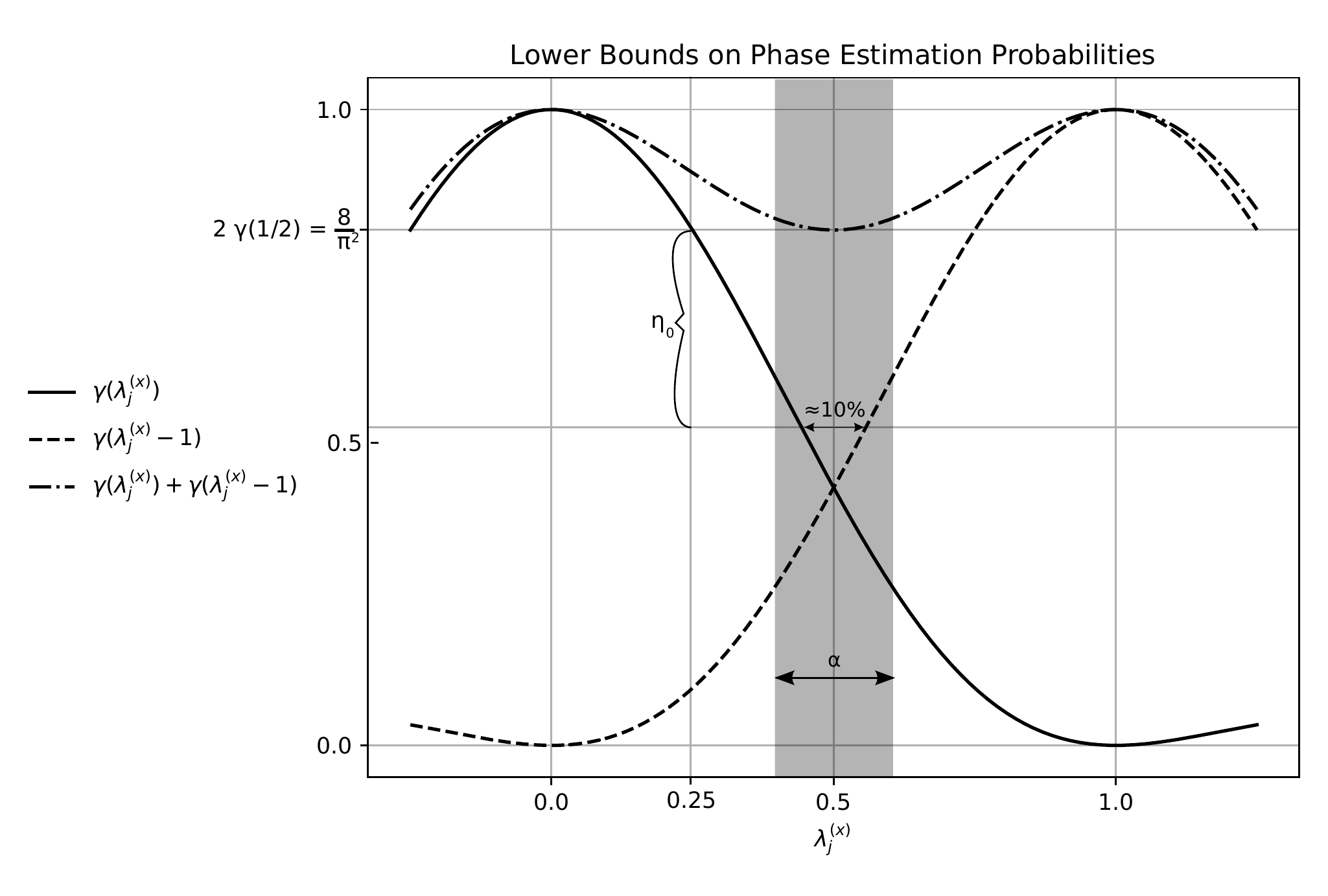}
    \caption{ \label{fig:alpha_eta} Sketch of $\gamma(\lambda^{(x)}_j)$, a lower bound on the magnitude of the amplitude of phase estimation $|\beta(\lambda^{(x)}_j)|^2$. Recall that the rounding promise guarantees that $\lambda_j \not\in \left[\frac{x}{2^n}, \frac{x}{2^n}  + \frac{\alpha}{2^n} \right]$ for all $x$, implying that $\lambda^{(x)}_j \not\in [ \frac{1-\alpha}{2}  ,\frac{1+\alpha}{2}]$ as indicated in the shaded region. From this sketch we draw two conclusions necessary for our proof. First, when $\alpha > 1/2$ then $\gamma(\lambda^{(x)}_j)$ is guaranteed to be strictly greater than 1/2, so $\eta > 0$ (in fact, $\eta > \eta_0$). Second, even when no rounding promise applies, the sum of the probabilities of two adjacent bins is at least $8/\pi^2$, which we use to define the amplification threshold $\eta_0$. Also visible in this figure is the impossibility of reducing $\alpha$ further than $\approx 10\%$ without rounding: several values of $\lambda^{(x)}_j$ near $1/2$ have probabilities less than $1/2$ and cannot be amplified. }
\end{figure}

    $\gamma(\lambda^{(x)}_j )$ is a very tight lower bound to $ |\beta(\lambda^{(x)}_j)|^2$, and is plotted in Figure~\ref{fig:alpha_eta}. We would like this probability to be larger than $\frac{1}{2} + \eta$ for some $\eta$ when $\lambda_j$ satisfies a rounding promise. This is guaranteed if we select:
    \begin{align} 
        \eta := \gamma\left(\frac{1-\alpha}{2}\right) - \frac{1}{2} 
    \end{align}
as in the theorem statement. From the figure, we see that when $\alpha > 1/2$ then $\eta > \eta_0$. This fact actually is not necessary for this construction to work, but observing this will be useful for the $\alpha \leq 1/2$ case. However, it is necessary to observe that when $\alpha > 1/2$ we are guaranteed that $\eta$ is positive.

Then, from the rounding promise and the definition of $\lambda^{(x)}_j$, we are guaranteed that when $x = \text{floor}(2^n \lambda_j)$ then $|\beta(\lambda^{(x)}_j)|^2 \geq 1/2 + \eta$. Now we perform a standard median amplification analysis. The probability of a particular estimate $x$ being `correct' is $\geq 1/2 + \eta $, so the probability of being incorrect is $\leq 1/2 - \eta$. The only way that a median of $M$ estimates can be incorrect is if more than half of the estimates are incorrect. Let $X$ be the random variable counting the number of incorrect estimates. Then we can invoke the Chernoff-Hoeffding theorem:
    \begin{align}
        \text{Pr}[\text{median incorrect}] &\leq \text{Pr}\left[  X \geq M/2 \right]  \\  &\leq \text{Pr}\left[  X \geq M/2 + \mathbb{E}(X) - (1/2 - \eta)M \right] \\ &\leq  \exp\left( - 2M\eta^2 \right)
    \end{align}
    We can bound the above by $\delta_\text{med}$, a constant we will fix later, if we select $M := \left\lceil  \frac{\log(\delta^{-1}_\text{med})}{2\eta^2} \right\rceil$ as we did above.

We have demonstrated that $p_\text{j,x} \geq 1- \delta_\text{med}$ whenever $x = \text{floor}(2^n \lambda_j)$. Now all that remains is a bound on the error in diamond norm. We begin by bounding the spectral norm. Let $p_j := p_{j,\text{floor}(2^n \lambda_j)}$ and observe that $1 \geq p_j \geq 1-\delta_\text{med}$. Taking the difference between equations (\ref{eqn:phaseestmap}) and (\ref{eqn:phaseestmapideal}) we obtain:
\begin{align}
    \label{eqn:phase_calc}  & \left| (\sqrt{p_j} - 1) e^{i\varphi_j}  \ket{\text{floor}(\lambda_j 2^n)} \ket{\text{gar}_{j}} \ket{\psi_j} + \sum_{x,x\neq\text{floor}(\lambda_j 2^n)}e^{i\varphi_{j,x}} \sqrt{p_{j,x}} \ket{x} \ket{\text{gar}_{j,x}} \ket{\psi_j} \right|\\
    = & \sqrt{ |  (\sqrt{p_j} - 1) e^{i\varphi_j}  |^2  + \sum_{x,x\neq\text{floor}(\lambda_j 2^n)} \left| e^{i\varphi_{j,x}}\sqrt{p_{j,x}} \right|^2}\\
    = & \sqrt{  (\sqrt{p_j} - 1)^2  + \sum_{x,x\neq\text{floor}(\lambda_j 2^n)}  p_{j_k} }\\
    \leq & \sqrt{  (  1 - \sqrt{1 - \delta_\text{med}}   )^2  +  \delta_\text{med}  }\\
    \leq & \sqrt{  2  - 2\sqrt{1-\delta_\text{med}} - \delta_\text{med} + \delta_\text{med} }\\
     \label{eqn:phase_calc2}   \leq & \sqrt{  2 - 2\sqrt{1-\delta_\text{med}}  }
\end{align}
Now we apply Lemma~\ref{lemma:spectraltodiamond} to bound the diamond norm, and since the error is $\sim \sqrt{\delta_\text{med}}$ to leading order, we construct a bound that holds whenever $\sqrt{\delta_\text{med}} \leq 1$: 
\begin{align}
    2\sqrt{  2 - 2\sqrt{1-\delta_\text{med}}  } \leq 2.5 \sqrt{\delta_\text{med}}
\end{align}
If we select $\delta_\text{med} := \delta^2/6.25$ then the diamond norm is bounded by $\delta$. 

Having completed the $\alpha > 1/2$ case, we proceed to the $\alpha \leq 1/2$ case. As stated above, the construction we just gave actually also works when $\alpha \leq 1/2$. However, the asymptotic dependence on $\alpha$ is like $\sim\alpha^{-2}$ which is not optimal.

Looking again at Figure~\ref{fig:alpha_eta}, we see that the sum of the probability of two adjacent bins is at least $8/\pi^2$, even when $\lambda^{(x)}_j$ is in a region disallowed by the rounding promise. Therefore, if we perform median amplification with gap parameter $\eta_0$ we are guaranteed that values of $\lambda^{(x)}_j$ that fall into a rounding gap will at least be rounded to an adjacent bin.

We can use this fact to achieve an $\sim \alpha^{-1}$ dependence. Recall that $\alpha$ is the fraction of the range $[0,1)$ where $\lambda_j$ are not allowed to appear due to the rounding promise, independent of $n$. If we perform phase estimation with $n+1$ rather than $n$, then the width of each disallowed region is cut in half from $\alpha 2^{-n}$ to $\alpha 2^{-n-1}$ - but we also double the number of disallowed regions. However, if we simply ignore the final bit, then, because we are amplifying to $\eta_0$, any values of $\lambda_j$ that fall into one of the newly introduced regions will simply be rounded. Thus, we cut the gaps in half without introducing any new gaps, so $\alpha$ is cut in half. We will make this argument more formal in a moment.
 
 So, in order to achieve a particular $\alpha$, we observe from Figure~\ref{fig:alpha_eta} that if we amplify to $\eta_0$ then the gaps have width $\frac{1}{2} \cdot 2^{-n}$. If we estimate an additional $r$ bits, then the gaps have width $\frac{1}{2} \cdot 2^{-n-r}$. Solving $\frac{1}{2} \cdot 2^{-n-r} \leq \alpha 2^{-n}$ for $r$, we obtain
 \begin{align}
r := \left\lceil \log_2 \left(\frac{1}{2\alpha}\right)  \right\rceil.
    \end{align}

We briefly make the $n$ explicit in $\lambda^{(x)}_j$ by writing $\lambda^{(n,x)}_j$. After running the protocol at the beginning of the proof with $n+r$ bits, we obtain, for $\vec x \in \{0,...,2^{n}-1\}^M$, the state:
    \begin{align}
        &\sum_{\vec x} \ket{\text{median}(\vec x)}  \otimes  \bigotimes_{l=1}^{M} \sum_{t=0}^{2^r - 1} \beta(  \lambda_j^{(n+r,\hspace{1mm} 2^{r} x_l + t )} ) \ket{2^{r}k_l + t} \otimes \ket{\psi_j}\\
        = & \sum_{x=0}^{2^n-1} \ket{x} \otimes \sum_{\substack{ \vec x \text{ where} \\ \text{median}(\vec x ) =kx}}   \bigotimes_{l=1}^{M} \sum_{t=0}^{2^r - 1} \beta(  \lambda_j^{(n+r,\hspace{1mm}  2^{r} x_l + t )} ) \ket{2^{r}x_l + t} \otimes \ket{\psi_j}\\
        = & \sum_{x=0}^{2^n-1} \sqrt{p_{j,x}} e^{i\varphi_{j,x}}\ket{x} \otimes \ket{\text{gar}_{j,x}} \otimes \ket{\psi_j}
    \end{align}
    where in the last step we have defined the normalized state $\ket{\text{gar}_x}$, the probability $p_{j,x}$, and the phase $\varphi_{j,x}$ via:
    \begin{align}
        \sqrt{p_{j,x}} e^{i\varphi_{j,x}} \ket{\text{gar}_{j,x}} :=  \sum_{\substack{ \vec x \text{ where} \\ \text{median}(\vec x ) =x }}   \bigotimes_{l=1}^{M} \sum_{t=0}^{2^r - 1} \beta(  \lambda_j^{(n+r,\hspace{1mm}  2^{r} x_l + t )} ) \ket{2^{r}x_l + t} 
    \end{align}
Now if we can demonstrate that $p_{j,x} \geq 1- \delta_\text{med}$ when $x = \text{floor}(2^n\lambda_j)$ then the same argument as in (\ref{eqn:phase_calc}-\ref{eqn:phase_calc2}) holds and we are done. Observe that:
 \begin{align}
     p_{j,x} := \sum_{\substack{ \vec x \text{ where} \\ \text{median}(\vec x ) =x }}   \prod_{l=1}^{M} \sum_{t=0}^{2^r-1}  \left|\beta(  \lambda_j^{(n+r,\hspace{1mm}  2^r x_l + t  )} )\right|^2 
    \end{align}

Say $x = \text{floor}(2^n \lambda_j)$. Then, consider $t := \text{floor}(2^{n+r} \lambda_j) - x 2^{r} $, which is an integer $\in \{0,...,2^{r}-1\}$. Then, looking at  Figure~\ref{fig:alpha_eta}, we see that:
\begin{align}
    |\beta(\lambda_j^{(n+r,\hspace{1mm} 2^r x + t)})|^2  + |\beta(\lambda_j^{(n+r,\hspace{1mm} 2^r x + t + 1)})|^2  \geq \gamma(\lambda_j^{(n+r,\hspace{1mm} 2^r x + t)}) + \gamma(\lambda_j^{(n+r,\hspace{1mm} 2^r x + t)} - 1) \geq \frac{1}{2} + \eta_0
\end{align}

Therefore the probability that the first $n$ bits are $x$ is $\sum_{t=0}^{2^r-1}  \left|\beta(  \lambda_j^{(n+r,\hspace{1mm}  2^r x + t  )} )\right|^2 \geq 1/2 + \eta_0$, which is all that was required to perform the median amplification argument above. So we can conclude that $p_{j,k} \geq 1- \delta_\text{med}$, so the modified algorithm retains the same accuracy.
\end{proof}

\section{Coherent Iterative Estimation}  \label{sec:coherent}

In this section we present the novel algorithms for phase estimation and energy estimation. As described in the introduction, both of these feature a strong similarity to `iterative phase estimation' \cite{9511026}, where the bits of the estimate are obtained one at a time. Unlike iterative phase estimation however, the state is never measured and the entire process is coherent. We therefore name these algorithms as `coherent iterative estimators'.

Another similarity that these new algorithms share with the original iterative phase estimation is that the less significant bits are taken into account when obtaining the current bit. This greatly reduces the amount of amplification required for the later bits, so the runtime is vastly dominated by the estimation of the least significant bit. We will go into more detail on this later.

To permit discussion of coherent iterative estimation of phases and energies in a unified manner, we fit this idea into the modular framework of Definition~\ref{def:estimator} and Lemma~\ref{lemma:garbageremove}. A `coherent iterative estimator' obtains a single bit of the estimate, given access to all the previous bits. Several invocations of a coherent iterative estimator yield a regular estimator as in Definition~\ref{def:estimator}. Furthermore, we can choose to uncompute the garbage at the very end using Lemma~\ref{lemma:garbageremove}, or, as we will show, we can remove the garbage early, which prevents it from piling up.

\begin{definition} \label{def:iterativeangle} A \textbf{coherent iterative phase estimator} is a protocol that, given some $n$, $\alpha$, any $k \in \{0,...,n-1\}$, and any error target $\delta > 0$, produces a quantum circuit involving calls to controlled-$U$ and $U^\dagger$. If the unitary $U$ satisfies an $(n,\alpha)$-rounding promise, then this circuit implements a quantum channel that is $\delta$-close to some map that performs:
    \begin{align}
        \ket{0} \ket{\Delta_k} \ket{\psi_j}  \to \ket{\text{bit}_k\left( \lambda_j \right)}\ket{\Delta_k}\ket{\psi_j}
    \end{align}
    Here $\text{bit}_k(\lambda_j)$ is the $(k+1)$'th least significant bit of an $n$-bit binary expansion, and $\ket{\Delta_k}$ is a $k$-qubit register encoding the $k$ least significant bits. (For example, if $n=4$ and $\lambda_j = 0.1011...$ then $\text{bit}_2(\lambda_j) = 0$ and $\Delta_2 = 11$.) Note that the target map is only constrained on a subspace of the input Hilbert space, and can be anything else on the rest.

    An \textbf{coherent iterative energy estimator} is the same thing, just with controlled-$U_H$ and $U_H^\dagger$ queries to some block-encoding $U_H$ of a Hamiltonian $H$ that satisfies an $(n,\alpha)$-rounding promise. 

    A coherent iterative estimator can also be with garbage and/or with phases, just like in Definition~\ref{def:estimator}.
\end{definition}


\begin{lemma} \label{lemma:stitchingiterative} \textbf{Stitching together coherent iterative estimators.} Given a coherent iterative phase/energy estimator with query complexity $Q(n,k',\alpha,\delta')$, we can construct a non-iterative phase/energy estimator with query complexity:
    \begin{align}
        \sum_{k=0}^{n-1} Q\left(n,k,\alpha, \delta\cdot 2^{-k-1} \right) 
    \end{align}
The non-iterative estimator has phases if and only if the coherent iterative estimator has phases, and if the coherent iterative estimator has $m$ qubits of garbage then the iterative estimator has $nm$ qubits of garbage.
\end{lemma}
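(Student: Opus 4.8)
The plan is to construct the non-iterative estimator by composing $n$ invocations of the coherent iterative estimator, one for each bit $k = 0, 1, \dots, n-1$, in order from least to most significant. The key structural observation is that Definition~\ref{def:iterativeangle} gives us, for each $k$, a channel that maps $\ket{0}\ket{\Delta_k}\ket{\psi_j} \to \ket{\text{bit}_k(\lambda_j)}\ket{\Delta_k}\ket{\psi_j}$ (up to error $\delta'$), and that the register $\ket{\Delta_k}$ holding the $k$ previously-computed bits is exactly what the $k$-th invocation produces from invocations $0,\dots,k-1$. So I would allocate an $n$-qubit answer register initialized to $\ket{0^n}$, and apply the $k$-th iterative estimator with its fresh bit written into qubit $k$ of this register, using qubits $0,\dots,k-1$ as the control register $\ket{\Delta_k}$. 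After all $n$ steps the answer register holds $\ket{\text{floor}(\lambda_j 2^n)}$ as desired. Query complexity is additive: step $k$ contributes $Q(n,k,\alpha,\delta_k)$ for whatever error target $\delta_k$ we assign it.

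The main work is the error budget. I would run step $k$ with error target $\delta_k := \delta \cdot 2^{-k-1}$, so that $\sum_{k=0}^{n-1}\delta_k = \delta(1 - 2^{-n}) < \delta$. To justify that these errors simply add, I would use subadditivity of the diamond norm under composition of channels: if $\Lambda_k$ is the actual channel for step $k$ and $\Lambda_k^{\text{ideal}}$ the target, then $\|\Lambda_k - \Lambda_k^{\text{ideal}}\|_\diamond \le \delta_k$, and $\|\Lambda_{n-1}\cdots\Lambda_0 - \Lambda_{n-1}^{\text{ideal}}\cdots\Lambda_0^{\text{ideal}}\|_\diamond \le \sum_k \delta_k$ by a telescoping argument (insert hybrids, use that diamond norm is submultiplicative/contractive under composition and invariant under tensoring with an identity channel). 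Then I would check that the composition of the ideal maps $\Lambda_{n-1}^{\text{ideal}}\cdots\Lambda_0^{\text{ideal}}$, restricted to inputs of the form $\ket{0^n}\ket{\psi_j}$, really does perform $\ket{0^n}\ket{\psi_j}\to\ket{\text{floor}(\lambda_j 2^n)}\ket{\psi_j}$: this is an induction on $k$ showing that after step $k$ the first $k+1$ qubits hold $\ket{\Delta_{k+1}}$, the bits $\text{bit}_0(\lambda_j)\cdots\text{bit}_k(\lambda_j)$ — using the fact from Definition~\ref{def:iterativeangle} that the ideal map is constrained precisely on the subspace where the control register is a valid $\ket{\Delta_k}$, which is exactly the subspace we feed it.

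The phases and garbage bookkeeping is then routine. If each iterative estimator has phases, step $k$ contributes a factor $e^{i\varphi_{j,k}}$; these multiply to a single $j$-dependent phase $e^{i\sum_k \varphi_{j,k}}$, so the composite has phases. If each has $m$ qubits of garbage, I would give each of the $n$ invocations its own fresh $m$-qubit garbage register (ancillae allocated and not discarded), and since the target map of step $k$ may be arbitrary off the constrained subspace — in particular it may leave earlier garbage registers untouched — the composite produces $nm$ qubits of garbage in the form $\ket{\text{garbage}_j} = \bigotimes_{k} \ket{\text{garbage}_{j,k}}$, matching Definition~\ref{def:estimator}.

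The step I expect to be the main obstacle is making the induction in the previous paragraph fully rigorous in the \emph{approximate} setting, since Definition~\ref{def:iterativeangle} only pins down each channel's behavior on a subspace and "can be anything else on the rest." One must be careful that the error $\delta_k$ is measured against an ideal map that is itself only partially specified, and argue that the telescoping bound only ever evaluates these channels on (approximately) the right subspace — or, more cleanly, fix \emph{some} concrete choice of the unconstrained part, prove the bound for that choice, and note the final diamond-norm statement is independent of the choice. This is conceptually straightforward but is the place where a sloppy argument would break.
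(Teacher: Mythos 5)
Your proposal matches the paper's proof essentially step for step: compose the $n$ iterative estimators in order of increasing $k$, budget error $\delta 2^{-k-1}$ to step $k$ so the diamond-norm errors telescope to at most $\delta$, observe that the exact maps chain because $\ket{\Delta_k}$ is exactly the tensor product of the previously computed bits, and note that garbage and phases accumulate as claimed. The caveat you raise about the ideal maps being only partially specified is a fair observation about rigor, but it does not change the argument, and the paper handles it implicitly in the same way you suggest.
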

\begin{proof} We will combine $n$ many coherent iterative phase/energy estimators, for $k = 0,1,2,..,n-1$. The diamond norm satisfies a triangle inequality so if we let the $k$'th iterative estimator have an error $\delta_k := \delta 2^{-k-1}$ then the overall error will be:
    \begin{align}
        \sum_{k=0}^{n-1} \delta_k \leq \frac{\delta}{2}\sum_{k=0}^{n-1} 2^{-k} = \frac{\delta}{2} (2 - 2^{1-n}) \leq \delta
    \end{align}
So now all that is left is to observe that the exact iterative estimators chain together correctly. This should be clear by observing that for all $k > 0$:
    \begin{align}
        \ket{\Delta_k} =  \ket{ \text{bit}_{k-1} (\lambda_j) }\otimes ... \otimes \ket{ \text{bit}_{0} (\lambda_j)}
    \end{align}
and $\ket{\Delta_0} = 1 \in \mathbb{C}$ since when $k=0$ there are no less significant bits. So the $k$'th iterative estimator takes the $k$ least significant bits as input and computes one more bit, until finally at $k = n-1$ we have:
    \begin{align}
        \ket{ \text{bit}_{n-1} (\lambda_j) } \ket{\Delta_{n-1}} =  \ket{ \text{bit}_{n-1} (\lambda_j) }\otimes ... \otimes \ket{ \text{bit}_{0} (\lambda_j)} = \ket{\text{floor}(2^n \lambda_j)}
    \end{align}
The total query complexity is just the sum of the $n$ invocations of the iterative estimators from $k=0,...,n-1$ with error $\delta_k$.

    If the iterative estimator has garbage, then the garbage from each of the $n$ invocations just piles up. Similarly, if the estimator is with phases, and has an $e^{i\varphi_{j,k}}$ for the $k$'th invocation, then the composition of the maps will have a phase $\prod_{k=0}^{n-1}e^{i\varphi_{j,k}}$.
\end{proof}


\begin{lemma} \label{lemma:iterativegarbageremove} \textbf{Removing garbage and phases from iterative estimators.} Given a coherent iterative phase/energy estimator with phases and/or garbage and with query complexity $Q(n,k,\alpha,\delta')$ that has a unitary implementation, we can construct a coherent iterative phase/energy estimator without phases and without garbage with query complexity $2Q(n,k,\alpha,\delta/2)$. \end{lemma}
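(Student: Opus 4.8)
The plan is to mimic the proof of Lemma~\ref{lemma:garbageremove} essentially verbatim, with the observation that the uncompute-and-discard trick is completely local to the "answer + garbage" registers and does not interact with the $\ket{\Delta_k}$ control register in any problematic way. So the key steps are: (1) assume we are given a unitary-implementable channel $\Lambda$ that is $(\delta/2)$-close in diamond norm to the idealized iterative map $\ket{0}\ket{\Delta_k}\ket{\psi_j} \to e^{i\varphi_{j,k}}\ket{\text{bit}_k(\lambda_j)}\ket{\text{garbage}_{j,k}}\ket{\Delta_k}\ket{\psi_j}$; (2) append a fresh one-qubit output register, run $\Lambda$, CNOT the computed bit into the output register, then run $\Lambda^{-1}$, and discard the (now approximately clean) ancilla and garbage registers; (3) check that in the exact case this implements $\ket{0}\ket{\Delta_k}\ket{\psi_j}\to\ket{\text{bit}_k(\lambda_j)}\ket{\Delta_k}\ket{\psi_j}$ with no phase and no garbage; (4) lift the exact statement to the approximate one using the same diamond-norm bookkeeping as in Lemma~\ref{lemma:garbageremove}, in particular the contraction inequality $|\mathrm{Tr}_A(\rho)|_1 \le |\rho|_1$ under partial trace and the fact that the diamond norm is preserved under composition with fixed channels (appending CNOT, running $\Lambda^{-1}$, discarding). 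The query complexity doubles because $\Lambda$ is invoked once forward and once in reverse, and the error budget is split as $\delta/2$ each so the triangle inequality closes to $\delta$.

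The one point that needs a little care, and which I expect to be the mildest of obstacles, is the phrase in Definition~\ref{def:iterativeangle} that "the target map is only constrained on a subspace of the input Hilbert space, and can be anything else on the rest." Concretely, the honest statement of what $\Lambda$ does is a map on the full input space whose behaviour is pinned down only on the subspace spanned by $\ket{0}\ket{\Delta_k}\ket{\psi_j}$ (with $\ket{\Delta_k}$ being the genuine low-order bits of $\lambda_j$). When we run $\Lambda^{-1}$ to uncompute, the intermediate state after the CNOT lies close to $e^{i\varphi_{j,k}}\ket{\text{bit}_k(\lambda_j)}\ket{\text{bit}_k(\lambda_j)}\ket{\text{garbage}_{j,k}}\ket{\Delta_k}\ket{\psi_j}$, and we need $\Lambda^{-1}$ acting on the register triple $(\text{answer},\text{garbage},\ket{\Delta_k}\ket{\psi_j})$ to send this back towards $\ket{0}\ket{0\cdots 0}\ket{\Delta_k}\ket{\psi_j}$. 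This is exactly $\Lambda^{-1}$ applied to a state in (the image of) the constrained subspace, so the constraint is precisely what we need; one just has to note that the extra copy of $\ket{\text{bit}_k(\lambda_j)}$ sitting in the output register is a spectator that $\Lambda$ and $\Lambda^{-1}$ never touch. I would phrase this by defining, exactly as before, $\ket{\text{ideal}_j} := \ket{\text{bit}_k(\lambda_j)}\ket{0}\ket{0\cdots 0}\ket{\Delta_k}\ket{\psi_j}$ and the ideal channel $\Gamma_{\text{ideal}}(\sigma) := \sum_{i,j}\bra{\psi_i}\sigma\ket{\psi_j}\,\ket{\text{ideal}_i}\bra{\text{ideal}_j}$ (with the understanding that the input is restricted to the relevant subspace), and then copy the three-line diamond-norm argument of Lemma~\ref{lemma:garbageremove} word for word.

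Since the entire content is a transcription of the already-proved Lemma~\ref{lemma:garbageremove}, I would keep the proof short: state the circuit (the same diagram as in equation~(\ref{eqn:uncomputechannel}) but with $\ket{0}$ in place of $\ket{0^n}$ for the output register and with $\ket{\Delta_k}\ket{\psi_j}$ as the input of the bottom wires), point out that the exact map composes to the desired garbage-free, phase-free iterative map on the constrained subspace, invoke the partial-trace contraction $|\mathrm{Tr}_A(\rho)|_1 \le |\rho|_1$ together with monotonicity of the diamond norm under discarding to get the $\delta$ bound from the $\delta/2$ input error, and tally the query complexity as $2Q(n,k,\alpha,\delta/2)$. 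No new estimates are required beyond those already established, so there is no genuinely hard step — the only thing to be vigilant about is not over-claiming anything about $\Lambda$'s behaviour off the constrained subspace, which the construction never uses.
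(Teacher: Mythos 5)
Your proposal is correct and matches the paper's proof, which is exactly the uncompute-and-discard circuit of Lemma~\ref{lemma:garbageremove} with the $\ket{\Delta_k}$ register passed through untouched; the paper simply states ``this argument proceeds exactly the same as Lemma~\ref{lemma:garbageremove}'' and exhibits the circuit. Your extra observation about $\Lambda^{-1}$ only ever being applied within (the image of) the constrained subspace is a valid and careful point that the paper leaves implicit.
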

    \begin{proof} This argument proceeds exactly the same as Lemma~\ref{lemma:garbageremove}, just with the extra $\ket{\Delta_k}$ register trailing along. If $\Lambda$ is the channel that implements the iterative estimator with garbage and/or phases, then we obtain an estimator without garbage and phases via:
\begin{align}
       \hspace{5mm} \begin{array}{c}\Qcircuit @C=1em @R=1em {
                    \lstick{\ket{0}}   & \qw & \targ & \qw  & \qw\\
                \lstick{\ket{0}} & \multigate{3}{\Lambda} & \ctrl{-1} & \multigate{3}{\Lambda^{-1}} &  \measuretab{\text{discard}}\\
           \lstick{\ket{0...0}}     & \ghost{\Lambda} & \qw & \ghost{\Lambda^{-1}} & \measuretab{\text{discard}}\\
           \lstick{\ket{\Delta_k}}    & \ghost{\Lambda} & \qw & \ghost{\Lambda^{-1}} & \qw \\
           \lstick{\ket{\phi_j}}    & \ghost{\Lambda} & \qw & \ghost{\Lambda^{-1}} & \qw 
       }\end{array} 
\end{align}
\end{proof}


The advantage of the modular framework we just presented is that maximizes the amount of flexibility when implementing these algorithms. How exactly uncomputation is performed will vary from application to application, and depending on the situation uncomputation may be performed before invoking Lemma~\ref{lemma:stitchingiterative} using Lemma~\ref{lemma:iterativegarbageremove}, after Lemma~\ref{lemma:stitchingiterative} using Lemma~\ref{lemma:garbageremove}, or not at all!

Of course, the idea of uncomputation combined with iterative estimation itself is quite simple, so given a complete understanding of the techniques we present the reader may be able to perform this modularization themselves. However, we found that this presentation significantly de-clutters the presentation of the main algorithms, those that actually implement the coherent iterative estimators from Definition~\ref{def:iterativeangle}. While the intuitive concept behind these strategies is not so complicated, the rigorous presentation and error analysis is quite intricate. We therefore prefer to discuss uncomputation separately.

\subsection{Coherent Iterative Phase Estimation}   \label{sec:phaseestimation}

This section describes our first novel algorithm, presented in Theorem~\ref{thm:iterativephaseestimation}. Before stating the algorithm in complete detail, performing an error analysis, and showing how to optimize the performance up to constant factors, we outline the tools that we will need and give an intuitive description.

As stated in the introduction, a very similar algorithm was independently discovered by \cite{2105.02859}. The techniques of the two algorithms for phase estimation feature some minor differences: our work is more interested in maintaining coherence of the input state, the algorithm's constant-factor runtime improvement over prior art, and our runtime is $O(2^n)$ rather than $O(n2^n)$ ($O(n)$ vs $O(n \log n)$ respectively in the language of \cite{2105.02859}). On the other hand, \cite{2105.02859} elegantly show how a quantum Fourier transform emerges as a special case of the algorithm, and their presentation is significantly more accessible. Both methods avoid use of ancillae entirely.

A key tool for these algorithms is the block-encoding, which allows us to manipulate arbitrary non-unitary matrices. In this paper we simplify the notion a bit by restricting to square matrices with spectral norm $\leq 1$. 


\begin{definition} \label{def:blockencoding} Say $A$ is a square matrix $\in\mathcal{L}\left(\mathcal{H}\right)$. A unitary matrix $U_A$ is a \textbf{block-encoding} of $A$ if $U_A$ acts on $m$ qubits and $\mathcal{H}$, and:
    \begin{align}
        \left(\bra{0^m} \otimes I\right) U \left( \ket{0^m} \otimes I \right) = A
    \end{align}
    We can also make $m$ explicit by saying that $A$ \textbf{has a block-encoding with $m$ ancillae}. We allow the quantum circuit implementing $U$ to allocate ancillae in the $\ket{0}$ state and then to return them to the $\ket{0}$ state with probability 1. These ancillae are not postselected and do not contribute to the ancilla count $m$.  
\end{definition}


A unitary matrix is a trivial block-encoding of itself. In this sense, we already have a block-encoding of the matrix:
\begin{align}
    U = \sum_j e^{2\pi i \lambda_j} \ket{\psi_j}\bra{\psi_j}
\end{align}

Recall that the goal of the coherent iterative estimator is to compute $\text{bit}_k(\lambda_j)$. The strategy involves preparing an approximate block-encoding of:
\begin{align}
    \sum_j \text{bit}_k(\lambda_j) \ket{\psi_j}\bra{\psi_j}
\end{align}

We begin by rewriting the above expression a bit. Recall that $\Delta_k$ is an integer from $0$ to $2^{k+1}-1$ encoding the $k$ less significant bits of $\lambda_j$. If we subtract $\Delta_k/2^n$ from $\lambda_j$, we obtain a multiple of $1/2^{n-k}$ plus something $< 1/2^{n}$ which we floor away. Then, $\text{bit}_k(\lambda_j)$ indicates if this is an even or an odd multiple. That means we can write:
\begin{align}
    \text{bit}_k(\lambda_j) = \text{parity}\left(\text{floor}\left( 2^{n-k} \left( \lambda_j - \frac{\Delta_k}{2^n} \right) \right)\right)
\end{align}

Since $ \lambda_j - \frac{\Delta_k}{2^n}$ is a multiple of $1/2^{n-k}$ plus something $< 1/2^{n}$, we equivalently have that $2^{n-k} \left( \lambda_j - \frac{\Delta_k}{2^n} \right)$ is an integer plus something less than $1/2^{k}$. For such values, we can write the parity(floor($x$)) function in terms of a squared cosine that has been `amplified':
\begin{align}
    \text{parity}(\text{floor}(x)) = \text{amp}\left(\cos^2\left( \frac{\pi}{2} \left[ x + \phi  \right] \right)\right)\\
    \text{amp}(x) = \Bigg\{ \begin{array}{cc}1 \text{ if } x > 1/2\\ 0 \text{ if } x < 1/2 \end{array}
\end{align}
where the shift $\phi$ centers the extrema of the cosine in the intervals where $x$ occurs. Therefore:
\begin{align}
    \text{bit}_k(\lambda_j) &= \text{amp}\left(\cos^2\left( \frac{\pi}{2} \left[ 2^{n-k} \left(\lambda_j - \frac{\Delta_k}{2^n}\right) + \phi \right]  \right)\right)  \\
    &= \text{amp}\left(\cos^2\left( \pi \left[ 2^{n-k-1} \left(\lambda_j - \frac{\Delta_k}{2^n}\right) + \phi/2 \right]  \right)\right)  \\
    &= \text{amp}\left(\cos^2\left( \pi \lambda^{(k)}_j \right)\right) 
\end{align}
Where we have defined:
\begin{align}
    \lambda_j^{(k)} := 2^{n-k-1} \left( \lambda_j - \frac{\Delta_k}{2^n} \right) + \phi_k
\end{align}
for some $k$-dependent choice of phase $\phi_k = \phi/2$.

By applying a phase shift conditioned on the $\ket{\Delta_k}$ register, and then iterating it $2^{n-k-1}$ times, we can construct a block-encoding of the unitary with eigenvalues $\lambda_j^{(k)}$. Our goal is now to transform this unitary as follows:
\begin{align}
    \sum_j e^{2\pi i \lambda_j^{(k)}} \ket{\psi_j}\bra{\psi_j} \hspace{5mm}\to \hspace{5mm}  \sum_j \left[ \text{amp}\left(\cos^2\left( \pi \lambda^{(k)}_j \right)\right) \right]\ket{\psi_j}\bra{\psi_j}
\end{align}
To obtain a cosine use linear combinations of unitaries \cite{1501.01715, 1511.02306} to take a take a linear combination with the identity:
\begin{align}
    \sum_j \frac{e^{2 \pi i \lambda^{(k)}_j} + 1}{2} \ket{\psi_j}\bra{\psi_j}  &= \sum_j \cos\left( \pi \lambda_j^{(k)} \right) \left[ e^{i \pi \lambda_j^{(k)} } \ket{\psi_j}\right]\bra{\psi_j}\\
    &= \sum_j \left|\cos\left(\pi \lambda_j^{(k)} \right)\right| \cdot \left[ \pm e^{i \pi \lambda_j^{(k)} } \ket{\psi_j}\right]\bra{\psi_j}
\end{align}
where on the previous line $\pm$ indicates $\text{sign}\left( \cos\left( \pi \lambda^{(k)}_j \right) \right)$. This way the above is a singular value decomposition of the block-encoded matrix.
So all that is left to do is to approximately transform the singular values $a$ of the matrix above by:
\begin{align}
    a \to \text{amp}(a^2)
\end{align}
We will accomplish this using singular value transformation.


\begin{lemma} \textbf{Singular value transformation.}\label{lemma:svt}  Say $A$ is a square matrix with singular value decomposition $A = \sum_i a_i \ket{\psi^l_i}\bra{\psi^r_i}$, and $p(x)$ is a degree-$d$ even polynomial with real coefficients satisfying $|p(x)| \leq 1$ for $|x| \leq 1$, and suppose $A$ has a block-encoding $U_A$. Then, for any $\delta > 0$, there exists a $O(\text{poly}(d,\log(1/\delta)))$ time classical algorithm that produces a circuit description of a block-encoding of the matrix:
        \begin{align}
            \tilde p(A) := \sum_i \tilde p(a_i) \ket{\psi^r_i}\bra{\psi^r_i}
        \end{align}
        where $\tilde p(x)$ is a polynomial satisfying $|\tilde p(x) - p(x)| \leq \delta$ for $|x| \leq 1$.  This block-encoding makes $d$ queries to $U_A$ or $U_A^\dagger$ (not controlled). If the block-encoding of $A$ has $m$ ancillae, then that of $\tilde p(A)$ has $m+1$ ancillae. In the special case when $m=1$ and the unitary implementing the block-encoding has the form $U_A = \sum_i V_i \otimes  \ket{\psi^l_i}\bra{\psi^r_i}$ where the $V_i$ are qubit reflections, then it is possible to make the block-encoding of $\tilde p(A)$ also just have one ancilla.
\end{lemma}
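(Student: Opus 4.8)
The plan is to invoke the standard singular value transformation theorem from the quantum singular value transformation (QSVT) literature \cite{1806.01838}, and then spend most of the effort on the approximation bookkeeping and on the special-case ancilla optimization. First I would recall the qubitization/QSVT setup: given a block-encoding $U_A$ with $m$ ancillae of a matrix $A$, and an even polynomial $p$ of degree $d$ with real coefficients and $|p(x)|\le 1$ on $[-1,1]$, one can find phase factors $\{\varphi_i\}_{i=1}^{d}$ such that an alternating sequence of controlled reflections about the $\ket{0^m}$ subspace interleaved with applications of $U_A$ and $U_A^\dagger$ realizes a block-encoding of $p(A)$ acting on the right singular vectors. Because $p$ is even, the construction uses $d$ alternating queries and exactly one extra ancilla qubit (the qubit on which the reflections/rotations are controlled), giving $m+1$ ancillae total. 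The queries are uncontrolled calls to $U_A$ and $U_A^\dagger$, which matches the statement.

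Next I would handle the $\tilde p$ versus $p$ gap. The phase-factor computation is only numerically stable for polynomials bounded slightly below $1$, and the classical algorithm that computes the phases (e.g.\ the methods surveyed in \cite{2105.02859} or the capitalization/completion procedures of \cite{1806.01838}) runs in time $O(\mathrm{poly}(d,\log(1/\delta)))$ and outputs phases implementing some polynomial $\tilde p$ with $\|\tilde p - p\|_{\infty,[-1,1]} \le \delta$. I would simply cite this: one rescales $p$ to $(1-\delta')p$ for a suitable $\delta' = \Theta(\delta)$, completes it to a pair $(\tilde p, \tilde q)$ satisfying the QSVT normalization condition $|\tilde p(x)|^2 + (1-x^2)|\tilde q(x)|^2 = 1$, extracts phase factors, and the resulting implemented polynomial is within $\delta$ of the original $p$ on $[-1,1]$. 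The point worth stating carefully is that $\tilde p$ is still even (evenness is preserved by the completion, or can be enforced by symmetrizing), so the output is genuinely $\sum_i \tilde p(a_i)\ket{\psi^r_i}\bra{\psi^r_i}$ with only one new ancilla.

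The part that needs a genuine argument rather than a citation is the last sentence: when $m=1$ and $U_A = \sum_i V_i \otimes \ket{\psi^l_i}\bra{\psi^r_i}$ with each $V_i$ a single-qubit reflection, we can avoid the extra ancilla and keep just one. The idea is that in this situation $U_A$ is already in "qubitized" form — the ancilla qubit, together with $U_A$, generates an invariant $SU(2)$ on each two-dimensional subspace $\mathrm{span}\{\ket{0}\ket{\psi^r_i},\, V_i\ket{0}\ket{\psi^l_i}\}$ (or more precisely the relevant Jordan block), and a reflection $V_i$ is exactly the object that the QSVT/qubitization circuit normally has to synthesize using an extra control qubit and a controlled reflection about $\ket{0^m}$. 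So instead of the generic construction, I would apply the standard single-ancilla qubitization recipe: interleave $U_A$ and $U_A^\dagger$ with single-qubit $z$-rotations $e^{i\varphi_i Z}$ on the one ancilla, which is the Low–Chuang qubitization primitive, and the product of $d$ such blocks implements the Chebyshev-type expansion of the desired even polynomial on the singular values while the ancilla returns to $\ket{0}$ in the block-encoded subspace. The main obstacle — and the thing I would be most careful about — is verifying that the $\pm$ signs and the phase $e^{i\pi\lambda_j^{(k)}}$ appearing in the singular value decomposition we feed in (from the earlier LCU step) are consistent with the reflection structure hypothesized here, i.e.\ that one really can write $U_A$ in the form $\sum_i V_i\otimes\ket{\psi^l_i}\bra{\psi^r_i}$ with $V_i$ honest reflections ($V_i^2 = I$, $V_i = V_i^\dagger$) rather than general unitaries; if the $V_i$ are only unitaries one gets a generalized-QSP situation and the single-ancilla claim can fail. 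I would therefore state the reflection hypothesis as exactly what is needed and note that the block-encoding built in the phase-estimation algorithm (Theorem~\ref{thm:iterativephaseestimation}) is constructed to satisfy it.
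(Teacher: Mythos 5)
Your proof is correct and takes essentially the same route as the paper's: cite GSLW's main QSVT result, handle the $\tilde p$ versus $p$ gap by rescaling and completion of the polynomial, and realize the $m=1$ special case by the single-ancilla QSP/qubitization circuit interleaving $U_A$, $U_A^\dagger$ and $Z$-rotations. Two small points worth flagging: the $+1$ ancilla in the $m>1$ case actually arises from the LCU step (Corollary~18 of GSLW) needed to realize an arbitrary real polynomial, not from ``controlling reflections/rotations'' --- the reflection-control qubit of GSLW's Lemma~19 is returned exactly to $\ket{0}$ and hence is not an ancilla under Definition~\ref{def:blockencoding} --- and for the $m=1$ case the paper specifically invokes the $(W_x,S_z,\bra{+}\cdot\ket{+})$-QSP convention (Theorem~13 of MRTC21), realized by sandwiching GSLW's Lemma~19 circuit with Hadamards on the ancilla, since the bare $\bra{0}\cdot\ket{0}$ Chebyshev-iterate convention your phrasing suggests does not by itself accommodate arbitrary real $|p|\leq 1$ without the extra LCU ancilla.
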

\begin{proof} This is a slightly simplified version of the main result of \cite{1806.01838}. The $m>1$ case involves an application of Corollary~18 which demands an extra control qubit that is also postselected. For an accessible review on the subject see \cite{2105.02859}.

    We briefly elaborate on the $m=1$ special case when $U_A = \sum_i V_i \otimes  \ket{\psi^l_i}\bra{\psi^r_i}$ and the $V_i$ are reflections. We find that when this is the case, the $(W_x,S_z, \bra{+} \cdot \ket{+})$-QSP convention (see Theorem 13 of \cite{2105.02859}) can actually be implemented by converting to the reflection convention and then using the circuit from Lemma~19 of \cite{1806.01838}, simply by applying a Hadamard gate to the ancilla register at the beginning and end of the circuit. This lets us implement polynomials with the constraints above without resorting to Corollary~18 of \cite{1806.01838}.
    
    Note that the circuit from Lemma~19 of \cite{1806.01838} (which is used in both the $m=1$ and $m>1$ cases) initializes an extra qubit to perform reflections. However, this qubit is guaranteed to be returned to the $\ket{0}$ state exactly and is not postselected, so it is not an ancilla in the sense of Definition~\ref{def:blockencoding}.
\end{proof}


Singular value transformation can perform the desired conversion if we can construct a polynomial $A(x)$ such that:
\begin{align}
    A(x) \approx \text{amp}(x) = \frac{1}{2} - \frac{1}{2} \text{sign}(2x - 1)
\end{align}
If we invoke the above lemma with the even polynomial $A(x^2)$ we get an approximation to the desired block-encoding of $\sum_j \text{bit}_k(\lambda_j)\ket{\psi_j}\bra{\psi_j}$.

In particular, the behavior we must capture in the polynomial approximation is that $A(x) \approx 1$ when $x \in [0,1/2 - \eta]$ and $A(x) \approx 0$ when $x \in [1/2 + \eta,1]$ for some gap $\eta$ away from $1/2$. If we view the input $x$ as a probability, then $A(x)$ essentially `amplifies' this probability to something close to 0 or 1 (and additionally flips the outcome). We hence call $A(x)$ an `amplifying polynomial'.

One approach to constructing such an amplifying polynomial is to simply adapt a classical algorithm for amplification. We stated this method in the introduction: the polynomial is the probability that the majority vote of several coin tosses is heads, where each coin comes up heads with probability $x$. Then the desired properties can be obtained from the Chernoff-Hoeffding theorem \cite{0902.3757}. The number of coins we have to toss to accomplish a particular $\eta,\delta$ is the degree of the polynomial, which is bounded by $O(\eta^{-2} \log(\delta^{-1}))$. 

However, this polynomial does not achieve the optimal $\eta$ dependence of $O(\eta^{-1})$. This might be achieved by a polynomial inspired by a quantum algorithm for approximate counting, which does achieve the $O(\eta^{-1})$ dependence. But rather than go through such a complicated construction we simply adapt a polynomial approximation to the $\text{sign}(x)$ function developed in \cite{1707.05391}, which accomplishes the optimal $O(\eta^{-1} \log(\delta^{-1}))$. 


\begin{lemma} \label{lemma:amppoly} \textbf{Quantum amplifying polynomial.} For any $0  < \eta,\delta < 1/2 $, let:
        \begin{align}
            I_j(t) &:= \text{the } j\text{'th modified Bessel function of the first kind}\\
            T_j(t) &:= \text{the } j\text{'th Chebyshev polynomial of the first kind}\\
            k &:= \frac{\sqrt{2}}{4\eta} \sqrt{  \ln\left( \frac{8}{\pi \delta}  \right)  } 
        \end{align}
   For some $M_{\eta\to\delta}$, consider the polynomials:
    \begin{align}
        p_\text{sgn}(x) &:= \frac{2ke^{-\frac{k^2}{2}}}{\sqrt{\pi}} \left( I_0\left(\frac{k^2}{2}\right) \cdot x + \sum_{j=1}^{\frac{1}{2}M_{(\eta\to\delta)}-\frac{1}{2}} I_j\left(\frac{k^2}{2}\right)  (-1)^j \left( \frac{T_{2j+1}\left(x\right)}{2j+1} - \frac{ T_{2j-1}\left(x\right)  }{2j-1}   \right) \right)\\
        A_{\eta\to\delta}(x)&:= \frac{1}{2} - \frac{1}{2} \frac{p_{\text{sgn}}(2x-1)}{1+\delta/2}   
    \end{align}
    Then there exists an $M_{\eta\to\delta} \in O\left( \eta^{-1} \log\left( \delta^{-1}\right) \right)$ such that $A_{\eta\to\delta}(x)$ is of degree $M_{\eta\to\delta}$ and satisfies the constraints:
    \begin{align}
        \label{eqn:amppoly1}  \text{ if } 0 \leq x \leq 1 &\text{ then } 0 \leq A_{\eta\to\delta}(x) \leq 1\\
        \label{eqn:amppoly2} \text{ if } 0 \leq x \leq \frac{1}{2} - \eta  &\text{ then }     A_{\eta\to\delta}(x)  \geq 1- \delta\\
        \text{ if } \frac{1}{2} + \eta \leq x \leq 1 &\text{ then }     A_{\eta\to\delta}(x)  \leq  \delta
    \end{align}
\end{lemma}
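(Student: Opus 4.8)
The plan is to build $A_{\eta\to\delta}$ out of a single object: a polynomial $p_{\text{sgn}}$ that approximates $\text{sign}(y)$ on $[-1,1]$ away from the origin, and then to undo the affine substitution $y = 2x-1$. Note that $x \in [0,\tfrac12-\eta]$ corresponds to $y \in [-1,-2\eta]$ and $x \in [\tfrac12+\eta,1]$ to $y \in [2\eta,1]$, so all three claimed properties of $A_{\eta\to\delta}$ reduce to two facts about $p_{\text{sgn}}$: (a) $\|p_{\text{sgn}}\|_{[-1,1],\infty} \le 1 + \delta/2$, which gives $0 \le A_{\eta\to\delta} \le 1$ on $[0,1]$ since then $|p_{\text{sgn}}(2x-1)/(1+\delta/2)| \le 1$; and (b) $|p_{\text{sgn}}(y) - \text{sign}(y)|$ is $O(\delta)$ for $2\eta \le |y| \le 1$, which after elementary algebra with the $1/(1+\delta/2)$ normalization gives $A_{\eta\to\delta} \ge 1-\delta$ on the left interval and $\le \delta$ on the right. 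The $1+\delta/2$ in the definition is precisely the slack needed to make those two implications come out with the clean constants $1-\delta$ and $\delta$ rather than $1-C\delta$ and $C\delta$.

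The polynomial $p_{\text{sgn}}$ written in the statement is the standard one: it is the Chebyshev-degree-$M_{\eta\to\delta}$ truncation of the Jacobi--Anger-type expansion of the error function $\text{erf}(kx) = \frac{2ke^{-k^2/2}}{\sqrt\pi}\big( I_0(k^2/2)\,x + \sum_{j\ge1}(-1)^j I_j(k^2/2)(\frac{T_{2j+1}(x)}{2j+1} - \frac{T_{2j-1}(x)}{2j-1}) \big)$. So I would bound $|p_{\text{sgn}}(y) - \text{sign}(y)|$ by the triangle inequality through $\text{erf}(ky)$, controlling the two pieces separately. For the analytic piece, $|\text{erf}(ky) - \text{sign}(y)| = \text{erfc}(k|y|) \le \text{erfc}(2k\eta)$ for $|y|\ge 2\eta$; the choice $k = \frac{\sqrt2}{4\eta}\sqrt{\ln(8/(\pi\delta))}$ makes $(2k\eta)^2 = \tfrac12\ln(8/(\pi\delta))$, so a standard Gaussian tail estimate for $\text{erfc}$ turns this into the required $O(\delta)$ bound. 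For the truncation piece, $|p_{\text{sgn}}(y) - \text{erf}(ky)| \le \frac{2ke^{-k^2/2}}{\sqrt\pi}\sum_{j > (M_{\eta\to\delta}-1)/2}|I_j(k^2/2)|(\frac{1}{2j+1}+\frac{1}{2j-1})$ using $|T_m|\le 1$ on $[-1,1]$, and I would bound the Bessel tail via the rapid decay of $I_j(k^2/2)$ once the index leaves the effective support of the coefficients. The point — and this is what I would import rather than reprove — is that, because the $T_{2j\pm1}$ terms oscillate, one does not need $M_{\eta\to\delta}\sim k^2$ but only $M_{\eta\to\delta} = \Theta(k\sqrt{\log\delta^{-1}}) = \Theta(\eta^{-1}\log\delta^{-1})$ to push this tail below $O(\delta)$; this is exactly the sign-function polynomial of \cite{1707.05391} (see also \cite{1806.01838}). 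Combined with $\|\text{erf}(k\cdot)\|_{\infty}\le 1$, the same truncation bound also yields fact (a).

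With (a) and (b) in hand the rest is bookkeeping: substitute $y = 2x-1$, observe $\deg A_{\eta\to\delta} = \deg p_{\text{sgn}} = M_{\eta\to\delta}$ because $x\mapsto 2x-1$ is affine, and then read off \eqref{eqn:amppoly1}--\eqref{eqn:amppoly2} and the third inequality from the elementary manipulations above. So the whole proof is: recall the expansion and the two error bounds from the sign-polynomial construction, fix $M_{\eta\to\delta}$ just large enough (still $O(\eta^{-1}\log\delta^{-1})$) that the truncation error is small, and compute directly from the definition of $A_{\eta\to\delta}$.

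The main obstacle lives entirely in the truncation piece: getting the \emph{degree} down to $O(\eta^{-1}\log\delta^{-1})$ rather than the naive $O(\eta^{-2}\log\delta^{-1})$ requires the refined estimate on the Bessel tail $\sum_j I_j(k^2/2)$ that exploits cancellation among the oscillating Chebyshev terms, which is the technically delicate input and is why I would lean on \cite{1707.05391} for it. A secondary but fiddly obstacle is constant-tracking: the exact form of $k$, the $8/(\pi\delta)$ inside the logarithm, and the $1+\delta/2$ normalization must be threaded consistently through the $\text{erfc}$ bound, the truncation bound, and the final algebra so that the conclusion is \emph{exactly} $A_{\eta\to\delta}\ge 1-\delta$ and $\le\delta$. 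None of this is deep, but it is where the lemma's precise numerical choices get pinned down.
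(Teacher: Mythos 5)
Your approach matches the paper's, whose proof is a one-paragraph citation to Corollary~6 of \cite{1707.05391} (with $\kappa := 4\eta$ and accuracy parameter $\delta/2$) composed with the affine substitution $x \mapsto 2x-1$; you are reconstructing the internals of that corollary (erf approximation plus Chebyshev truncation) but ultimately deferring to \cite{1707.05391} for the same degree estimate, exactly as the paper does. One small aside: the $O(\eta^{-1}\log\delta^{-1})$ degree is not due to cancellation among the oscillating Chebyshev terms (the truncation bound in \cite{1707.05391} is a plain triangle-inequality bound on absolute values), but rather to the Gaussian-like decay $I_j(k^2/2)/I_0(k^2/2) \approx e^{-j^2/k^2}$, which already kicks in at $j \sim k$ rather than $j \sim k^2$; since you import the estimate from \cite{1707.05391} anyway, this imprecision in the intuition does not affect the proof.
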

\begin{proof} This is a more self-contained re-statement of Corollary~6 in Appendix~A of \cite{1707.05391}, which constructs a polynomial $p_{\text{sgn},\kappa,\delta,n}(x) \approx \text{sign}(x)$ with various accuracy parameters. The polynomial above is $\frac{1}{2} - \frac{1}{2} p_{\text{sgn},\kappa,\delta/2,n}\left(2x-1\right)$, since after all we desired $A(x) \approx \frac{1}{2} - \frac{1}{2}\text{sign}(2x-1)$. To guarantee good behavior when $|x-1/2| > \eta$, we select $\kappa := 4\eta$. The value of $M_{\eta\to\delta}$ itself can be computed by combining various results from that work's Appendix~A.
\end{proof}


The method for obtaining $M_{\eta\to\delta}$ given $\eta$ and $\delta$ is complicated enough that it is not worth re-stating here. However, the complexity is by all means worth it: in our numerical analyses we found that the polynomials presented in Appendix~A of \cite{1707.05391} feature excellent performance in terms of degree. This is a major source of the query complexity speedup of our algorithms.

The value of $\delta$ is chosen such that the final error in diamond norm is bounded. The value of $\eta$ depends on how far away $\cos^2\left( \pi \lambda^{(k)}_j\right)$ is from $\frac{1}{2}$. Of course, there are several possible values of $\lambda^{(k)}_j$ where $\cos^2\left( \pi \lambda^{(k)}_j\right) = \frac{1}{2}$ exactly, so $\eta = 0$ and amplification is impossible.  This is where the rounding promise comes in: it ensures that $\lambda_\Delta$ is always sufficiently far from such values, so that we can guarantee that $\cos^2\left( \pi \lambda^{(k)}_j\right)$ is always either $\geq \frac{1}{2} + \frac{\alpha}{2}$ or $\leq \frac{1}{2} + \frac{\alpha}{2}$. So we select $\eta = \alpha/2$.

When $k = 0$ then indeed the rounding promise is the only thing guaranteeing that amplification will succeed. However, if the less significant bits have already been computed then the set of values that $\lambda_j^{(k)}$ can take is restricted. This is because the previous bits of $\lambda_j$ have been subtracted off. This widens the region around the solutions of $\cos^2\left(\pi \lambda^{(k)}_j\right) = \frac{1}{2}$ where $\lambda_\Delta$ cannot be found, allowing us to increase $\eta$. Furthermore, this can be done without relying on the rounding promise anymore: bits with $k \geq 1$ are guaranteed to be deterministic even if no rounding promise is present. That means that if the rounding promise is violated, then only the least significant bit can be wrong. The polynomials are sketched in Figure~\ref{fig:projectionpolys}.

\begin{figure}[t]
    \centering
    \includegraphics[width=\textwidth]{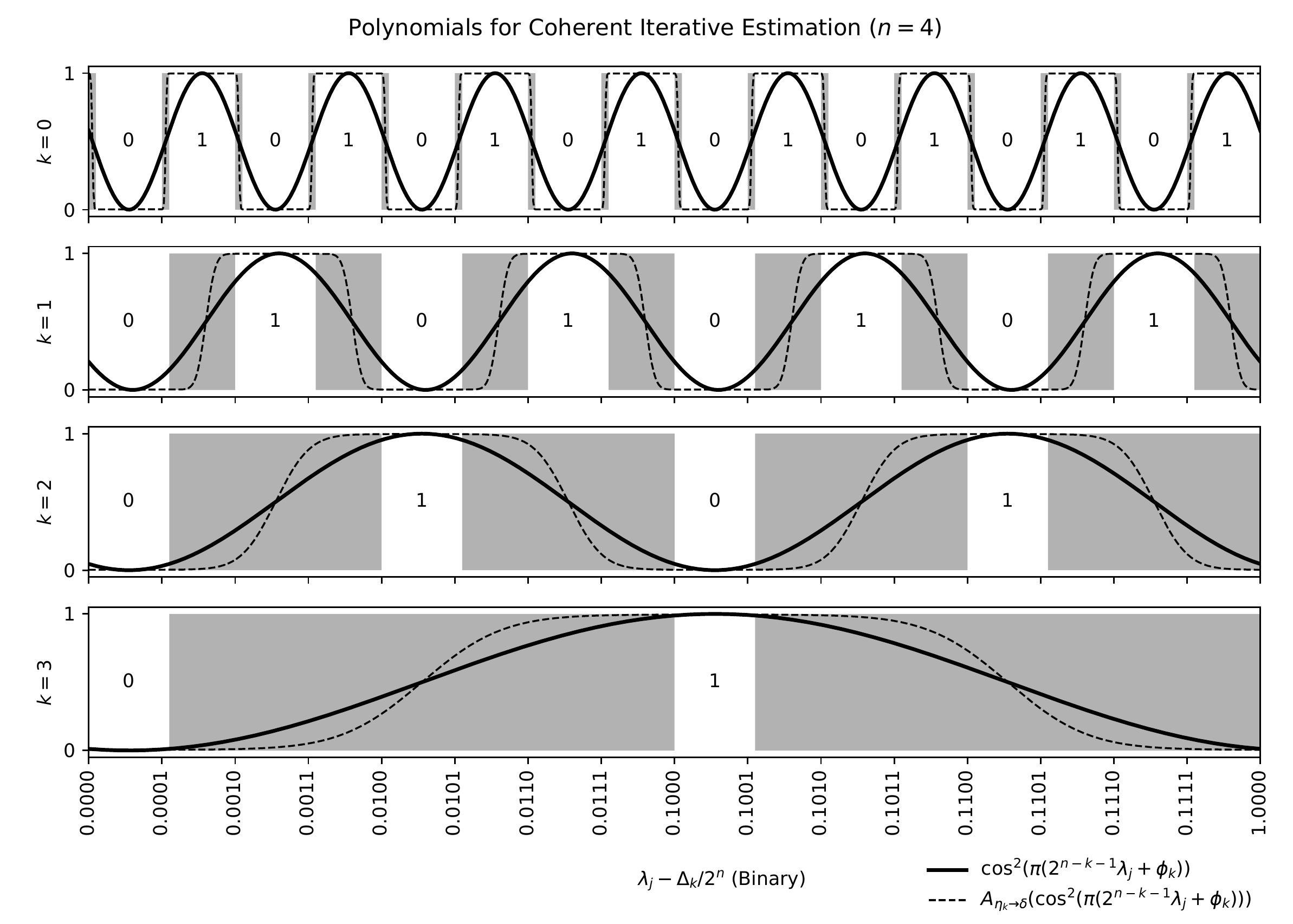}
    \caption{  \label{fig:projectionpolys}Sketch of the polynomials used in Theorems~\ref{thm:iterativephaseestimation}~and~\ref{thm:iterativeenergyestimation}. In black we show a shifted $\cos^2(x)$ function that indicates if the bit is 0 or 1. Then, the amplifying polynomial from Lemma~\ref{lemma:amppoly} is applied to it to yield the dashed line, which is either $\leq \delta$ or $\geq 1-\delta$ depending on the bit. For the $k=0$ bit, the gaps between the allowed intervals are guaranteed by the rounding promise. But as more bits are estimated and subtracted off, the gaps for $k \geq 1$ require no rounding promise, and also become larger and larger so less and less amplification is needed. }
\end{figure}


After constructing the amplifying polynomial $A_{\eta\to\delta}$, we use singular value transformation to apply $A_{\eta\to\delta}(x^2)$ which is even as required by Lemma~\ref{lemma:svt}. Now we have an approximate block-encoding of $\sum_j \text{bit}_k(\lambda_j) \ket{\psi_j}\bra{\psi_j}$, which is in fact a projector. In the introduction we stated that we would use a block-measurement theorem to compute the map:
\begin{align}
        \ket{0}\otimes \ket{\psi} \to   \ket{1} \otimes \Pi\ket{\psi} + \ket{0} \otimes (I-\Pi)\ket{\psi}
\end{align}
given an approximate block-encoding of $\Pi$. However, this general tool involves uncomputation which we specifically wanted to modularize. The fact that we are already measuring errors in terms of the diamond norm means that Lemmas~\ref{lemma:garbageremove}~and~\ref{lemma:iterativegarbageremove} are already capable of dealing with garbage. We therefore defer the proof of this general tool to Section~\ref{sec:blockmeas}, specifically Theorem~\ref{thm:blockmeasure}.

There is another reason to not use Theorem~\ref{thm:blockmeasure} in a black-box fashion, specifically for coherent iterative phase estimation. The block-encoding of $\sum_j \text{bit}_k(\lambda_j) \ket{\psi_j}\bra{\psi_j}$ actually features only one ancilla qubit: the qubit we used to take the linear combination of $U$ and the identity. That means that the block-encoding itself is already very close to the map $\ket{0}\otimes \ket{\psi} \to   \ket{0} \otimes \Pi\ket{\psi} + \ket{1} \otimes (I-\Pi)\ket{\psi}$ (note the flipped output qubit). The details of this usage of the block-encoding will appear in the proof.

This completes the sketch of the procedure to implement the map
\begin{align}
    \ket{0} \ket{\Delta_k} \ket{\psi_j}  \to e^{i\varphi_j} \ket{\text{bit}_k\left( \lambda_j \right)}\ket{\Delta_k}\ket{\psi_j}
\end{align}
for some phases $\varphi_j$. We can now state the protocol in detail, and perform the accuracy analysis.

\begin{theorem} \label{thm:iterativephaseestimation} \textbf{Coherent Iterative Phase Estimation.} There is a coherent iterative phase estimator with phases (and no garbage) and with query complexity:
    \begin{align} 2^{n-k} \cdot M_{\eta\to\delta_\text{amp}}   \end{align}
        where in the above $M_{\eta\to\delta}$ is as in Lemma~\ref{lemma:amppoly},  $\eta := \frac{1}{2} - \frac{1}{2^k} \left( \frac{1}{2} + \frac{\alpha}{2} \right) $ if  $ k \geq 1$ and $\eta := \frac{\alpha}{2}$ if $k=0$, and $\delta_\text{amp}$ can be chosen to be $ (1-10^{-m}) \delta^2 / 8 $ for any $m > 0$. 
\end{theorem}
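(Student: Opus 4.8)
The plan is to follow the construction sketched before the theorem statement step by step, at each stage tracking the spectral-norm error of the approximate block-encoding and converting to diamond norm only at the very end via Lemma~\ref{lemma:spectraltodiamond}. First I would make precise the reduction from the target map to an amplified squared cosine: conditioned on the $\ket{\Delta_k}$ register, apply the phase shift $e^{-2\pi i \Delta_k / 2^n}$ to $U$, iterate $2^{n-k-1}$ times, and choose $\phi_k$ so that $\text{bit}_k(\lambda_j) = \text{amp}(\cos^2(\pi\lambda_j^{(k)}))$ holds exactly on the rounding-promise subspace; this gives a genuine block-encoding (one ancilla, zero error) of $\sum_j e^{2\pi i \lambda_j^{(k)}}\ket{\psi_j}\bra{\psi_j}$ with the $V_i$ being qubit reflections, so the special $m=1$ case of Lemma~\ref{lemma:svt} applies and no ancillae accumulate.

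Next I would take the LCU of this unitary with the identity (weight $1/2$ each) to obtain a one-ancilla block-encoding whose singular values are $|\cos(\pi\lambda_j^{(k)})|$, then invoke Lemma~\ref{lemma:svt} with the even polynomial $A_{\eta\to\delta_\text{amp}}(x^2)$ from Lemma~\ref{lemma:amppoly}. The key quantitative step is verifying the value of $\eta$: the rounding promise forces $\lambda_j \notin [x/2^n, x/2^n + \alpha/2^n]$, and after subtracting $\Delta_k/2^n$ and scaling, $\lambda_j^{(k)}$ is confined to intervals bounded away from the half-integers where $\cos^2(\pi\lambda_j^{(k)}) = 1/2$. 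For $k=0$ the only guarantee is the promise, giving $\eta = \alpha/2$; for $k\geq 1$ the already-computed bits restrict $\lambda_j^{(k)}$ to lie within $1/2^k$ of an integer, so $\cos^2(\pi\lambda_j^{(k)}) \geq \cos^2(\pi \cdot \frac{1}{2^k}(\frac{1}{2}+\frac{\alpha}{2}))$-type bounds yield $\eta = \frac{1}{2} - \frac{1}{2^k}(\frac{1}{2}+\frac{\alpha}{2})$, and this holds with no promise. I would do this interval arithmetic carefully, as it is the conceptual heart of the argument.

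Then the constraints (\ref{eqn:amppoly1})--(\ref{eqn:amppoly2}) guarantee that the approximate block-encoding $B$ satisfies $\|B - \Pi\| \leq \delta_\text{amp} + \delta_\text{svt}$ where $\Pi = \sum_j \text{bit}_k(\lambda_j)\ket{\psi_j}\bra{\psi_j}$ and $\delta_\text{svt}$ is the (freely tunable, hence negligible) SVT approximation error. Since this block-encoding has exactly one ancilla, the unitary $U_B$ acting on $\ket{0}\ket{\psi_j}$ is within spectral norm $O(\sqrt{\delta_\text{amp}})$ of the map $\ket{0}\ket{\psi_j} \to \ket{\text{bit}_k(\lambda_j)}\,e^{i\varphi_j}\ket{\psi_j}$ (up to relabelling the ancilla as the output qubit and absorbing the residual into phases) — I would spell out this spectral-norm bound the same way the Proposition~\ref{prop:phaseestimation} proof bounds $\sqrt{2-2\sqrt{1-\delta_\text{med}}}$, i.e.\ the off-block contributions and the deficit $1-\sqrt{1-\delta_\text{amp}}$ combine to something $\lesssim \sqrt{\delta_\text{amp}}$. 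Applying Lemma~\ref{lemma:spectraltodiamond} doubles this, and choosing $\delta_\text{amp} = (1-10^{-m})\delta^2/8$ makes the diamond-norm error at most $2\sqrt{2\delta_\text{amp}} = \sqrt{1-10^{-m}}\,\delta < \delta$, with the $10^{-m}$ slack left over to absorb $\delta_\text{svt}$. The query count is immediate: the block-encoding of the cosine costs $2^{n-k-1}$ controlled-$U$ calls, LCU doubles nothing in query count, and SVT with polynomial $A_{\eta\to\delta_\text{amp}}(x^2)$ of degree $2M_{\eta\to\delta_\text{amp}}$ makes $2M_{\eta\to\delta_\text{amp}}$ queries to the cosine block-encoding, for a total of $2^{n-k-1}\cdot 2M_{\eta\to\delta_\text{amp}} = 2^{n-k}M_{\eta\to\delta_\text{amp}}$.

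The main obstacle I anticipate is the bookkeeping around the single-ancilla block-encoding doubling as the block-measurement: one must check that after SVT the block-encoding still has the reflection structure needed to keep a single ancilla, that the leftover phase $\varphi_j$ really can be pushed onto the $\ket{\psi_j}$ register coherently (so that the estimator is genuinely ``with phases, no garbage''), and that the $\pm\text{sign}(\cos(\pi\lambda_j^{(k)}))$ ambiguity in the singular value decomposition of the LCU output is harmless because $A_{\eta\to\delta_\text{amp}}$ is applied to $x^2$. None of these is deep, but getting the constants and the ancilla count exactly right — rather than off by a qubit or a factor of two in the degree — requires care.
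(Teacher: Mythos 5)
Your proposal follows the paper's proof almost exactly: build the conditional phase $e^{-2\pi i\hat\Delta_k/2^n}$, iterate $2^{n-k-1}$ times with a phase offset $\phi_k$ to obtain $e^{2\pi i\hat\lambda^{(k)}}$, form a one-ancilla block-encoding of $\cos(\pi\lambda_j^{(k)})$ via a Hadamard-test/LCU step, apply SVT with $A_{\eta\to\delta_\text{amp}}(x^2)$, and convert a spectral-norm error of order $\sqrt{\delta_\text{amp}+\delta_\text{svt}}$ into a diamond-norm error by Lemma~\ref{lemma:spectraltodiamond}. The interval arithmetic for $\eta_k$, the choice $\delta_\text{amp}=(1-10^{-m})\delta^2/8$, and the query count $2^{n-k-1}\cdot 2M_{\eta\to\delta_\text{amp}}$ all match the paper.

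There is, however, one concrete gap that you flag but mislocate. You describe the cosine block-encoding as ``LCU of this unitary with the identity (weight $1/2$ each)'' and assert (already in your first paragraph, before the LCU step) that the block-encoding has the $V_i$ qubit-reflection structure required for the $m=1$ special case of Lemma~\ref{lemma:svt}. But a \emph{plain} Hadamard test applied to $e^{2\pi i\hat\lambda^{(k)}}$ gives a $2\times 2$ ancilla block proportional to $\begin{bmatrix}\cos(\pi\lambda^{(k)}_j) & -i\sin(\pi\lambda^{(k)}_j)\\ -i\sin(\pi\lambda^{(k)}_j) & \cos(\pi\lambda^{(k)}_j)\end{bmatrix}$, which has determinant $+1$ (a rotation), not $-1$ (a reflection). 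With this form the $m=1$ special case does not apply; falling back to Corollary~18 of \cite{1806.01838} costs an extra postselected ancilla, which would make the resulting estimator ``with garbage'' and break the theorem statement. The paper's fix is the modified Hadamard $\tilde H=\frac{1}{\sqrt 2}\begin{bmatrix}1&1\\ i&-i\end{bmatrix}$, conjugating the controlled phase by $\tilde H,\tilde H^T$ instead of $H,H$, which produces the reflection block $e^{i\pi\lambda^{(k)}_j}\begin{bmatrix}\cos(\pi\lambda^{(k)}_j)&\sin(\pi\lambda^{(k)}_j)\\ \sin(\pi\lambda^{(k)}_j)&-\cos(\pi\lambda^{(k)}_j)\end{bmatrix}$ directly. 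You anticipate a problem with the reflection structure but place it after the SVT step, where Lemma~\ref{lemma:svt} already guarantees a one-ancilla output; the issue is actually in the construction of the \emph{input} block-encoding $U^{(k)}_\text{signal}$. Once $\tilde H$ is inserted, the rest of your argument, including the phase $e^{i\varphi_j}$ bookkeeping and the error bound mirroring $\sqrt{2-2\sqrt{1-\delta_\text{med}}}$ from Proposition~\ref{prop:phaseestimation}, goes through as written.
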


\begin{proof} We construct the estimator as follows:
    \begin{enumerate}
        \item Construct a unitary that performs a phase shift depending on $\Delta_k$ - we call this $e^{-2\pi i \hat\Delta_k/2^n}$ employing some notation inspired by physics literature.
        \begin{align}
            e^{-2 \pi i \hat\Delta_k/2^n} := \sum_{\Delta_k=0}^{2^k-1} e^{-2\pi i \Delta/2^n}\ket{\Delta_k}\bra{\Delta_k} = \bigotimes_{j=0}^{k-1}  e^{ -2\pi i \pi 2^{j-n}} 
        \end{align}
    \item Rewrite the oracle unitary in this notation:
        \begin{align}
            e^{2\pi i \hat\lambda} := U = \sum_{j} e^{2\pi i\lambda_j}  \ket{\psi_j}\bra{\psi_j}  \label{eqn:lambdakunitary}
        \end{align}
            Then define: 
            \begin{align}
                \phi_0 &:= 1 - \text{mean}\left( \frac{1}{2} + \frac{\alpha}{2} ,  1 \right)\\
                \phi_k &:= 1 - \text{mean}\left( \frac{1}{2} , \frac{1}{2} + \frac{1}{2^k} \left( \frac{1}{2} + \frac{\alpha}{2} \right) \right)  \text{ if } k \geq 1\\
                \hat\lambda^{(k)} &:= 2^{n-k-1} \left(\hat\lambda - \frac{\hat\Delta_k}{2^n}\right) + \phi_k,
            \end{align}
            and implement the corresponding phase shift:
        \begin{align}
            e^{2\pi i \hat\lambda^{(k)}} := \left(e^{-2\pi i\hat\Delta_k/2^n} \otimes  e^{2\pi i\hat\lambda}\right)^{2^{n-k-1}} \cdot e^{2\pi i \phi_k}
        \end{align} 
            This unitary acts jointly on the $\ket{\Delta_k}$ and $\ket{\psi_j}$ inputs. Since $k \in \{0,...,n-1\}$, the exponent $2^{n-k-1}$ is always an integer.
        \item Let $\tilde H := \frac{1}{\sqrt{2}} \begin{bmatrix} 1 & 1 \\ i & -i \end{bmatrix} $ be a slightly modified Hadamard gate, and consider the following unitary, implemented via $e^{2\pi i \hat\lambda^{(k)}}$:
\begin{align}
    U^{(k)}_\text{signal}  := \hspace{10mm} \begin{array}{c}\Qcircuit @C=1em @R=1em {
                      & \gate{\tilde H} & \ctrl{1} & \gate{\tilde H^T} & \qw \\
           \lstick{\ket{\Delta_k}} & \qw & \multigate{1}{e^{2 \pi i \hat\lambda^{(k)}}} & \qw & \qw\\
           \lstick{\ket{\psi_j}}   & \qw & \ghost{e^{ 2  \pi i \hat\lambda^{(k)}}}         & \qw  & \qw
    }\end{array} \label{eqn:usignal_def}
\end{align}
         Observe that:
            \begin{align}
                U^{(k)}_\text{signal} &= \sum_{\Delta_k} \sum_j \tilde H \begin{bmatrix} \hspace{4mm}1\hspace{4mm} & 0 \\ 0 & e^{ 2 \pi i \lambda_j^{(k)}}   \end{bmatrix} \tilde H^T \otimes \ket{\Delta_k}\bra{\Delta_k} \otimes \ket{\psi_j}\bra{\psi_j}\\
                    &= \sum_{\Delta_k} \sum_j  e^{ \pi i \lambda^{(k)}_j} \cdot  \tilde H \begin{bmatrix} e^{- \pi i \lambda^{(k)}_j}& 0 \\ 0 & e^{ \pi i \lambda^{(k)}_j}   \end{bmatrix} \tilde H^T \otimes \ket{\Delta_k}\bra{\Delta_k} \otimes \ket{\psi_j}\bra{\psi_j}\\
                        &= \sum_{\Delta_k} \sum_j  e^{ \pi i \lambda^{(k)}_j} \cdot  \begin{bmatrix} \cos\left( \pi \lambda_j^{(k)} \right) & \sin\left( \pi \lambda_j^{(k)}\right)  \\ \sin\left(\pi \lambda_j^{(k)} \right)  & -\cos\left( \pi \lambda_j^{(k)} \right)   \end{bmatrix} \otimes \ket{\Delta_k}\bra{\Delta_k} \otimes \ket{\psi_j}\bra{\psi_j} \label{eqn:usignal}
        \end{align}
            In other words, $U^{(k)}_\text{signal}$ is a block-encoding of:
        \begin{align}
        \sum_{\Delta_k} \sum_j \left|\cos\left(\pi \lambda_j^{(k)}  \right)\right| \left[\pm e^{\pi i \lambda_j^{k}} \ket{\Delta_k}\ket{\psi_j}\right]  \left[ \bra{\Delta_k}\bra{\psi_j}\right]         \end{align}
            The above is a singular value decomposition of the block-encoded matrix.

        \item Choose the amplification threshold via:
            \begin{align}
                \eta_k &:= \frac{1}{2} - \frac{1}{2^k} \left( \frac{1}{2} + \frac{\alpha}{2} \right) \text{ if } k \geq 1 \\
                \eta_0 &:= \frac{\alpha}{2}
            \end{align}
            Also, let $\delta_\text{amp} < 1$ be an error threshold we will pick later. Now, let $A_{\eta\to\delta_\text{amp}}(x)$ be the polynomial from Lemma~\ref{lemma:amppoly}. Viewing $U^{(k)}_\text{signal}$ as a block-encoding of $\cos\left( \pi \lambda^{(k)}_j  \right)$, apply singular value transformation as in Lemma~\ref{lemma:svt} to $U^{(k)}_\text{signal}$ with a polynomial $\tilde p(x)$ approximating $A_{\eta\to \delta_\text{amp}}(x^2)$ to accuracy $\delta_\text{svt}$, which we also pick later. 

            Lemma~\ref{lemma:svt} applies because $A_{\eta\to \delta_\text{amp}}(x^2)$ is even. Furthermore, $U^{(k)}_\text{signal}$ only has one ancilla qubit, and has the special form where the it implements a reflection on the ancilla (\ref{eqn:usignal}). Thus, the circuit from Lemma~\ref{lemma:svt} also just has one ancilla.

            Call the resulting circuit $U^{(k)}_\text{svt}$, which implements the unitary:
\begin{align}
    U^{(k)}_\text{svt} = \sum_{\Delta_k} \sum_j \begin{bmatrix} \tilde p\left( \cos\left( \pi \lambda^{(k)}_j    \right) \right)  & \hspace{5mm}\cdot\hspace{5mm} \\ \gamma(\lambda^{(k)}_j) & \cdot  \end{bmatrix} \otimes  \left[  \ket{\Delta_k}\ket{\psi_j}\right]  \left[ \bra{\Delta_k}\bra{\psi_j}\right]     
\end{align}
            for some matrix element $\gamma(\lambda^{(k)}_j)$.  
    \end{enumerate}

Now we prove that $U^{(k)}_\text{svt}$ is an iterative phase estimator with phases. It implements the map:
        \begin{align}
            \ket{0}\ket{\Delta_k}\ket{\psi_j} \to  \left(  \tilde p\left( \cos\left(\pi \lambda^{(k)}_j \right) \right)\ket{0} + \gamma(\lambda_j^{(k)} )\ket{1}  \right)  \ket{\Delta_k}\ket{\psi_j} 
        \end{align}
    Note that $U^{(k)}_\text{svt}$ is a block-encoding with only one ancilla, and that ancilla is the output qubit of the map.

    We must show that $U^{(k)}_\text{svt}$ is close in diamond norm to a map that leaves the first qubit as $\ket{\text{bit}_k(\lambda_j)}$ whenever $\Delta_k$ encodes the $k$ least significant bits of an $n$-bit binary expansion of $\lambda_j$.

    To study this map we will proceed through the recipe above, proving statements about the expressions encountered along the way. In step 2. we defined:
    \begin{align}
            \lambda^{(k)}_j &:= 2^{n-k-1} \left(\lambda_j - \frac{\hat\Delta_k}{2^n}\right) + \phi_k,
    \end{align}
    
We discuss the relationship between $ \lambda^{(k)}_j - \phi_k $ and $\text{bit}_k(\lambda_j)$. If $k = 0$ then $\Delta_k = 0$, and due to the rounding promise we find $\lambda_j$ in regions of the form $\frac{m}{2^n} + \left[ \frac{\alpha}{2^n}, \frac{1}{2^n}  \right]$ for integers $m$.  Thus, we find $\lambda^{(0)}_j - \phi_0 $ in regions of the form $\frac{m}{2} + \left[ \frac{\alpha}{2} , \frac{1}{2} \right]$. The function $\text{bit}_k(\lambda_j)$ just indicates the parity of $m$. We can also write this as:
    \begin{align}
        \text{bit}_0(\lambda_j) = \Bigg\{ \begin{array}{ll} 0 & \text{ if } \lambda^{(0)}_j - \phi_0 \in \text{floor}(\lambda^{(0)}_j - \phi_0) +[ \frac{\alpha}{2} , \frac{1}{2} ] \\ 1 & \text{ if } \lambda^{(0)}_j - \phi_0 \in \text{floor}(\lambda^{(0)}_j - \phi_0) + [ \frac{1}{2} + \frac{\alpha}{2} , 1 ]  \end{array} 
    \end{align}

If $k > 0$ then we also can show a similar property. While for $k=0$ we used the rounding promise to guarantee that $\lambda^{(0)}_j - \phi_0 $ only falls into certain regions, for larger $k$ we simply use the fact that $\Delta_k$ has been subtracted off in the definition of $\lambda_j^{(k)}$. That means that the regions where $\text{bit}_{<k}(\lambda_j) = 1$ are no longer possible.  We find:
\begin{align}
    \text{bit}_k(\lambda_j) = \Bigg\{ \begin{array}{ll} 0 & \text{ if } \lambda^{(k)}_j - \phi_k \in \text{floor}(\lambda^{(k)}_j - \phi_k) + \left[ 0 , \frac{1}{2^k} \left( \frac{1}{2} + \frac{\alpha}{2} \right) \right] \\ 1 & \text{ if } \lambda^{(k)}_j - \phi_k \in \text{floor}(\lambda^{(k)}_j - \phi_k) +\left[ \frac{1}{2} , \frac{1}{2} +  \frac{1}{2^k} \left( \frac{1}{2} + \frac{\alpha}{2} \right) \right]  \end{array}  \text{ when } k > 0
\end{align}
Note that the claim for $k > 0$ did not make use of the rounding promise, and is true regardless of if the rounding promise holds. These regions are shown in Figure~\ref{fig:polysketch}.

    In step 3. we defined $U_\text{signal}$ which is a block-encoding of $\cos\left( \pi \lambda^{(k)}_j  \right)$. Later, this will approximately be transformed by singular value transformation via $x \to A_{\eta\to\delta_\text{amp}}(x^2)$, so we employ a trigonometric identity:
    \begin{align}
        \cos^2\left( \pi \lambda_j^{(k)}  \right) = \frac{1 + \cos\left( 2 \pi \lambda^{k}_j  \right)}{2}
    \end{align}
 Clearly this is a probability, and since cosine has period $2\pi$, the $\text{floor}(\lambda_j^{(k)} - \phi_k)$ term does not matter. We argue that this probability is either $\geq 1/2 + \eta_k$ or $\leq 1/2 - \eta_k$ depending on $\text{bit}_k(\lambda_j) $. See Figure~\ref{fig:polysketch}. The idea is that the $\phi_k$ are chosen precisely so that the troughs and peaks of $\cos^2\left(  \pi \lambda^{k}_j  \right)$ line up with the centers of the intervals corresponding to $\text{bit}_k(\lambda_j) = 0$ and $\text{bit}_k(\lambda_j) = 1$ respectively. Then, the nodes of $\cos^2\left( \pi \lambda^{k}_j  \right)$ line up with the midpoints of the gaps between the intervals. A line of slope 2 connecting a trough to a peak then forms an upper/lower bound on  $\cos^2\left( \pi \lambda^{k}_j  \right)$. If we select $ \eta_k := \frac{1}{2} - \frac{1}{2^k} \left( \frac{1}{2} + \frac{\alpha}{2} \right) \text{ if } $ and $\eta_0 := \frac{\alpha}{2}$ then these bounds show that $\cos^2\left( \pi \lambda_j^{(k)}  \right)$ is alternatingly $\leq \frac{1}{2} - \eta_k$ and $\geq \frac{1}{2} + \eta_k$.

\begin{figure}[t]
    \centering
    \includegraphics[width=0.98\textwidth]{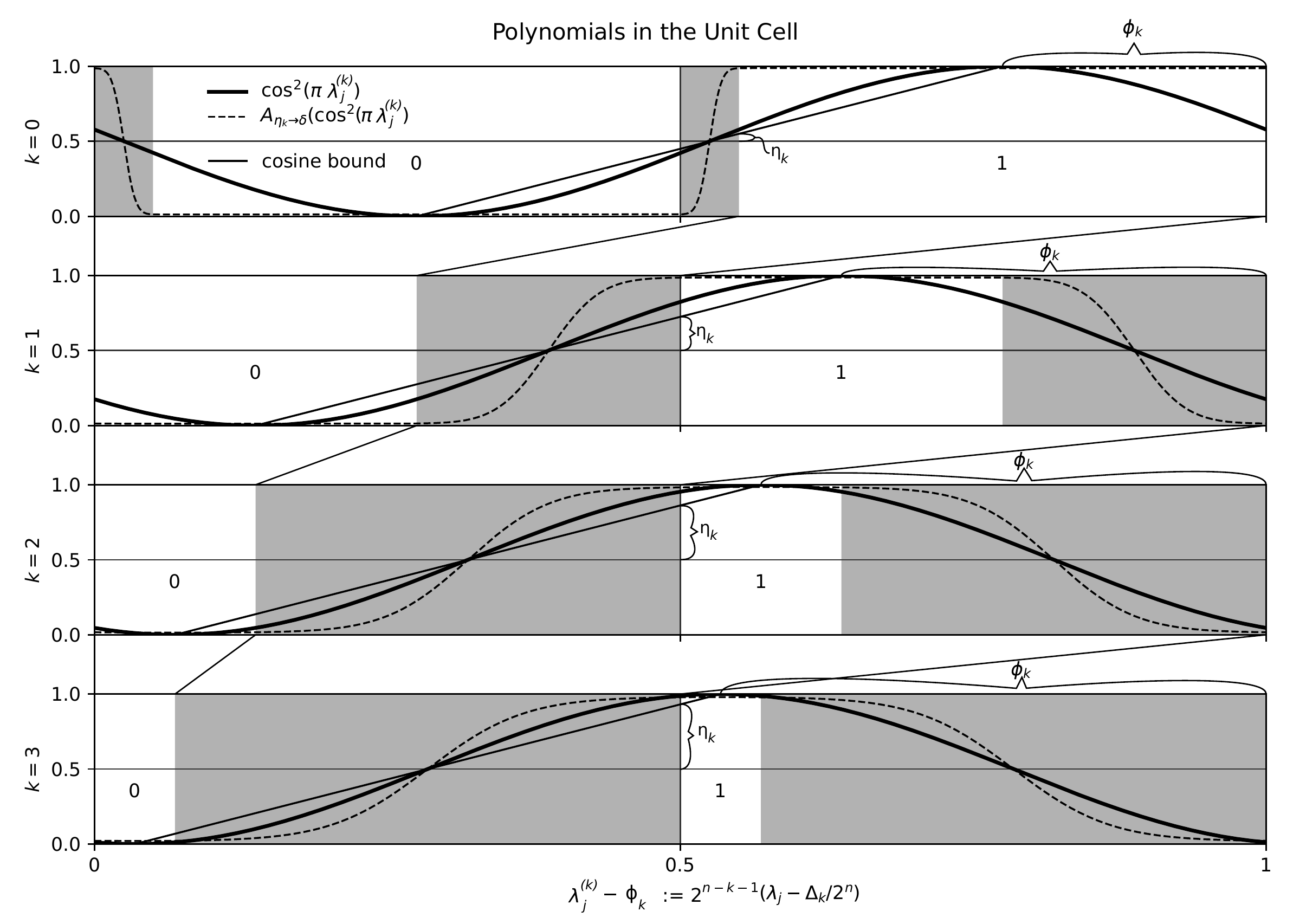}
    \caption{\label{fig:polysketch} Sketch of the probability $ \cos^2\left( \pi \lambda^{(k)}_j  \right)$ which appears in the proof of Theorem~\ref{thm:iterativephaseestimation}. \\ We are guaranteed that $\lambda^{(k)}_j$ can only appear in the un-shaded regions: for $k = 0$ the rounding promise rules out the shaded regions, and for $k \geq 1$ we have subtracted $\Delta_k/2^n$ off of $\lambda_j$, preventing regions where previous bits are 1 .\\ We can see how the probability $ \cos^2\left( \pi \lambda^{(k)}_j  \right)$ is close to 1 if $\text{bit}_k(\lambda_j) = 0$  and close to 0 if $\text{bit}_k(\lambda_j) = 1$. To make this claim precise, we simply fit a line with slope $2$ to the points where the probability intersects $1/2$, and see that this line alternatingly gives upper or lower bounds on the probability. So if $\eta_k/2$ is equal to half the distance between allowed intervals, then the probability is either $\geq 1/2 +\eta_k$ or $\leq 1/2-\eta_k$ in the regions where $\lambda^{(k)}_j$ can appear.\\
    The relationship of this figure to Figure~\ref{fig:projectionpolys} is fairly simple: the probability as a function of $\lambda_j$ can be obtained by just tiling the `unit cell' shown in this figure $2^{n-k-1}$ times. This also makes the ratio $2^{n-k-1}$ between $\lambda_j$ and $\lambda^{(k)}_j$ intuitive.
   }
\end{figure}

Then we have:
    \begin{align}
        \cos^2(\pi \lambda_j^{(k)}) \text{ is } \Bigg\{ \begin{array}{ll} \leq \frac{1}{2} + \eta_k & \text{ if } \text{bit}_k(\lambda_j) = 0 \\[2mm] \geq \frac{1}{2} - \eta_k & \text{ if } \text{bit}_k(\lambda_j) = 1   \end{array}
    \end{align}
By Lemma~\ref{lemma:amppoly}:
    \begin{align}
        A_{\eta_k\to\delta_\text{amp}} \left(\cos^2(\pi \lambda_j^{(k)}) \right) \text{ is } \Bigg\{ \begin{array}{ll} \leq \delta_\text{amp} & \text{ if } \text{bit}_k(\lambda_j) = 0 \\[2mm] \geq 1-\delta_\text{amp} & \text{ if } \text{bit}_k(\lambda_j) = 1   \end{array}
    \end{align}

And finally, since $\tilde p(x)$ approximates $A_{\eta_k\to\delta_\text{amp}}(x^2)$ to accuracy $\delta_\text{svt}$:
    \begin{align}
        \tilde p\left(\cos(\pi \lambda_j^{(k)}) \right) \text{ is } \Bigg\{ \begin{array}{ll} \leq \delta_\text{amp}+\delta_\text{svt} & \text{ if } \text{bit}_k(\lambda_j) = 0 \\[2mm] \geq 1-\delta_\text{amp}-\delta_\text{svt} & \text{ if } \text{bit}_k(\lambda_j) = 1   \end{array}
    \end{align}

The circuit for $U_\text{svt}$ is completely unitary, so the resulting state is normalized. Therefore:
\begin{align}
    \left| \tilde p \left(\cos\left(\pi \lambda^{(k)}_j \right) \right)  \right|^2  + \left| \gamma(\lambda^{(k)}_j )\right|^2 = 1
\end{align}

Using this fact we can reason that if $\tilde p \left(\cos\left(\pi \lambda^{(k)}_j  \right) \right) \leq \delta_\text{amp} + \delta_\text{svt}$ then $|\gamma(\lambda_j^{(k)})|^2 \geq 1 - (\delta_\text{amp} + \delta_\text{svt})$.

Similarly, if $\tilde p \left(\cos\left( \pi \lambda_j^{(k)}  \right) \right) \geq 1 -  \delta_\text{amp} - \delta_\text{svt}$, then:
\begin{align}
    |\gamma(\lambda_j^{(k)})|^2 \leq 1 - (1 - (\delta_\text{amp} + \delta_\text{svt}))^2 \leq 2(\delta_\text{amp} + \delta_\text{svt}) - (\delta_\text{amp} + \delta_\text{svt})^2
\end{align}

Now that we have bounds on the amplitudes of the output state, we can bound its distance to $\ket{\text{bit}_k(\lambda_j)}$ for a favorable choice of $e^{i\varphi_j}$. Say $\text{bit}_k(\lambda_j) = 0$. Then we select $\varphi_j = 0$, so that:
    \begin{align}
        &\left| \left(  \tilde p\left( \cos\left(   \pi \lambda^{(k)}_j   \right) \right)\ket{0} + \gamma(\lambda^{(k)}_j)\ket{1}  \right)  - e^{i\varphi_j} \ket{\text{bit}_k(\lambda_j)} \right| \\
        \leq& \sqrt{  \left|\tilde p \left(\cos\left(\pi \lambda_j^{(k)} \right) \right)- 1\right|^2  +  \left|\gamma(\lambda_\Delta) \right|^2 }\\
        \leq& \sqrt{  (\delta_\text{amp} + \delta_\text{svt})^2 + 2(\delta_\text{amp} + \delta_\text{svt}) - (\delta_\text{amp} + \delta_\text{svt})^2}\\
        \leq& \sqrt{  2(\delta_\text{amp} + \delta_\text{svt}) }
    \end{align}

Otherwise, if $\text{bit}_k(\lambda_j) = 1$, then we define $\varphi_j$ by $ \gamma(\lambda_j^{(k)}) = e^{i\varphi_j}|\gamma(\lambda_j^{(k)})|$. That way:

    \begin{align}
        &\left| \left(  \tilde p\left( \cos\left(   \pi \lambda_j^{(k)}  \right) \right)\ket{0} + \gamma(\lambda_j^{(k)})\ket{1}  \right)  - e^{i\varphi_j} \ket{\text{bit}_k(\lambda_j)} \right| \\
        \leq& \sqrt{  \left|\tilde p \left(\cos\left(\pi \lambda_j^{(k)} \right) \right)\right|^2  +  \left|e^{i\varphi_j}( | \gamma(\lambda_j^{(k)})| - 1) \right|^2 }\\
        \leq& \sqrt{  \left|\delta_\text{amp} + \delta_\text{svt}\right|^2  +  \left|\delta_\text{amp} + \delta_\text{svt}\right|^2 }\\
        \leq& \sqrt{  2(\delta_\text{amp} + \delta_\text{svt}) }
    \end{align}
So either way, the output state is within $\sqrt{  2(\delta_\text{amp} + \delta_\text{svt}) }$ in spectral norm of that of the ideal state. Thus, the unitary map $U_\text{svt}$ is close in spectral norm to some ideal unitary. Invoking Lemma~\ref{lemma:spectraltodiamond}, the distance in diamond norm is at most:
    \begin{align}
       2 \sqrt{  2(\delta_\text{amp} + \delta_\text{svt}) } \leq \delta \label{eqn:errorbound}
    \end{align}
The inequality above holds if we select, for any $m > 0$:
\begin{align}
    \delta_\text{amp} := (1 - 10^{-m}) \cdot \frac{\delta^2}{8} \hspace{1cm} \delta_\text{svt} := 10^{-m}  \cdot \frac{\delta^2}{8}
\end{align}
This solution to the inequality relies on the fact that only $\delta_\text{amp}$ actually enters the query complexity, so if classical resources are cheap but query complexity is expensive then we can make the classical computer do as much work as possible by making $m$ larger.

  Finally, we compute the query complexity. An invocation of $e^{2\pi i \hat\lambda^{(k)}_j}$ requires $2^{n-k-1}$ invocations of $U = e^{2\pi i \hat\lambda}$. By Lemma~\ref{lemma:svt}, the number of queries made by the unitary $U_\text{svt}$ to $U_\text{signal}$ is the degree of the polynomial. By Lemma~\ref{lemma:amppoly}, the polynomial $A_{\eta\to\delta_\text{amp}}(x^2)$ has degree
  $2 \cdot M_{\eta\to\delta}$, so the query complexity is:
 \begin{align} 2^{n-k-1} \cdot  2 \cdot  M_{\eta\to\delta_\text{amp}}\end{align}
\end{proof}


A really nice feature of the coherent iterative phase estimator we present is that it produces no garbage qubits. All singular value transformation is performed on the final output qubit. It does still produce extra phase shifts between the eigenstates, which in some applications may still need to be be uncomputed. However, in applications where phase differences between eigenstates do not matter, like thermal state preparation, we expect that this uncomputation step can be skipped.

To finish the discussion of coherent iterative phase estimation, we stitch the iterative estimator we just defined into a regular phase estimator. In doing so, we also remark on what happens when no rounding promise is present. A summary of the construction is presented in Figure~\ref{fig:circuit}.


\begin{corollary} \label{cor:improvedphaseestimation} \textbf{Improved phase estimation.} The coherent iterative phase estimator from Theorem~\ref{thm:iterativephaseestimation} can be combined with Lemma~\ref{lemma:stitchingiterative} to make a phase estimator with phases with query complexity at most:
    \begin{align}
        O\left(  2^{n} \alpha^{-1} \log( \delta^{-1} )     \right)
    \end{align}
    assuming $\alpha$ is bounded away from $1$ by a constant.

Furthermore, if no rounding promise is given, then the estimator $\delta$-approximates in diamond norm a map:
    \begin{align}
        \ket{0^n}\ket{\psi_j} \to \left( \xi \ket{\text{floor}(2^n\lambda_j)} + \zeta \ket{\lambda'_j}  \right) \ket{\psi_j}
    \end{align}
    for some complex amplitudes $\xi,\zeta$ and $\lambda'_j = \text{floor}(2^n\lambda_j)-1 \text{ mod } 2^n$ is an erroneous estimate. The performance is the same, except that $0 < \alpha < 1$ can be any constant.

\end{corollary}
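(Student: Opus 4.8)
The plan is to handle the two claims in turn, each by feeding the coherent iterative phase estimator of Theorem~\ref{thm:iterativephaseestimation} into the stitching lemma and doing the error/complexity bookkeeping.

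\emph{Query complexity.} Applying Lemma~\ref{lemma:stitchingiterative}, the $k$-th invocation run with per-bit error $\delta 2^{-k-1}$ costs $2^{n-k}M_{\eta_k\to\delta_\text{amp}}$ with $\delta_\text{amp}=\Theta(\delta^2 4^{-k})$ and $\eta_0=\alpha/2$, $\eta_k=\tfrac12-\tfrac1{2^k}(\tfrac12+\tfrac\alpha2)$ for $k\ge1$, so the stitched estimator costs $\sum_{k=0}^{n-1}2^{n-k}M_{\eta_k\to\delta_\text{amp}}$. I would bound this using Lemma~\ref{lemma:amppoly}'s $M_{\eta\to\delta'}=O(\eta^{-1}\log(1/\delta'))$, giving $M_{\eta_k\to\delta_\text{amp}}=O(\eta_k^{-1}(k+\log(1/\delta)))$, together with $\eta_0^{-1}=2\alpha^{-1}$, $\eta_1^{-1}=4(1-\alpha)^{-1}$ (which is $O(1)$ once $\alpha$ is bounded away from $1$), and $\eta_k\ge\tfrac14$ hence $\eta_k^{-1}\le4$ for $k\ge2$. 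Then the $k=0$ term is $O(2^n\alpha^{-1}\log(1/\delta))$, the tail is $O(2^n)\sum_{k\ge1}2^{-k}(k+\log(1/\delta))=O(2^n\log(1/\delta))$ since $\sum_k k2^{-k}<\infty$, and as $\alpha^{-1}\ge1$ this yields the claimed $O(2^n\alpha^{-1}\log(1/\delta))$.

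\emph{No rounding promise.} The essential observation -- already used inside the proof of Theorem~\ref{thm:iterativephaseestimation} -- is that only the $k=0$ estimator invokes the rounding promise: for $k\ge1$ the signal $\cos^2(\pi\lambda_j^{(k)})$ is forced $\eta_k$-away from $\tfrac12$ purely because $\Delta_k/2^n$ has been subtracted off, as long as $\Delta_k$ holds the low $k$ bits of whatever value is being decoded. So fix $\lambda_j$, set $x:=\text{floor}(2^n\lambda_j)$, and track the stitched circuit bit by bit. At $k=0$ the amplifying polynomial acts on $\cos^2(\pi\lambda_j^{(0)})$; since $\phi_0,\eta_0$ were tuned to the promise, this value lies outside $(\tfrac12-\tfrac\alpha2,\tfrac12+\tfrac\alpha2)$ -- so the bit comes out reliably equal to $\text{bit}_0(x)$ -- unless $\lambda_j$ lies in the window $(\tfrac x{2^n},\tfrac x{2^n}+\tfrac\alpha{2^n})$, in which case the output is $\delta$-close to $\ket0\ket{\psi_j}\to(\xi_0\ket0+\zeta_0\ket1)\ket{\psi_j}$ with one branch carrying $\ket{\text{bit}_0(x)}$ and the other $\ket{\text{bit}_0(x)\oplus1}=\ket{\text{bit}_0(x-1)}$. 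Propagating both branches: in the $\text{bit}_0(x)$ branch every later $\Delta_k$ is the correct low-order part of $x$, so the promise-free half of the Theorem~\ref{thm:iterativephaseestimation} analysis applies step by step and the branch terminates in $\ket x$. In the $\text{bit}_0(x)\oplus1$ branch, a short computation shows the algorithm simply decodes $x-1$: writing $\lambda_j=\tfrac{x+s}{2^n}$ with $s\in(0,\alpha)$ and $\Delta_k=(x-1)\bmod2^k$, one has $x-\Delta_k\equiv1\pmod{2^k}$, hence $\lambda_j^{(k)}=\tfrac q2+\tfrac12+\tfrac{1+2s-\alpha}{2^{k+2}}$ with $q=\lfloor(x-1)/2^k\rfloor$, so $\cos^2(\pi\lambda_j^{(k)})$ equals $\sin^2\!\big(\pi\tfrac{1+2s-\alpha}{2^{k+2}}\big)$ or its complement according to the parity of $q=\text{bit}_k(x-1)$; the elementary bound $\cos(\pi v)\ge1-2v$ on $[0,\tfrac12]$ (concavity of $\cos$ above its chord on $[0,\tfrac\pi2]$) then gives $\sin^2(\pi\tfrac{1+\alpha}{2^{k+2}})\le\tfrac{1+\alpha}{2^{k+1}}=\tfrac12-\eta_k$, so this signal stays $\eta_k$-away from $\tfrac12$ and the branch deterministically emits the bits of $x-1$, ending in $\ket{x-1\bmod2^n}$. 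Assembling the $n$ steps with the triangle inequality for the diamond norm exactly as in Lemma~\ref{lemma:stitchingiterative} (now with the branched ideal map just described) shows the full circuit is $\delta$-close to $\ket{0^n}\ket{\psi_j}\to(\xi\ket x+\zeta\ket{x-1\bmod2^n})\ket{\psi_j}$; since every $\eta_k>0$ for any constant $\alpha\in(0,1)$, the complexity estimate is unchanged, and "$\alpha$ bounded away from $1$" is needed only to keep the hidden constant $\alpha$-independent.

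The step I expect to be the main obstacle is the wrong-branch calculation: confirming that committing to the incorrect least-significant bit does not cascade into a \emph{third} candidate estimate, i.e.\ that every later $\cos^2(\pi\lambda_j^{(k)})$ stays $\eta_k$-away from $\tfrac12$. This is the content of carrying the $x\mapsto x-1$ shift through the definitions of $\lambda_j^{(k)}$, $\phi_k$, and $\eta_k$ for general $k$, and it collapses to the single inequality $\cos(\pi v)\ge1-2v$ above; the geometric-series bound on the query complexity is routine by comparison.
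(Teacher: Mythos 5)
Your proposal is correct, and the first half (the query-complexity bound) is essentially identical to the paper's proof: split the sum at $k=0$, use $\eta_0 = \alpha/2$, observe $\eta_k$ is bounded below by a constant for $k\geq 1$ once $\alpha$ is bounded away from $1$, and sum the geometric tail. For the second half (no rounding promise), the paper gives only an informal argument — it states that "the rest of the algorithm proceeds to measure $\lambda_j - 1/2^n$ instead" and that the output therefore becomes $\text{floor}(2^n\lambda_j)-1$, without verifying that the later iterative estimators actually remain deterministic on the wrong branch. You make this precise: writing $\lambda_j = (x+s)/2^n$ with $s\in[0,\alpha)$, you track $\Delta_k = (x-1)\bmod 2^k$, derive $\lambda_j^{(k)} = q/2 + 1/2 + \tfrac{1+2s-\alpha}{2^{k+2}}$ with $q = \lfloor(x-1)/2^k\rfloor$, and show via $\sin^2(\pi v)\leq 2v$ that $\cos^2(\pi\lambda_j^{(k)})$ stays $\eta_k$-away from $1/2$ so each later bit deterministically decodes $x-1$ (and symmetrically the correct-$\Delta_k$ branch decodes $x$ since $|2s-1-\alpha|\leq 1+\alpha$). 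This is a genuine strengthening: it rules out a cascade into a third candidate estimate, which the paper asserts but does not prove, and your single inequality $\cos(\pi v)\geq 1-2v$ is exactly the right tool. A small stylistic note: you write "$q=\text{bit}_k(x-1)$" where you mean "the parity of $q$ equals $\text{bit}_k(x-1)$", since $q$ itself is the quotient $\lfloor(x-1)/2^k\rfloor$; the surrounding context makes the intent clear, but it is worth fixing.
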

\begin{proof} Write  $\eta_k := \frac{1}{2} - \frac{1}{2^k} \left( \frac{1}{2} + \frac{\alpha}{2} \right) $ if  $ k \geq 1$ and $\eta_0 := \frac{\alpha}{2}$ if $k=0$, and, following Lemma~\ref{lemma:stitchingiterative}, demand an accuracy of $\delta_{\text{amp},k} := (1-10^{-m}) ( \delta 2^{-k-1} )^2  /8 $ for the $k$'th bit. Recall from Lemma~\ref{lemma:amppoly} that $M_{\eta_k\to\delta_\text{amp}} \in O\left(  \eta_k^{-1}\log(\delta_\text{amp}^{-1})   \right)$. Then, the overall query complexity is:
    \begin{align}
        \sum_{k=0}^{n-1} 2^{n-k} \cdot M_{\eta_k\to\delta_{\text{amp},k}}  &\in O\left( \sum_{k=0}^{n-1} 2^{n-k}  \eta_k^{-1}  \log( \delta_{\text{amp},k}^{-1} )   \right)\\
        &= O\left( \sum_{k=0}^{n-1} 2^{n-k}  \eta_k^{-1}  \log(  2^{k+1} \delta^{-1} )   \right)
    \end{align}
    When $ k = 0$ we have $\eta_0 = \frac{\alpha}{2} $, and when $k > 0$ we have $\eta_k > \frac{1-\alpha}{4}$ which is bounded from below by a constant. We can use this to split the sum:
    \begin{align}
         &\leq  O\left(  2^{n-0} \eta_0^{-1} \log( 2^{0+1}  \delta^{-1} ) + \sum_{k=1}^{n-1} 2^{n-k} \eta_k^{-1} \log( 2^{k+1} \delta^{-1} ) \right)\\
        &\leq O\left( 2^n \alpha^{-1} \log(\delta^{-1}) + \sum_{k=1}^{n-1} 2^{k} (k + \log(\delta^{-1})) \right)\\
        &\leq O\left( 2^n \alpha^{-1} \log(\delta^{-1}) + 2(2^n - n - 1) + (2^n - 2) \log(\delta^{-1}) \right)\\
        &\leq O\left(  2^{n} \alpha^{-1} \log( \delta^{-1} )     \right)
    \end{align}

    This completes the runtime analysis. Now we turn to the case when no rounding promise is present. Notice that when $k \geq 1$, the regions where $\lambda_j$ is assumed not to appear are guaranteed by the previous estimators \emph{definitely} outputting $1$ for previous bits regardless of the promise. Thus, the only bit that can be wrong is the first bit. When an eigenvalue $\lambda_j$ falls into a region disallowed by the rounding promise, the first bit will be some superposition $\xi\ket{0} + \zeta\ket{1}$.
    
    Flipping the final bit of an estimate in general results in an error of $\pm 1$. However, recall that when estimating future bits the value of $\Delta_k/2^n$ is subtracted off from $\lambda_j$. That means that when we erroneously measure a $1$, the rest of the algorithm proceeds to measure $\lambda_j - 1/2^n$ instead. As a result, if the first bit is wrong, then the algorithm will output $\text{floor}(2^n \lambda_j) - 1$ instead. Since the algorithm is periodic in $\lambda_j$ with period $1$, the output will be $2^n-1$ if an error occurs when $\text{floor}(2^n \lambda_j) = 0$.

\end{proof}

    Notice that if we had allocated the error evenly between the $n$ steps, then we would have incurred an extra $ O(2^n  \log n)$ term in the above. Spreading the error via a geometric series avoids this, and we find that we obtain better constant factors with this choice as well. This is because the $k=0$ term dominates, so we want to make $\delta_{\text{amp},k}$ as large as possible.


\begin{figure}[h]
    \centering
    \includegraphics[width=\textwidth]{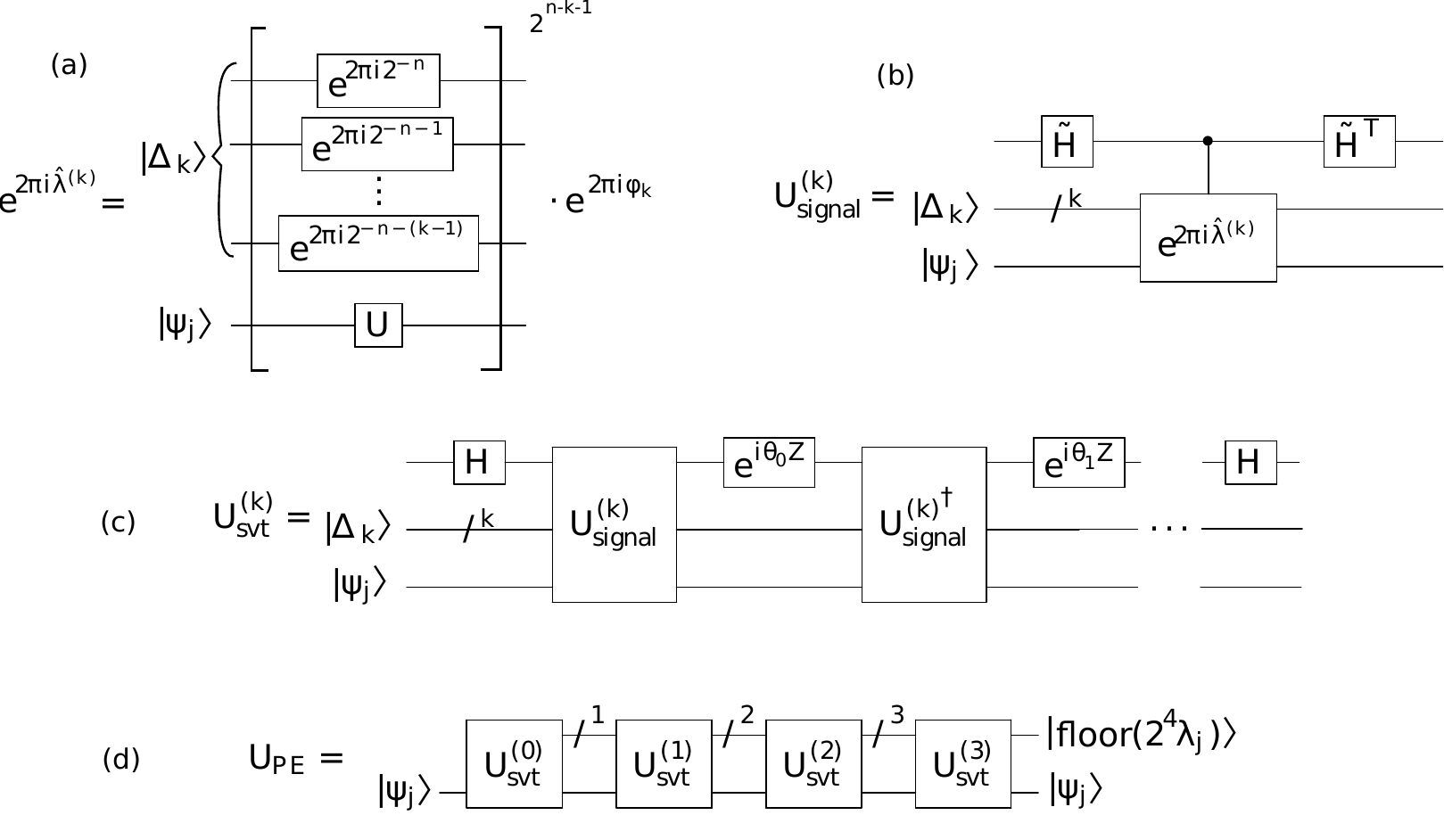}
    
    \caption{ \label{fig:circuit} Circuit diagram for the algorithm in Corollary~\ref{cor:improvedphaseestimation}. (a) and (b) are depictions of (\ref{eqn:lambdakunitary}) and (\ref{eqn:usignal_def}) respectively. (c) depicts the singular value transformation circuit obtained from Lemma~\ref{lemma:svt} which interperses alternating applications of $U^{(k)}_\text{signal}$ and $U_\text{signal}^{(k)\dagger}$  with phase rotations $e^{i \theta_j Z}$, where the angles $\theta_j$ encode the polynomial approximating $A_{\eta\to\delta_\text{amp}}(x^2)$. (d) depicts the final circuit assembled via Lemma~\ref{lemma:stitchingiterative}, showing how each iterative estimator's output becomes part of the $\ket{\Delta_k}$ register for the next, and how that register finally becomes the output.  } 
\end{figure}

\subsection{Coherent Iterative Energy Estimation}   \label{sec:energyestimation}

While for phase estimation we are given access to $\sum_j e^{2\pi i \lambda_j} \ket{\psi_j}\bra{\psi_j}$, for energy estimation we have a block-encoding of $\sum_j \lambda_j \ket{\psi_j}\bra{\psi_j}$. For phase estimation we could relatively easily synthesize a cosine in the eigenvalues, just by taking a linear combination with the identity. But for energy estimation we must build the cosine directly.

To do so we leverage a tool employed by \cite{1806.01838} to perform Hamiltonian simulation. The Jacobi-Anger expansion yields highly efficient polynomial approximations to $\sin(tx)$ and $\cos(tx)$. To perform Hamiltonian simulation one then takes the linear combination $\cos(tx) + i\sin(tx)$. However, we only need the cosine component.


\begin{lemma} \textbf{Jacobi-Anger expansion.} \label{lemma:jacobianger} For any $t \in \mathbb{R}^+$, and any $\eps \in (0,1/e)$, let:
    \begin{align}
        r(t',\eps') &:= \text{ the solution to } \eps' = \left(\frac{t'}{r}\right)^r \text{ such that } r \in (t', \infty),\label{eq:rdef}\\
        R &:= \left\lfloor r\left(\frac{et}{2} , \frac{5}{4}\eps\right)/2 \right\rfloor \\
        J_k(t) &:= \text{the } k\text{'th Bessel function of the first kind}\\
        T_k(t) &:= \text{the } k\text{'th Chebyshev polynomial of the first kind}\\
        p_{\text{cos},t}(x) &:= J_0(t) + 2\sum_{k=1}^R (-1)^k J_{2k}(t) T_{2k}(x) 
    \end{align}
    Then $p_{\text{cos},t}(x)$ is an even polynomial of degree $2R$ such that for all $x \in [-1,1]$:
        \begin{align}
            \left| \cos(tx) - p_{\text{cos},t}(x) \right| \leq \eps.
        \end{align}
    Furthermore:
        \begin{align}
            r(t',\eps') \in \Theta\left(t' + \frac{\log \eps'^{-1}}{\log(\log(\eps'^{-1}))}\right)
        \end{align}
\end{lemma}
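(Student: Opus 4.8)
The statement splits into two independent halves: the approximation bound $|\cos(tx) - p_{\cos,t}(x)| \le \eps$ on $[-1,1]$, and the asymptotic estimate for $r(t',\eps')$. I would handle them in that order. For the approximation bound, I would start from the classical Jacobi--Anger identity $\cos(z\cos\phi) = J_0(z) + 2\sum_{k\ge1}(-1)^k J_{2k}(z)\cos(2k\phi)$, which follows from the generating function $e^{iz\sin\phi} = \sum_{m\in\mathbb{Z}}J_m(z)e^{im\phi}$ by the substitution $\phi\mapsto\pi/2-\phi$ and extracting the even harmonics of the real part. Setting $x = \cos\phi\in[-1,1]$ and using $T_{2k}(\cos\phi) = \cos(2k\phi)$ converts this into the uniformly convergent expansion $\cos(tx) = J_0(t) + 2\sum_{k\ge1}(-1)^k J_{2k}(t)T_{2k}(x)$, whose truncation at $k=R$ is exactly $p_{\cos,t}$; in particular $p_{\cos,t}$ is an even polynomial of degree $2R$. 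Since $|T_{2k}|\le1$ on $[-1,1]$, the truncation error is at most $2\sum_{k>R}|J_{2k}(t)|$, so the whole task reduces to bounding this Bessel tail.

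For the tail I would use the standard bound $|J_n(t)|\le (t/2)^n/n!$ together with Stirling's inequality $n!\ge\sqrt{2\pi n}\,(n/e)^n$ to write $|J_{2k}(t)|\le \frac{1}{\sqrt{4\pi k}}(t'/(2k))^{2k}$ with $t' := et/2$, keeping the $1/\sqrt{k}$ factor for a little extra room. The function $g(\rho) := (t'/\rho)^\rho$ is strictly decreasing for $\rho>t'/e$, and a short convexity estimate — using that $h(u):=(1+u)\ln(1+u)$ satisfies $h'\ge1$ on $[0,\infty)$ — shows $g(m+2)/g(m)\le e^{-2}$ whenever $m\ge t'$. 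Because $R=\lfloor r/2\rfloor$ with $r := r(t',\tfrac54\eps)>t'$, every tail index $m=2k$ satisfies $m\ge 2R+2>r>t'$, so the tail is dominated by a geometric series with ratio $\le e^{-2}$ whose leading term is $g(2R+2)\le g(r) = \tfrac54\eps$; collecting the constants then yields a bound below $\eps$. The degenerate case $R=0$ (where $p_{\cos,t}=J_0(t)$, which forces $t$ to be small) has to be treated by a separate direct estimate on the power series of $\cos(tx)-J_0(t)$. I expect this final stage — verifying that all the constants really combine to give $\le\eps$ uniformly over every $t\in\mathbb{R}^+$ and $\eps\in(0,1/e)$, in particular in the borderline regime where $2R$ barely exceeds $r$ and at the small-$t$ endpoint — to be the main obstacle: the structure is routine, but it is precisely the slack built into the $\tfrac54$ factor and the flooring in $R$ that makes it go through, and I would follow the analysis of \cite{1806.01838} here rather than re-optimize it.

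For the asymptotic estimate, take logarithms in $\eps'=(t'/r)^r$ to get $r\ln(r/t') = \ln(\eps'^{-1}) =: L$; writing $r = t'e^v$ turns this into $ve^v = L/t'$, so $r = t'e^{W(L/t')} = L/W(L/t')$ in terms of the Lambert $W$ function. I would then split into two regimes. When $L = O(t')$ we have $W(L/t') = \Theta(1)$, hence $r = \Theta(t')$, which agrees with $\Theta(t' + L/\ln L)$ since $L=O(t')$ also makes $L/\ln L = O(t')$. When $L$ is large compared to $t'$, the standard asymptotics $W(z) = \ln z - \ln\ln z + o(1) = \Theta(\ln z)$ give $r = \Theta\!\left(L/\ln(L/t')\right)$, which in the regimes relevant to the application ($t'$ fixed with $\eps'\to0$, or $\eps'$ fixed with $t'\to\infty$) equals $\Theta(L/\ln L)$; combining the two regimes yields $r(t',\eps') = \Theta\!\left(t' + \ln(\eps'^{-1})/\ln\ln(\eps'^{-1})\right)$ as claimed. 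This part is a short computation once the tail bound is in hand, so I would keep it brief.
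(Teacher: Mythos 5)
The paper's own ``proof'' of this lemma is a one-line citation to Lemmas~57 and~59 of \cite{1806.01838}, so there is no in-paper argument to compare against; what you have done is reconstruct the argument that the cited reference gives. Your reconstruction is correct and follows the same route: the Jacobi--Anger identity combined with $T_{2k}(\cos\phi)=\cos(2k\phi)$ gives the exact Chebyshev series, the tail is controlled via $|J_n(t)|\le(t/2)^n/n!$ plus Stirling to produce $g(\rho)=(t'/\rho)^\rho$ factors, the ratio estimate $(m+2)\ln(1+2/m)\ge 2$ (which your convexity argument correctly delivers) gives geometric decay with ratio $e^{-2}$ once $m\ge t'$, and the numerical check $\tfrac{2}{\sqrt{4\pi}}\cdot\tfrac{5/4}{1-e^{-2}}\approx 0.82<1$ makes the constants close. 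You correctly identify the boundary case $R=0$ (which forces $t<4/e$) as needing a separate low-order estimate, and the Lambert-$W$ treatment of the asymptotics for $r$ is the standard way to get the $\Theta\bigl(t'+\log(\eps'^{-1})/\log\log(\eps'^{-1})\bigr)$ statement. In short: you have supplied the details that the paper outsources to \cite{1806.01838}, and they check out.
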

\begin{proof} These results are shown in Lemma~57 and Lemma~59 of \cite{1806.01838}, outlined in their section 5.1.
\end{proof}


Again, computation of the degree of the Jacobi-Anger expansion is a bit complicated, but the complexity is worth it due to the method's high performance. Our approach is to first synthesize $\cos\left(\pi \lambda^{(k)}_j\right)$ to obtain a signal that oscillates to indicate $\text{bit}_k(\lambda_j)$, and then apply $A_{\eta\to\delta}(x^2)$ to amplify the signal to $0$ or $1$, as shown in Figure~\ref{fig:projectionpolys}.

Actually, one might observe that synthesizing $\cos\left(\pi \lambda^{(k)}_j\right)$ first is not necessary to make polynomials that look like those in Figure~\ref{fig:projectionpolys}. Instead one can take an approach similar to the one used for making rectangle functions in \cite{1707.05391}: simply shift, scale and add several amplifying polynomials $A_{\eta\to\delta}(x)$ to make the desired shape. None of these operations affect the degree, so this approach also yields the same asymptotic scaling $O(2^n \alpha^{-1} \log(\delta^{-1}))$ as phase estimation. We will see in Corollary~\ref{cor:improvedenergyestimation} that the method using the Jacobi-Anger expansion actually achieves the worse scaling of $O( \alpha^{-1} \log(\delta^{-1}) (2^n + \log(\alpha^{-1})) )$. 

The reason why we present the approach using the Jacobi-Anger expansion, despite it having worse asymptotic scaling, is that in the regime of interest ($n\approx 10, \alpha \approx 2^{-10}$) we numerically find that the Jacobi-Anger expansion actually performs better. There may be a regime where it is better to remove the Jacobi-Anger expansion from the construction, in which case the algorithm is easily adapted.

The rest of the construction of Theorem~\ref{thm:iterativeenergyestimation} strongly resembles coherent iterative phase estimation, so much so that we can re-use parts of the proof of Theorem~\ref{thm:iterativephaseestimation}. One further difference is that this estimator now has garbage, because we have no guarantee that the block-encoding of the Hamiltonian only has one ancilla.

\begin{theorem} \label{thm:iterativeenergyestimation} \textbf{Coherent Iterative Energy Estimation.} Say the block-encoding of $H$ requires $a$ ancillae, that is, $U_H$ acts on $\mathbb{C}^{2^a} \otimes \mathcal{H}$. Then there is an iterative energy estimator with phases and $a+n+3$ qubits of garbage with query complexity:
\begin{align}
    4 \cdot M_{  (1-10^{-m_\text{cos}})\eta_k \to \delta_\text{amp} }  \cdot \left\lfloor  r\left( \frac{e}{2} \pi  2^{n-k} , \frac{5}{4} \frac{\eta_k}{2}  10^{-m_\text{cos}}\right) \right\rfloor
\end{align}
    where $M_{\eta\to\delta}$ and $\eta_k$ are as in Lemma~\ref{lemma:amppoly} and  $\delta_\text{amp}$ can be chosen to be $(1 - 10^{-m_\text{svt}})\delta^2 / 8$ for any $ m_\text{svt} > 0$, and we can choose any $m_\text{cos} > 0$.
\end{theorem}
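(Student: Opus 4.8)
The plan is to follow the proof of Theorem~\ref{thm:iterativephaseestimation} almost line for line, changing only the step that synthesizes the oscillating signal and accounting for the fact that the block-encoding now inherits the $a$ ancillae of $U_H$. I would reuse the definitions of $\phi_k$, $\eta_k$, and $\lambda^{(k)}_j$ from that theorem verbatim, so that the geometric facts summarized in Figure~\ref{fig:polysketch} --- that $\cos^2(\pi\lambda^{(k)}_j)$ is $\le \tfrac12+\eta_k$ when $\mathrm{bit}_k(\lambda_j)=0$ and $\ge\tfrac12-\eta_k$ when $\mathrm{bit}_k(\lambda_j)=1$, using the rounding promise only at $k=0$ --- carry over with no new argument. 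Likewise the diamond-norm bookkeeping at the end is identical to that of Theorem~\ref{thm:iterativephaseestimation}: bound the output amplitudes, pick a favorable $e^{i\varphi_j}$, convert spectral to diamond distance via Lemma~\ref{lemma:spectraltodiamond}, and choose $\delta_\text{amp}=(1-10^{-m_\text{svt}})\delta^2/8$ with the leftover $10^{-m_\text{svt}}\delta^2/8$ absorbing the singular-value-transformation synthesis error.

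The one genuinely new ingredient is building a block-encoding of $\sum_j\cos(\pi\lambda^{(k)}_j)\ket{\psi_j}\bra{\psi_j}$ out of $U_H$ instead of out of a phase oracle. First I would use a linear combination of unitaries \cite{1501.01715,1511.02306} to form a block-encoding of a rescaled and $\Delta_k$-shifted copy of $H$ --- concretely of $\tfrac12 H - \tfrac{\hat\Delta_k}{2^{n+1}}I + \tfrac{\phi_k}{2^{n-k}}I$, controlled on the $\ket{\Delta_k}$ register --- chosen precisely so that applying the Jacobi--Anger cosine polynomial $p_{\cos,t}$ of Lemma~\ref{lemma:jacobianger} with $t := 2^{n-k}\pi$ via singular value transformation (Lemma~\ref{lemma:svt}, the $m>1$ case) yields $\cos(\pi\lambda^{(k)}_j)$ as a singular value. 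This is the sense in which "only the cosine component is needed": both the $\Delta_k$-dependent subtraction and the constant offset $\phi_k$ are folded into the affine image of $H$, so no sine expansion appears, and the factor-of-two rescaling of $H$ needed to keep the block-encoded operator norm $\le 1$ is exactly what turns the naive frequency $2^{n-k-1}\pi$ into $2^{n-k}\pi$, producing the argument $\tfrac{e}{2}\pi 2^{n-k}$ inside $r(\cdot,\cdot)$ in the query bound. Everything downstream --- apply the even amplifying polynomial $A_{\eta_k\to\delta_\text{amp}}(x^2)$ of Lemma~\ref{lemma:amppoly} by singular value transformation, getting an approximate block-encoding of the projector $\sum_j\mathrm{bit}_k(\lambda_j)\ket{\psi_j}\bra{\psi_j}$ --- is identical to Theorem~\ref{thm:iterativephaseestimation}, with the one change that the amplification gap is shrunk to $(1-10^{-m_\text{cos}})\eta_k$ so that a Jacobi--Anger truncation error of $\tfrac{\eta_k}{2}10^{-m_\text{cos}}$ (doubled by $x\mapsto x^2$) still leaves a genuine gap above/below $\tfrac12$. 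The query count is then the degree $2M_{(1-10^{-m_\text{cos}})\eta_k\to\delta_\text{amp}}$ of $A(x^2)$ times (twice) the degree of $p_{\cos,2^{n-k}\pi}$, i.e.\ $\approx 4\,M_{(1-10^{-m_\text{cos}})\eta_k\to\delta_\text{amp}}\lfloor r(\tfrac{e}{2}\pi 2^{n-k},\tfrac54\tfrac{\eta_k}{2}10^{-m_\text{cos}})\rfloor$ up to the usual floor slack.

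Extracting $\mathrm{bit}_k(\lambda_j)$ into the output qubit is where the energy case diverges from the phase case. In Theorem~\ref{thm:iterativephaseestimation} the block-encoding of the projector had a single ancilla, which could be reinterpreted directly as the answer qubit; here the block-encoding carries $a$ ancillae from $U_H$ plus the LCU flag, the SVT ancilla, and an $n$-qubit register used to realize the fractional $\Delta_k$-dependent shift. So instead I would invoke the block-measurement channel of Theorem~\ref{thm:blockmeasure}, which maps $\ket0\ket\psi\mapsto\ket1\Pi\ket\psi+\ket0(I-\Pi)\ket\psi$; this is exactly the step that turns those ancillae into a $j$-dependent garbage state and leaves a residual $j$-dependent phase as in Theorem~\ref{thm:iterativephaseestimation}, hence "with phases and $a+n+3$ qubits of garbage". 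Its channel error adds to the budget and is handled together with the SVT synthesis error by the choice of $m_\text{svt}$.

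The main obstacle is the error and resource bookkeeping, not any single conceptual step: there are now four independent approximations --- the Jacobi--Anger cosine, the amplifying polynomial's $\delta_\text{amp}$, its singular-value synthesis, and the block-measurement channel --- each of which must be budgeted separately so the composite diamond-norm error stays below $\delta$ while the polynomial degrees, hence the query count, stay sharp. In parallel one must check the norm-$\le 1$ constraint really does survive the $\Delta_k$-controlled affine transformation of $H$ (verifying that $\lambda_j-\Delta_k/2^n$ never leaves $[0,1)$ when $\Delta_k$ holds the previously computed bits, which is what forces the $\tfrac12$ rescaling and the attendant doubling of $t$), and that the ancilla count propagated through the two linear combinations, the singular value transformation, and the block-measurement is exactly $a+n+3$. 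Beyond these two verifications the argument is a transcription of Theorem~\ref{thm:iterativephaseestimation} with Lemma~\ref{lemma:jacobianger} substituted for the linear-combination-with-identity trick.
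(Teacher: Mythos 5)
Your construction of $H^{(k)}$ via LCU (shifting by $\Delta_k$, scaling by $1/2$, adding the $\phi_k$ offset), the use of the Jacobi--Anger expansion at frequency $t=\pi 2^{n-k}$, the choice $\delta_\text{cos}\sim\eta_k 10^{-m_\text{cos}}$, and the application of the amplifying polynomial via SVT all line up with the paper's proof. But the final step --- extracting $\text{bit}_k(\lambda_j)$ from the block-encoding --- is where your plan goes off track, and the gap is not cosmetic.

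You propose to invoke Theorem~\ref{thm:blockmeasure} (block-measurement) and claim this ``turns those ancillae into a $j$-dependent garbage state.'' That is the opposite of what block-measurement does: the circuit (\ref{eq:blockmeasure}) calls $U_\Pi$ \emph{and} $U_\Pi^\dagger$ and then discards the ancillae, so it (i) doubles the query count --- giving $8M\lfloor r\rfloor$ rather than the stated $4M\lfloor r\rfloor$ --- (ii) leaves no garbage register, contradicting the ``$a+n+3$ qubits of garbage'' in the statement, and (iii) produces a quantum channel rather than a unitary, which is fatal because Corollary~\ref{cor:improvedenergyestimation} feeds this estimator into Lemma~\ref{lemma:iterativegarbageremove}, and that lemma requires unitarity so it can apply $\Lambda^{-1}$. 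The paper instead applies a single query-free gate, the modified Toffoli $I\otimes\ket{0^{m}}\bra{0^{m}} + X\otimes(I-\ket{0^{m}}\bra{0^{m}})$, writing the success/failure of the block-encoding into a fresh output qubit and leaving the block-encoding ancillae in place as garbage. That gate is unitary, adds nothing to the query complexity, and defers the uncomputation to Lemma~\ref{lemma:iterativegarbageremove}, exactly as the modular framework is designed to do (the paper even says explicitly that it avoids invoking Theorem~\ref{thm:blockmeasure} in a black-box way for this reason). There is also a small ancilla miscount: the three-term LCU for $H^{(k)}$ needs two control qubits, not one, and the $W_k$ block-encoding contributes $n$ ancillae, so the tally is $a + n + 2 + 1 = a+n+3$ rather than the $a+n+2$ you describe; and your error budget has no room for an extra block-measurement channel error on top of $\delta_\text{amp}+\delta_\text{svt}$, which already saturates the $\delta^2/8$ allowance.
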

\begin{proof}
        We construct the estimator as follows:
    \begin{enumerate}
     \item Let $W_k$ be a hermitian matrix on $k$ qubits defined by:
            \begin{align}
                W_k := 2 \sum_{\Delta_k =0}^{2^k - 1} \frac{\Delta_k}{2^{n}}  \ket{\Delta_k}\bra{\Delta_k}
            \end{align}
            $W_k$ has a block-encoding which can be constructed as follows: First, prepare $\ket{+^{n-1}}$:
            \begin{align}
                \ket{\Delta_k} \to \ket{\Delta_k}\ket{+^{n-1}} = \frac{1}{\sqrt{2^{n-1}}} \sum_{x=0}^{2^{n-1}-1} \ket{\Delta_k} \ket{x}
            \end{align}
            Next, observe that $\Delta_k \leq 2^{k} -1 \leq 2^{n-1} - 1$. Into an ancilla register compute $\ket{x < \Delta_k}$, and uncompute any garbage necessary to do so.
            \begin{align}
           \to \frac{1}{\sqrt{2^{n-1}}} \sum_{x = 0}^{2^{n-1}-1} \ket{\Delta_k}\ket{x}   \ket{x < \Delta_k}
        \end{align}
            Then postselect that the final register is in the $\ket{1}$ state:
            \begin{align}
                \to \frac{1}{\sqrt{2^{n-1}}} \ket{\Delta_k} \sum_{x = 0}^{\Delta_k -1 } \ket{x}
            \end{align}
                Finally, postselect that the $x$ register is in the $\ket{+^{n-1}}$ state:
            \begin{align}
                \to \ket{\Delta_k}\sum_{x = 0}^{\Delta_k -1 }  \frac{1}{2^{n-1}}     =  \frac{\Delta_k}{2^{n-1}} \ket{\Delta_k} 
            \end{align}
            This process makes use of $n-1$ ancillae initialized and postselected to $\ket{+}$, and one more ancilla that is postselected to $\ket{1}$.

        \item Use linear combinations of unitaries to construct a block-encoding of:
            \begin{align}
                H^{(k)} := \frac{1}{2}\cdot I \otimes H - \frac{1}{4}\cdot W_k \otimes I +  \frac{1}{4}\cdot (4 \phi_k 2^{k-n}) \cdot I \otimes I
            \end{align}
            where the $\phi_k$ are selected just as in as in Theorem~\ref{thm:iterativephaseestimation}. Observe that $\phi_k < 1/2$, so therefore $4 \phi_k 2^{k-n}$ is a probability which can be block-encoded. Since the block-encoding of $H$ has $a$ ancillae, and that of $W_k$ has $n$ ancillae, and the three terms in the linear combination need two control qubits, the block-encoding of $H^{(k)}$ has $a+n+2$ ancillae.

        \item Use Lemma~\ref{lemma:jacobianger} to construct a polynomial approximation $p_{\text{cos}, \pi 2^{n-k} }(x)$ of $\cos( \pi 2^{n-k}   x)$ to accuracy $2\delta_\text{cos}$, to be picked later. 

            Use Lemma~\ref{lemma:amppoly} to construct a polynomial $A_{(\eta-\delta_\text{cos})\to\delta_\text{amp}}$. We will pick $\delta_\text{amp}$ later and select $\eta_k$ just as in Theorem~\ref{thm:iterativephaseestimation}.

            Finally, use Lemma~\ref{lemma:svt} to construct $U_\text{svt}$, a block-encoding of $\tilde p(H^{(k)})$ which approximates the even polynomial:
            \begin{align}
                \tilde p(x) \approx A_{(\eta_k - \delta_\text{cos})\to\delta_\text{amp}}\left( p^2_{\text{cos},\pi 2^{n-k}}\left( x \right)  \right)
            \end{align}
           To perform singular value transformation we needed one extra ancilla, so $U_\text{svt}$ has $a+n+3$ ancillae - these are the garbage output of this map.
        \item Use the modified Toffoli gate $I \otimes \ket{0}\bra{0} + X \otimes (I - \ket{0}\bra{0}) $ to conditionally flip the output qubit.
    \end{enumerate}

We rewrite $H^{(k)}$ in terms of its eigendecomposition:
    \begin{align}
        H^{(k)} &= \sum_j \sum_{\Delta_k} \left[ \frac{\lambda_j}{2} - \frac{1}{2} \frac{\Delta_k}{2^n} +2^{k-n} \phi_k \right] \ket{\Delta_k}\bra{\Delta_k} \otimes \ket{\psi_j}\bra{\psi_j} \\
        &= \sum_j \sum_{\Delta_k}  2^{k-n} \lambda^{(k)}_j \ket{\Delta_k}\bra{\Delta_k} \otimes \ket{\psi_j}\bra{\psi_j}  
    \end{align}
    where we defined $\lambda_j^{(k)} := 2^{n-k-1}\left( \lambda_j - \frac{\Delta_k}{2^n}\right) + \phi_k$ just like in Theorem~\ref{thm:iterativephaseestimation}. This protocol implements a map:
    \begin{align}
        \ket{0}\ket{0...0}\ket{\Delta_k}\ket{\psi_j} \to \left( \tilde p\left(  2^{k-n}  \lambda^{k}_j\right) \ket{0} \ket{\text{gar}_{0,j}} + \gamma(\lambda^{k}_j\ket{1}  \ket{\text{gar}_{1,j}} \right)\ket{\Delta_k}\ket{\psi_j}
    \end{align}
    Here $\gamma(\lambda_j^{(k)})$ is defined such that all the other amplitudes for the failed branches of the block-encoding are absorbed into the normalized state $\ket{\text{gar}_{1,j}}$.

    We see that $\tilde p\left(  2^{k-n}  \lambda^{k}_j\right)$ approximates
    \begin{align}
        \tilde p(\lambda_\Delta/2) \approx A_{(\eta_k - \delta_\text{cos})\to\delta_\text{amp}}\left( \cos^2\left( \pi  \lambda^{(k)}_j \right)  \right) 
    \end{align}
    which is the same expression encountered in Theorem~\ref{thm:iterativephaseestimation}, with the same definition of $\lambda^{(k)}_j$, up to a minor shift on $\eta_k$. Therefore, we can follow the same reasoning as in Theorem~\ref{thm:iterativephaseestimation} up to two minor differences, and arrive at the exact same conclusion. Namely, if we select some $m_\text{svt} > 0$ and then let:
    \begin{align}
        \delta_\text{amp} := (1-10^{-m_\text{svt}}) \frac{\delta^2}{8}, \hspace{1cm} \delta_\text{svt} := 10^{-m_\text{svt}} \frac{\delta^2}{8},
    \end{align}
Then the unitary channel we implement is at most $\delta$-far in diamond norm to a channel that implements the map:
    \begin{align}
        \ket{0}\ket{0...0}\ket{\Delta_k}\ket{\psi_j} \to e^{i\varphi_j} \ket{\text{bit}_k(\lambda_j)}  \ket{\text{gar}_{\lambda_j}} \ket{\Delta_k}\ket{\psi_j}
    \end{align}
    whenever $\Delta_k$ encodes the last $k$ bits of $\lambda_j$.

Since the argument in Theorem~\ref{thm:iterativephaseestimation} is lengthy and the modifications are extremely minor, we will not repeat the argument here. Instead we will articulate the two things that change.

First, Theorem~\ref{thm:iterativephaseestimation} gives an estimator with phases and no garbage, whereas here we also have garbage. The garbage just tags along for the entire calculation, and when we come to selecting $\varphi_j$ depending on $\lambda_j$ we can also select $ \ket{\text{gar}_{\lambda_j}} = \ket{\text{gar}_{0/1,j}}$ depending on $\text{bit}_{k}(\lambda_j)$. 

Second, we incur an error of $\delta_\text{cos}$ in the approximation of $\cos(\pi 2^{n-k} x)$ with $p_{\text{cos},\pi 2^{n-k}}(x)$. Since we show that $\cos^2( \pi 2^{n-k} x)$ is bounded away from $\frac{1}{2}$ by $\pm \eta$ we therefore have that $p_{\text{cos}, \pi 2^{n-k}}(x)^2$ is bounded away from $\frac{1}{2}$ by $\pm\left(\eta - \frac{2\delta_\text{cos}}{2}\right)$.  The amplifying polynomial then proceeds to amplify $(\eta - \delta_\text{cos}) \to \delta_\text{amp}$ appropriately, so the calculation proceeds the same. We just need to ensure that $\eta - \delta_\text{cos} > 0$, so we select
\begin{align}
    \delta_\text{cos} := 10^{-m_\text{cos}} \cdot \eta  
\end{align}
for some $m_\text{cos} > 0$. This completes the accuracy analysis.

Finally, we analyze the query complexity. The block-encoding of $H^{(k)}$ makes one query to $U_H$, so by Lemma~\ref{lemma:svt} the query complexity is exactly the degree of $\tilde p(x)$ which is the degree of $A_{( \eta  - \delta_\text{cos}) \to \delta_\text{amp}}\left( p^2_{\text{cos},\pi 2^{n-k}}(x) \right)$. From Lemma~\ref{lemma:jacobianger} and Lemma~\ref{lemma:amppoly}, the degree is:
\begin{align}
    M_{  (\eta - \delta_\text{cos}) \to \delta_\text{amp} }  \cdot 2 \cdot 2\left\lfloor  r\left( \frac{e}{2} \pi  2^{n-k} , \frac{5}{4} \frac{\delta_\text{cos}}{2} \right) \right\rfloor
\end{align}
Substituting the definitions for $\eta, \delta_\text{amp}$ and $\delta_\text{cos}$ yields the final runtime. As with Theorem~\ref{thm:iterativephaseestimation}, $\delta_\text{amp}$ can be made larger by increasing $m_\text{amp}$. By increasing $m_\text{cos}$ we can decrease $\delta_\text{cos}$, which decreases the degree of $A_{(\eta - \delta_\text{cos})\to\delta_\text{amp}}(x)$ while increasing the degree of $p_{\text{cos},\pi 2^{n-k}}(x)$. The Jacobi-Anger expansion deals with error more efficiently than the amplifying polynomial, so in practice $m_\text{cos}$ should be quite large.
\end{proof}


As with phase estimation, we also pack the coherent iterative energy estimator into a regular energy estimator using Lemma~\ref{lemma:stitchingiterative}. This time, since the iterative estimator produces garbage it makes sense to uncompute the garbage using Lemma~\ref{lemma:iterativegarbageremove}.


\begin{corollary} \label{cor:improvedenergyestimation} \textbf{Improved Energy Estimation.} The iterative energy estimator with phases and garbage from Theorem~\ref{thm:iterativeenergyestimation} can be combined with Lemma~\ref{lemma:iterativegarbageremove} and Lemma~\ref{lemma:stitchingiterative} to make an energy estimator without phases and without garbage with query complexity bounded by:
\begin{align}
    O\left( \alpha^{-1}  \log(\delta^{-1})\left( 2^{n} + \log\left( \alpha^{-1} \right)   \right)  \right)
\end{align}
assuming that $\alpha$ is bounded away from 1 by a constant.

    Furthermore, even when there is no rounding promise, there exists an algorithm that, given an eigenstate $\ket{\psi_j}$ of the Hamiltonian $\lambda_j$,  for any $\delta > 0$ performs a transformation $\delta$-close in diamond norm to the map:
    \begin{align}
        \ket{\psi_j}\bra{\psi_j} \to  \left( p  \ket{\text{floor}(2^n\lambda_j)} \bra{\text{floor}(2^n\lambda_j)} + (1-p) \ket{\lambda'_j}\bra{\lambda'_j} \right) \ket{\psi_j}\bra{\psi_j} 
    \end{align}
    where $p$ is some probability and $\lambda'_j = \text{floor}(2^n \lambda_j) - 1$ mod $2^n$ is an erroneous estimate. Just as in Corollary~\ref{cor:improvedphaseestimation}, the performance is the same except that $0 < \alpha < 1$ can be any constant.

\end{corollary}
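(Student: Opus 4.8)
The plan is to follow the template of Corollary~\ref{cor:improvedphaseestimation}, with Theorem~\ref{thm:iterativeenergyestimation} playing the role that Theorem~\ref{thm:iterativephaseestimation} plays there. First I would apply Lemma~\ref{lemma:iterativegarbageremove} to the iterative energy estimator, obtaining a coherent iterative energy estimator \emph{without} phases and garbage at the cost of doubling the per-bit query count and halving the admissible error. Then I would stitch the $n$ bits together via Lemma~\ref{lemma:stitchingiterative}, allotting error $\delta_k := \delta 2^{-k-1}$ to the $k$'th bit, so the triangle inequality keeps the total diamond-norm error below $\delta$ and the result is an ordinary energy estimator with no phases or garbage. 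Note that, unlike textbook phase estimation (Proposition~\ref{prop:phaseestimation}), the coherent iterative estimator reaches arbitrary $\alpha$ directly through the amplifying polynomial of Lemma~\ref{lemma:amppoly}, so there is no ``estimate-and-round $r$ extra bits'' step to carry along here.

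For the query-complexity bound I would substitute the asymptotics $M_{\eta\to\delta}\in O(\eta^{-1}\log(\delta^{-1}))$ (Lemma~\ref{lemma:amppoly}) and $r(t',\eps')\in\Theta(t'+\log(\eps'^{-1})/\log\log(\eps'^{-1}))$ (Lemma~\ref{lemma:jacobianger}) into the per-bit count of Theorem~\ref{thm:iterativeenergyestimation}, which (treating $m_\text{cos},m_\text{svt}$ as constants, and using $\delta_{\mathrm{amp},k}\sim(\delta 2^{-k-1})^2$) makes bit $k$ cost $O\bigl(\eta_k^{-1}(k+\log(\delta^{-1}))\,(2^{n-k}+\log(\eta_k^{-1}))\bigr)$. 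I would then split the sum over $k$ exactly as in Corollary~\ref{cor:improvedphaseestimation}: the $k=0$ term has $\eta_0=\alpha/2$ and contributes $O(\alpha^{-1}\log(\delta^{-1})(2^n+\log(\alpha^{-1})))$ — this is the only place a $\log(\alpha^{-1})$ factor appears, since for $k\ge1$ the gap $\eta_k\ge\tfrac{1-\alpha}{4}$ is a constant (here is where ``$\alpha$ bounded away from $1$'' is used), so $\log(\eta_k^{-1})=O(1)$ there. The remaining terms sum to $\sum_{k=1}^{n-1}O\bigl((k+\log(\delta^{-1}))(2^{n-k}+1)\bigr)$, whose $2^{n-k}$ part telescopes to $O(2^n\log(\delta^{-1}))$ and whose $O(1)$ part is $O(n^2+n\log(\delta^{-1}))=O(2^n\log(\delta^{-1}))$; both are absorbed into the $k=0$ term, giving the claimed $O(\alpha^{-1}\log(\delta^{-1})(2^n+\log(\alpha^{-1})))$.

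For the no-promise statement I would reuse the observation already made inside Theorem~\ref{thm:iterativeenergyestimation} (and Corollary~\ref{cor:improvedphaseestimation}) that for every $k\ge1$ the forbidden regions for $\lambda_j^{(k)}$ are provided not by the promise but by the fact that $\Delta_k/2^n$ has been subtracted off, so bits $k\ge1$ are computed deterministically whether or not a promise holds. Hence only the $k=0$ bit can be corrupted: when the eigenvalue $\lambda_j$ lands in a would-be-disallowed interval, the iterative estimator for $k=0$ outputs $\sqrt{p}\,\ket{0}\ket{\text{gar}_0}+\sqrt{1-p}\,\ket{1}\ket{\text{gar}_1}$ for some probability $p$. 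Because the two branches carry \emph{distinct} garbage, the copy-out/uncompute/discard procedure of Lemma~\ref{lemma:iterativegarbageremove} cannot coherently clean the ancillae, and discarding them decoheres the output into the classical mixture $p\,\ket{0}\bra{0}+(1-p)\,\ket{1}\bra{1}$ on that bit. Finally, since a flipped $k=0$ bit causes the remaining iterations to subtract an extra $1/2^n$ and hence to estimate $\lambda_j-1/2^n$, the erroneous outcome is $\text{floor}(2^n\lambda_j)-1$, with the $\bmod\,2^n$ wraparound coming from the period-$1$ structure in $\lambda_j$ (the one-sidedness of the rounding gap in Definition~\ref{def:roundingpromise} is what makes the error always $-1$ and never $+1$); together with $\text{floor}(2^n\lambda_j)$ this yields the stated two-outcome channel.

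The main obstacle is twofold. The summation in the second paragraph looks routine but needs the extra Jacobi-Anger factor tracked carefully: one must check that the $\log(\eps'^{-1})$ inside $r$ collapses to a constant for $k\ge1$ and only swells to $\log(\alpha^{-1})$ at $k=0$, and that the stray $\sum_k k=O(n^2)$ contribution is dominated by $2^n$ rather than spoiling the bound. The genuinely delicate point, though, is the decoherence claim in the no-promise case: making precise that discarding the failed-uncomputation ancillae turns the $k=0$ superposition into a bona fide classical mixture (rather than leaving residual cross terms) requires the same bookkeeping of the overlaps of $\ket{\text{gar}_0}$ and $\ket{\text{gar}_1}$ under $\Lambda^{-1}$ that underlies the block-measurement theorem of Section~\ref{sec:blockmeas}. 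Restricting the statement to eigenstates $\ket{\psi_j}$ — rather than superpositions, where the damage is global and admits no clean description — is exactly what keeps this tractable.
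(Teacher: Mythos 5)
Your query-complexity analysis mirrors the paper's: you apply Lemma~\ref{lemma:iterativegarbageremove} before Lemma~\ref{lemma:stitchingiterative}, allot error $\delta 2^{-k-1}$ to bit $k$, substitute $M_{\eta\to\delta}\in O(\eta^{-1}\log\delta^{-1})$ and $r(t,\eps)\in O(t+\log\eps^{-1})$, then split the sum at $k=0$ (where $\eta_0=\alpha/2$) versus $k\geq 1$ (where $\eta_k$ is bounded below by a constant). That all matches, including your observation that $\log(\alpha^{-1})$ only enters via the Jacobi-Anger degree at $k=0$.

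Where you diverge from the paper is the no-promise case, and there your proposal has a genuine, if small, gap. You propose to simply \emph{discard} the $k=0$ garbage and argue that this decoheres the output qubit into the classical mixture $p\ket{0}\bra{0}+(1-p)\ket{1}\bra{1}$; you then flag that checking there are no residual cross terms "requires the same bookkeeping of the overlaps of $\ket{\text{gar}_0}$ and $\ket{\text{gar}_1}$ under $\Lambda^{-1}$" and leave this unresolved. The paper avoids the issue entirely: it explicitly \emph{measures} the $k=0$ output qubit before discarding, which forces a classical mixture unconditionally, and then computes the remaining bits on the collapsed state. So the paper needs no claim about garbage overlaps at all. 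Your route can be patched — in fact the concern you raise evaporates once you notice (from step 4 of Theorem~\ref{thm:iterativeenergyestimation}) that $\ket{\text{gar}_{0,j}}$ is precisely $\ket{0\ldots0}$ and $\ket{\text{gar}_{1,j}}$ is, by construction, supported on the orthogonal complement of $\ket{0\ldots0}$, so $\braket{\text{gar}_0|\text{gar}_1}=0$ exactly and the partial trace really is diagonal — but as written you've identified a delicate point without resolving it, whereas the paper's measurement step makes the point moot. Also note that once the first bit is handled, the paper computes the remaining bits \emph{after} the collapse, so those bits' garbage can simply be discarded without affecting anything; your version, which keeps everything coherent until the end, would need to track the propagation of any residual off-diagonal terms through the later bits, which the measurement-first ordering sidesteps.
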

\begin{proof} As with Corollary~\ref{cor:improvedphaseestimation}, we write $\eta_k$ to make the $k$ dependence explicit and demand accuracy $\delta_{\text{amp},k} = (1-10^{-m_\text{amp}}) (\delta 2^{-k-1})^2 / 8$ for the $k$'th bit. From Lemma~\ref{lemma:jacobianger} we obtain an asymptotic upper bound $r(t,\eps) \in O\left( t + \log(\eps^{-1})  \right)$. Again, recall from Lemma~\ref{lemma:amppoly} that $M_{\eta_k\to\delta_\text{amp}} \in O\left(  \eta_k^{-1}\log(\delta_\text{amp}^{-1})   \right)$. The asymptotic query complexity of the iterative energy estimator from Theorem~\ref{thm:iterativeenergyestimation} is then bounded by:
    \begin{align}
        &O\left(  (\eta_k - \delta_\text{cos})^{-1}  \log(\delta^{-1}_{\text{amp},k})  \cdot (    2^{n-k} + \log( \delta_{\text{cos},k}  )    )   \right)\\
        \leq& O\left( (1- 10^{-m_\text{cos}})^{-1}  \eta_k^{-1}  \log( (1 - 10^{-m_\text{amp}})^{-1}  2^{2(k+1)} \delta^{-2}  8 )  \cdot  (    2^{n-k} + \log( 10^{m_\text{cos}}  \eta_k^{-1}  )    )   \right)\\
        \leq& O\left( \eta_k^{-1}  \log(  2^{k+1} \delta^{-1}  )  \cdot  (    2^{n-k} + \log(  \eta_k^{-1}  )    )   \right)    \end{align}
Next we invoke Lemma~\ref{lemma:iterativegarbageremove} to remove the phases and the garbage, doubling the query complexity. We do this before invoking Lemma~\ref{lemma:stitchingiterative}, because Lemma~\ref{lemma:stitchingiterative} involves blowing up the number of garbage registers by a factor of $n$. While we could also invoke Lemma~\ref{lemma:stitchingiterative} and then invoke Lemma~\ref{lemma:garbageremove} to obtain a map without garbage, this would involve many garbage registers sitting around waiting to be uncomputed for a long time.
If we invoke Lemma~\ref{lemma:iterativegarbageremove} first we get rid of the garbage immediately.

  Finally, we invoke Lemma~\ref{lemma:stitchingiterative} to turn our iterative energy estimator without garbage and phases into a regular energy estimator without garbage and phases. As in Corollary~\ref{cor:improvedphaseestimation}, we observe that $\eta_0 = \alpha/2$ and for $k>0$ we have $\eta_k$ bounded from below by a constant. Then, the total query complexity is:
    \begin{align}
        & O\left(  \eta_0^{-1} \log(2^{0+1} \delta^{-1} )  \left( 2^{n-0} + \log\left( \eta_0^{-1} \right)   \right) + \sum_{k=1}^{n-1} \eta_k^{-1} \log(2^{k+1} \delta^{-1} )  \left( 2^{n-k} + \log\left( \eta_k^{-1} \right)   \right) \right) \\
        \leq & O\left(  \alpha^{-1} \log(\delta^{-1})  \left( 2^{n} + \log\left( \alpha^{-1} \right)   \right) + \sum_{k=1}^{n-1}  (k +   \log(\delta^{-1} ))  2^{n-k}  \right) \\
        \leq & O\left( \alpha^{-1} \log(\delta^{-1})  \left( 2^{n} + \log\left( \alpha^{-1} \right)   \right) + 2(2^n -n -1) + (2^n - 2)\log(\delta^{-1})  \right)\\
        \leq & O\left( \alpha^{-1}  \log(\delta^{-1})  \left( 2^{n} + \log\left( \alpha^{-1} \right)   \right)  \right)
    \end{align}

    Next, we show that it is possible to implement a map that, given an eigenstate $\ket{\phi_j}$, measures an estimate that is either $\text{floor}(2^n \lambda_j)$ or $\text{floor}(2^n \lambda_j) - 1$ mod $2^n$ with some probability. Note that this is not the same algorithm as above. Just as with phase estimation, it is only the first bit that actually needs the rounding promise, and all other bits are guaranteed to be deterministic. 

    The first bit performs a map of the form:
    \begin{align}
        \ket{0^n}\ket{0...0}\ket{\psi_j} \to \left( \sqrt{p}\ket{0}\ket{\text{gar}_{0,j}} + \sqrt{1-p}\ket{1}\ket{\text{gar}_{1,j}}  \right) \ket{\psi_j}
    \end{align}
    We immediately see that Lemma~\ref{lemma:iterativegarbageremove} cannot be used to perform uncomputation here, because uncomputation only works when $p = 1$ or $p = 0$. Instead, we simply measure the output register containing $\ket{0}$ or $\ket{1}$ and discard the garbage. This would damage any superposition over the $\ket{\psi_j}$, which is why this algorithm only works if the input is an eigenstate.

    We then use iterative estimators for the remaining bits to compute the rest of the estimate. This time their outputs will be deterministic, but there is no point in doing uncomputation since the superposition has already collapsed. Instead, we do the same thing as for the $k=0$ estimator: compute the next bit and some garbage, and discard the garbage. The final answer is either $\text{floor}(2^n \lambda_j)$ or $\text{floor}(2^n \lambda_j) - 1$ for the same reason as in Corollary~\ref{cor:improvedphaseestimation}.

\end{proof}

\section{Performance Comparison} \label{sec:performance}

Above, we have presented a modular framework for phase and energy estimation using the key ingredients in Theorem~\ref{thm:iterativephaseestimation} and Theorem~\ref{thm:iterativeenergyestimation} respectively. These results already demonstrate several advantages over textbook phase estimation as presented in Proposition~\ref{prop:phaseestimation}. 

First, they eliminate the QFT and do not require a sorting network to perform median amplification. Instead, they rely on just a single tool: singular value transformation. 

Second, they require far fewer ancillae. Our improved phase estimation algorithm requires no ancillae at all, and is merely `with phases' so arguably does not even need uncomputation for some applications. On the other hand, textbook phase estimation requires $O( (n+\log(\alpha^{-1})) \log(\delta^{-1}) )$ ancillae in order to implement median amplification.

Given a block-encoding of a Hamiltonian with $a$ ancillae, then our energy estimation algorithm requires $a+n+3$ ancillae. But in order to even compare Proposition~\ref{prop:phaseestimation} to Theorem~\ref{thm:iterativeenergyestimation} we need a method to perform energy estimation using textbook phase estimation. This is achieved through Hamiltonian simulation.

Which method of Hamiltonian simulation is best depends on the particular physical system involved. Hamiltonian simulation using the Trotter approximation can perform exceedingly well in many situations \cite{2012.09194}. However, in our analysis we must be agnostic to the particular Hamiltonian in question, and furthermore need a unified method for comparing the performance. Hamiltonian simulation via singular value transformation \cite{1707.05391,1806.01838}, lets us compare Proposition~\ref{prop:phaseestimation} and Theorem~\ref{thm:iterativeenergyestimation} on the same footing. After all, this method features the best known asymptotic performance in terms of the simulation time \cite{1912.08854} in a black-box setting.

Singular value transformation constructs an approximate block-encoding of $e^{iHt}$ with ancillae. Since $e^{iHt}$ is unitary, in the ideal case the ancillae start in the $\ket{0}$ state and are also guaranteed to be mapped back to the $\ket{0}$ state. But in the approximate case the ancillae are still a little entangled with the remaining registers, so they become an additional source of error. Certainly the ancillae cannot be re-used to perform singular value transformation again, because then the errors pile up with each use. Thus, we are in a similar situation to Lemma~\ref{lemma:garbageremove}, where we must discard some qubits and take into account the error. 

Therefore, we must do some additional work beyond the Hamiltonian simulation method presented in \cite{1806.01838}, to turn the approximate block-encoding of a unitary into a channel that approximates the unitary in diamond norm. The main trick for this proof is to consider postselection of the ancilla qubits onto the $\ket{0}$ state. Then the error splits into two parts: the error of the channel when the postselection succeeds, and the probability that the postselection fails. 

The Hamiltonian simulation method is extremely accurate, letting us obtain block-encodings with an error that decays exponentially in the query complexity to $U_H$. Thus, the error contribution of this channel is almost entirely negligible in the performance analysis. The purpose of the argument below is really to provide a Lemma analogous to Lemma~\ref{lemma:spectraltodiamond} that lets us convert error bounds in spectral norms on block-encodings to diamond norms. That way, our entire error analysis is completely formal, as it should be for a fair comparison of all algorithms involved. Notably, an $\eps$-accurate block-encoding of a unitary is not the same thing as an $\eps$-accurate implementation of that unitary: the latter implies a channel with diamond norm error $2\eps$ while the prior yields an error $4\eps$ due to the ancilla registers.

\begin{lemma} \textbf{Hamiltonian simulation.} \label{lemma:hamsim} Say $U_H$ is a block-encoding of a Hamiltonian $H$. Then, for any $t > 0$ and $\eps > 0$ there exists a quantum channel that is $\eps$-close in diamond norm to the channel induced by the unitary $e^{i H t}$. This channel can be implemented using
    \begin{align}
        3 \cdot r\left( \frac{et}{2} , \frac{\eps}{24} \right) + 3
    \end{align}
queries to controlled-$U_H$ or controlled-$U_H^\dagger$.
\end{lemma}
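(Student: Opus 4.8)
The plan is to combine the Jacobi-Anger expansion (Lemma~\ref{lemma:jacobianger}) to obtain a block-encoding of something close to $e^{iHt}$, with a postselection-based argument to convert a spectral-norm error bound on a block-encoding into a diamond-norm error bound on a channel. The overall structure mirrors the proof strategy sketched in the text before the lemma: first build the block-encoding, then carefully discard the ancillae and track the error.

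\medskip

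\noindent\textbf{Step 1: Build the block-encoding.} Following \cite{1806.01838}, I would apply Lemma~\ref{lemma:jacobianger} twice: once to get an even polynomial $p_{\cos,t}(x)$ approximating $\cos(tx)$ and once (via the analogous odd construction) to get $p_{\sin,t}(x)$ approximating $\sin(tx)$, each to accuracy $\eps'$ to be chosen. Taking the linear combination $p_{\cos,t}(x) + i\, p_{\sin,t}(x)$ via linear-combinations-of-unitaries and applying singular value transformation (Lemma~\ref{lemma:svt}) to $U_H$ yields a unitary $V$ that is a block-encoding of a matrix $\tilde{A}$ with $\|\tilde A - e^{iHt}\| \le \|p_{\cos,t}(H) - \cos(tH)\| + \|p_{\sin,t}(H) - \sin(tH)\| \le 2\eps'$ (using that $H$ is Hermitian with $0 \preceq H \prec I$, so its eigenvalues lie in $[-1,1]$). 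The query count to controlled-$U_H$ and controlled-$U_H^\dagger$ is governed by the degree $2R$ with $R = \lfloor r(et/2, \tfrac{5}{4}\eps')/2 \rfloor$, giving roughly $r(et/2, \tfrac54 \eps')$ queries for each of the two polynomials plus a small constant overhead for the LCU control qubits; this is where the $3 \cdot r(\cdot,\cdot) + 3$ count in the statement comes from after absorbing constants and choosing $\eps'$ proportional to $\eps$.

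\medskip

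\noindent\textbf{Step 2: From block-encoding to channel.} Now $V$ acts on $a'$ ancillae $\otimes\,\mathcal H$, with $(\bra{0^{a'}}\otimes I)V(\ket{0^{a'}}\otimes I) = \tilde A$. The channel I want is: apply $V$ to $\ket{0^{a'}}\otimes\rho$, then \emph{discard} the ancillae. The trick flagged in the text is to also consider the alternative of \emph{postselecting} the ancillae onto $\ket{0^{a'}}$. Write $V\ket{0^{a'}}\ket{\phi} = \ket{0^{a'}}\tilde A\ket{\phi} + \ket{\perp_\phi}$ where $\ket{\perp_\phi}$ has no support on $\ket{0^{a'}}$ in the ancilla register. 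Since $\|\tilde A\ket{\phi}\| \ge \|e^{iHt}\ket\phi\| - 2\eps' = 1 - 2\eps'$, we get $\|\ket{\perp_\phi}\|^2 \le 1 - (1-2\eps')^2 \le 4\eps'$, so postselection succeeds with probability $\ge 1 - 4\eps'$, and conditioned on success the output state is within $O(\sqrt{\eps'})$ — actually the relevant bound is on the un-normalized states, and one gets the trace-norm error between the discard-channel output and $e^{iHt}\rho e^{-iHt}$ by splitting into the postselected part (close to ideal, error $O(\eps')$ from the difference $\tilde A$ vs.\ $e^{iHt}$, which is a \emph{bounded-operator} perturbation so contributes $\le 2\cdot 2\eps'$ via an argument like Lemma~\ref{lemma:spectraltodiamond}) plus the failure part (trace $\le 4\eps'$, contributing $\le 2\cdot 4\eps'$). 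Tensoring with a reference system is harmless since all bounds are uniform in the input $\rho$, so the same bound holds in diamond norm. Collecting constants, a diamond-norm error $\eps$ is achieved by taking $\eps' = \eps/24$, matching the $r(et/2, \eps/24)$ in the statement.

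\medskip

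\noindent\textbf{Main obstacle.} The routine part is the Jacobi-Anger bookkeeping; the delicate part is Step 2 — bounding $|\,\Gamma_{\mathrm{discard}}(\rho) - e^{iHt}\rho e^{-iHt}\,|_1$ cleanly. One must be careful that the perturbation $\tilde A - e^{iHt}$ enters \emph{linearly} (via a cross-term argument, as in Lemma~\ref{lemma:spectraltodiamond}) rather than quadratically, so that the $\sqrt{\eps'}$-type bounds that naively arise from non-normalized states do not leak into the final answer; the postselection decomposition is exactly what avoids this, since on the success branch the state is $\tilde A\ket\phi$ which differs from the unit vector $e^{iHt}\ket\phi$ by a vector of norm $\le 2\eps'$, not $\sqrt{\eps'}$. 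Keeping these two error sources (perturbation on the success branch vs.\ weight of the failure branch) separate and both $O(\eps')$ is the crux, and then the constant $24$ falls out by adding up $2\cdot 2\eps'$ (cross-term, doubled for the channel) $+\,2\cdot 4\eps'$ (failure weight, doubled) $+$ slack, and solving $\le \eps$.
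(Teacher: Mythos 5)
Your proof matches the paper's approach: obtain a block-encoding of a matrix $A$ with $|A - e^{iHt}| \le \eps'$ from the Jacobi–Anger-based Hamiltonian simulation of \cite{1806.01838}, then bound the discard-ancillae channel's diamond-norm error by splitting into the $\ket{0^m}$-postselected branch (handled linearly in $\eps'$ via the cross-term argument of Lemma~\ref{lemma:spectraltodiamond}) and the orthogonal failure branch (bounded by its trace weight). Two bits of bookkeeping differ. First, the paper gets the block-encoding by citing Theorem~58 of \cite{1806.01838} outright, whose query count $3\,r(et/2,\eps/6)$ already folds in the oblivious amplitude amplification that strips the factor of $1/2$ left over from the LCU of $p_{\cos}$ and $i\,p_{\sin}$ — a step your Step~1 leaves implicit (as written, the LCU gives a block-encoding of $\tfrac{1}{2}(p_{\cos}(H)+ip_{\sin}(H))$, not of something $2\eps'$-close to $e^{iHt}$, until after that amplification). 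Second, the paper's failure-branch estimate uses that a trace-nonincreasing completely positive map has diamond norm at most its maximal output trace, so no extra factor of $2$ is needed; this yields a total bound of $4\eps$ on the diamond-norm error, hence $\eps := \delta/4$, and combining with the $\eps/6$ inside Theorem~58 lands exactly on $\delta/24$. Your Step~2 accounting ($\approx 12\eps'$, i.e.\ an unnecessary doubling of the failure weight) is a couple of factors looser and would not reproduce $\eps/24$ exactly, but neither gap affects the underlying argument, which is the same as the paper's.
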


\begin{proof} This is an extension of Theorem~58 of \cite{1806.01838}, which states that there exists a block-encoding $U_A$ of a matrix $A$ such that $|A - e^{iHt}| \leq \eps$ with query complexity $3 r \left( \frac{et}{2} , \frac{\eps}{6}  \right) $. This result leverages the Jacobi-Anger expansion (Lemma~\ref{lemma:jacobianger}) to construct approximate block-encodings of $\sin(t H)$ and $\cos(t H)$ and uses linear combinations of unitaries to approximate $e^{iHt} / 2$. Then it uses oblivious amplitude amplification to get rid of the factor of $1/2$, obtaining $U_A$.  If $U_H$ is a block-encoding with $a$ ancillae, then $U_A$ has $a+2$ ancillae. 

    All that is left to do is to turn this block-encoding into a quantum channel that approximately implements $e^{iHt}$. To do so, we just initialize the ancillae to $\ket{0^{a+2}}$, apply $U_A$, and trace out the ancillae. We can write this channel $\Lambda$ as:
    \begin{align}
        \Lambda(\rho) :=  \sum_{i}  (\bra{i} \otimes I)  U_A ( \ket{0^{a+2}}\bra{0^{a+2}} \otimes \rho) U_A^\dagger ( \ket{i} \otimes I) 
    \end{align}
   To finish the theorem we must select an $\eps$ so that the error in diamond norm of $\Lambda$ to the channel implemented by $e^{iHt}$ is bounded by $\delta$.  To do so, we write $\Lambda$ as a sum of postselective channels $\Lambda_i(\rho)$:
    \begin{align}
        \Lambda_i(\rho) &:= (\bra{i} \otimes I)  U_A ( \ket{0}\bra{0} \otimes \rho) U_A^\dagger ( \ket{i} \otimes I) 
    \end{align}
That way $\Lambda = \sum_i \Lambda_i$. If we let $\Gamma_{e^{iHt}} := e^{iHt}\rho e^{-iHt}$, then we can bound the error in diamond norm using the triangle inequality:
    \begin{align}
        \left| \Lambda - \Gamma_{e^{iHt}}  \right|_\diamond  \leq \left| \Lambda_0 -  \Gamma_{e^{iHt}}   \right|_\diamond + \left| \sum_{i > 0}  \Lambda_i \right|_\diamond 
    \end{align}
Now we proceed to bound the two terms individually. Observe that:
    \begin{align}
        \Lambda_0(\rho) &:= (\bra{0} \otimes I)  U_A ( \ket{0}\bra{0} \otimes \rho) U_A^\dagger ( \ket{0} \otimes I)  = A \rho A^\dagger
    \end{align}
    Since $|A - e^{iHt}| \leq \eps$, we can invoke Lemma~\ref{lemma:spectraltodiamond} and see that $\left| \Lambda_0 -  \Gamma_{e^{iHt}}   \right|_\diamond  \leq  2\eps $, and we are done with the first term.

    To bound $\left| \sum_{i > 0}  \Lambda_i \right|_\diamond$, we first observe that $\sum_{i > 0 }\Lambda_i = \Lambda - \Lambda_0$. Second, we observe that the $\Lambda_i$  are all positive semi-definite, so $|\Lambda_i(\rho)|_1 = \text{Tr}(\Lambda_i(\rho) )$.  Plugging in the definition of the diamond norm, we can compute:
    \begin{align}
        \left| \sum_{i > 0}  \Lambda_i \right|_\diamond &= \sup_\rho \left| \sum_i (\Lambda_i \otimes \mathcal{I})(\rho)  \right|_1\\
        &\leq \sup_\rho \text{Tr}\left( \sum_i (\Lambda_i \otimes \mathcal{I})(\rho) \right)\\
        &= \sup_\rho \text{Tr}\left( (\Lambda \otimes \mathcal{I})(\rho) - (\Lambda_0\otimes\mathcal{I})(\rho) \right)\\
        &= 1 - \inf_\rho \text{Tr}( A\rho A^\dagger  ) 
    \end{align}
    If we let $E := e^{iHt} - A $ so that $|E| \leq \eps$, and plug into the above expression, we get:
    \begin{align}
        \text{Tr}(A\rho A^\dagger) = \text{Tr}( e^{iHt} \rho e^{-iHt} - E\rho e^{-iHt} - e^{iHt} \rho E^\dagger + E\rho E^\dagger  ) \geq 1 - 2\eps + \eps^2
    \end{align}
  Putting everything together we obtain $   \left| \Lambda - \Gamma_{e^{iHt}}  \right|_\diamond \leq  2\eps + 2\eps - \eps^2 \leq 4\eps$. So if we select $\eps := \delta/4$ we obtain the desired bound.

\end{proof}

The above proof uses the trick for the proof of the block-measurement theorem that we mentioned in the introduction. We will need it again in the proof Theorem~\ref{thm:blockmeasure}. Thus, while we are at it, we may as well state the generalization of this result to any block-encoded unitary as a proposition. 

\begin{proposition} \label{prop:blocktochannel} Say $U_A$ is a block-encoding of $A$, which is $\eps$-close in spectral norm to a unitary $V$. Then there exists a quantum channel $4\eps$-close in diamond norm to the channel $\rho \to V\rho V^\dagger$.
\end{proposition}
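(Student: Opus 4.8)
The plan is to reuse the argument of Lemma~\ref{lemma:hamsim} essentially verbatim: that proof never used any property of $e^{iHt}$ beyond unitarity, so it generalizes at once with $V$ in place of $e^{iHt}$. Suppose $U_A$ acts on $m$ ancilla qubits together with $\mathcal{H}$. Define the channel $\Lambda$ that initializes the ancillae to $\ket{0^m}$, applies $U_A$, and traces them out, and split it into its postselective pieces $\Lambda = \sum_i \Lambda_i$, where $\Lambda_i(\rho) := (\bra{i}\otimes I)\,U_A\,(\ket{0^m}\bra{0^m}\otimes\rho)\,U_A^\dagger\,(\ket{i}\otimes I)$. Writing $\Gamma_V(\rho):=V\rho V^\dagger$, the triangle inequality for the diamond norm gives
\begin{align}
    \left| \Lambda - \Gamma_V \right|_\diamond \leq \left| \Lambda_0 - \Gamma_V \right|_\diamond + \left| \sum_{i>0}\Lambda_i \right|_\diamond .
\end{align}

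For the first term, Definition~\ref{def:blockencoding} gives $\Lambda_0(\rho)=A\rho A^\dagger$, and since $|A-V|\leq\eps$ and $\|A\|\leq 1$, Lemma~\ref{lemma:spectraltodiamond} (whose proof in fact only needs the two matrices to be contractions) bounds $\left|\Lambda_0 - \Gamma_V\right|_\diamond \leq 2\eps$. For the second term I would use that every $\Lambda_i$ is completely positive, so the trace norm of its output equals the trace; together with trace-preservation of $\Lambda$ and the fact that neither property is spoiled by tensoring with the reference channel $\mathcal{I}$, this yields
\begin{align}
    \left| \sum_{i>0}\Lambda_i \right|_\diamond = \sup_\rho \text{Tr}\left((\Lambda\otimes\mathcal{I})(\rho) - (\Lambda_0\otimes\mathcal{I})(\rho)\right) = 1 - \inf_\rho \text{Tr}(A\rho A^\dagger).
\end{align}
Setting $E:=V-A$ with $|E|\leq\eps$ and expanding $\text{Tr}(A\rho A^\dagger)$ into the terms $\text{Tr}(V\rho V^\dagger)=1$, the two cross terms each of magnitude at most $\eps$, and $\text{Tr}(E\rho E^\dagger)\geq 0$, gives $\text{Tr}(A\rho A^\dagger)\geq 1-2\eps+\eps^2$, so the second term is at most $2\eps-\eps^2$. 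Adding the two contributions gives $\left|\Lambda-\Gamma_V\right|_\diamond \leq 4\eps - \eps^2 \leq 4\eps$.

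There is no substantive obstacle: this proposition is exactly Lemma~\ref{lemma:hamsim} with the concrete unitary $e^{iHt}$ abstracted away and the query count dropped. The only things needing care are the bookkeeping with the reference system in the diamond norm — one uses that the partial trace is non-increasing in trace norm and that trace-preservation survives tensoring with $\mathcal{I}$, which is precisely what makes the postselection decomposition valid — and the reminder that any ancillae the circuit for $U_A$ allocates internally (in the sense of Definition~\ref{def:blockencoding}) are returned exactly to $\ket{0}$ and so never contribute to the error.
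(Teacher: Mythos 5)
Your proof is correct and is precisely what the paper intends: the paper's own proof of Proposition~\ref{prop:blocktochannel} consists of the one-line remark that the argument of Lemma~\ref{lemma:hamsim} applies verbatim with $V$ in place of $e^{iHt}$, which is exactly what you unrolled, including the observation that Lemma~\ref{lemma:spectraltodiamond}'s proof only needs contractions. (One cosmetic point, inherited from the paper's own Lemma~\ref{lemma:hamsim}: from $\text{Tr}(E\rho E^\dagger)\geq 0$ alone you get $\text{Tr}(A\rho A^\dagger)\geq 1-2\eps$, not $1-2\eps+\eps^2$, but this still yields the stated $4\eps$.)
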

\begin{proof} Lemma~\ref{lemma:hamsim} proved this result with $V = e^{iHt}$. The exact same argument holds for abstract $V$.
\end{proof}

Returning to the ancilla discussion, say that the block-encoding of the input Hamiltonian has $a$ ancillae. Then we see that Theorem~\ref{thm:iterativeenergyestimation} requires $a+n+3$ ancillae, while textbook phase estimation combined with Lemma~\ref{lemma:hamsim} requires $O( a + (n + \log(\alpha^{-1}))  \log(\delta^{-1}) )$. This is because the extra ancillae required to perform the procedure in Lemma~\ref{lemma:hamsim} can be discarded and reset after each application of $e^{iHt}$. Also note that despite the fact that the above implementation of $e^{iHt}$ is not unitary, the overall protocol in Proposition~\ref{prop:phaseestimation} is still approximately invertible as required by Lemma~\ref{lemma:garbageremove}, since we can just use Lemma~\ref{lemma:hamsim} to implement $e^{-iHt}$ instead.

Next, we compare the algorithms in terms of their query complexity to the unitary $U$ or the block-encoding $U_H$. Note that this is \emph{not} the gate complexity: the number of gates from some universal gate set used to implement the algorithm. The reason for this choice is that the gate complexity of $U$ or $U_H$ depends on the application, and is likely much much larger than any of the additional gates used to implement singular value transformation. The algorithms' query complexities depend on three parameters: $\alpha, n $, and $\delta$. In the following we discuss the impact of these parameters on the complexity and show how we arrive at the 14x and 10x speedups stated in the introduction. 
\clearpage

The following jupyter notebook contains the code we used to evaluate the query complexities of the algorithms:
\begin{center}
    \url{https://github.com/qiskit-community/improved-phase-estimation/blob/main/phase_estimation_performance.ipynb}
\end{center}
The implementation of the algorithms follows the constructions in Corollary~\ref{cor:improvedphaseestimation} and Corollary~\ref{cor:improvedenergyestimation} with only one minor optimization: rather than setting $\eta_k$ based on a linear lower bound on $\cos(\pi\lambda^{(k)})$, we simply evaluate $\cos(\pi\lambda^{(k)})$ at the relevant point.

The performance in terms of $\alpha$ is plotted in Figure~\ref{fig:performance_comparison}, for fixed $n=10$ and $\delta = 10^{-30}$. This comparison shows several features. First, we see that the novel algorithms in this paper are consistently faster than the traditional methods in this regime. The only exception is Corollary~\ref{cor:improvedphaseestimation} combined with Lemma~\ref{lemma:garbageremove} to remove the phases, which is outperformed by traditional phase estimation for some value of $\alpha > 1/2$. Such enormous $\alpha$ is obviously impractical: even traditional phase estimation does not round to the nearest bit in this regime.

Second, the performance of Proposition~\ref{prop:phaseestimation} exhibits a ziz-zag behavior. This is because we reduce $\alpha$ by estimating more bits than we need and then rounding them away, a process that can only obtain $ \alpha $ of the form $2^{-1-r}$ for integer $r$. When $\alpha > \frac{1}{2}$ then we no longer use the rounding strategy since we can achieve these values with median amplification alone. Therefore, for large enough $\alpha$ the line is no longer a zig-zag and begins to be a curve with shape $\sim \alpha^{-2}$.

The zig-zag behavior makes comparison for continuous values of $\alpha$ complicated. In our analysis we would like to compare against the most efficient version of traditional phase estimation, so we continue our performance analysis where $\alpha$ is a power of two, maximizing the efficiency (but minimizing our speedup). In Figure~\ref{fig:speedup} we show the performance when vary $\alpha,n$ and $\delta$ independently.

We see that once $n \gtrsim 10$, $\alpha \lesssim 2^{-10}$ and $\delta \lesssim 10^{-30}$ the speedup stabilizes at about 14x for phase estimation and 10x for energy estimation. Our method therefore shows an significant improvement over the state of the art. 

Of course, several assumptions needed to be made in order to claim a particular multiplicative speedup. Many of these assumptions were made specifically to maximize the performance of the traditional method. For example, we count performance in terms of query complexity rather than gate complexity, which neglects all the additional processing that phase estimation needs to perform for median amplification. This method of comparison favors the traditional method, since it neglects the fact that our new methods require significantly less ancillary processing. Furthermore, we select $\alpha$ to be a power of two, so that phase estimation is maximally efficient. However, we also assume that $n$ is large enough and that $\alpha$ and $\delta$ are small enough that the speedup is stable. If the accuracy required is not so large, then the speedup is less significant.

Which method is best in reality will depend on the situation. In particular, the appropriate choice of $\alpha$ when studying real-world Hamiltonians remains an interesting direction of study. In reality it will not be possible to guarantee a rounding promise, so one's only option is to pick a small value of $\alpha$ and hope for the best. How small of an $\alpha$ is required?

\clearpage

\begin{figure}[h]
    \centering
    \includegraphics[width=0.8\textwidth]{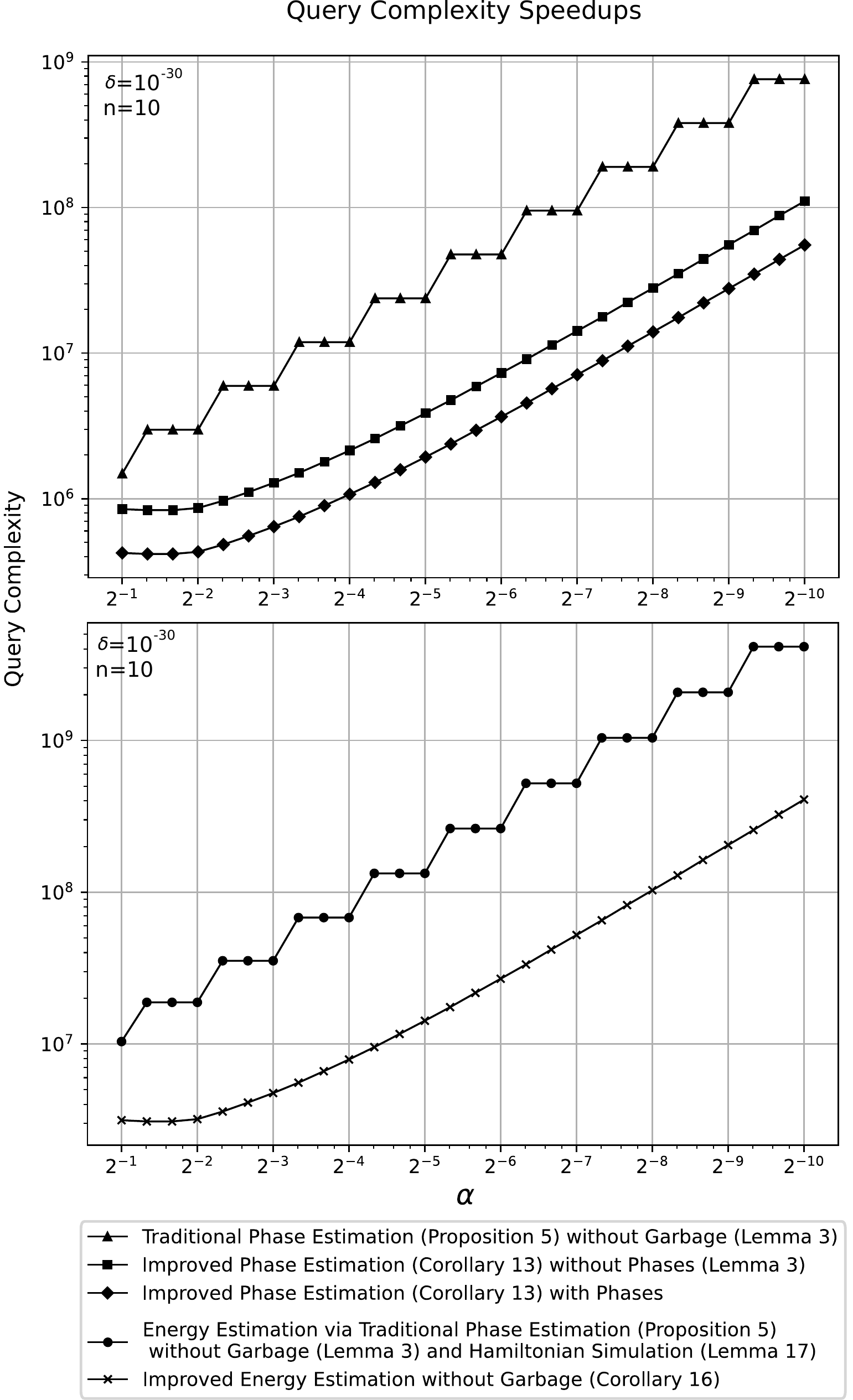}
    \caption{ \label{fig:performance_comparison}  Performance of estimation algorithms presented in this paper. Phase estimation algorithms are presented with slim lines on the left, and energy estimation algorithms are presented with thick lines on the right. The zig-zag behavior of phase estimation is explained by the method by which Proposition~\ref{prop:phaseestimation} reduces $\alpha$: by estimating more bits than needed and then rounding them. Thus, the $\alpha$ achieved by phase estimation is always a power of two. }
\end{figure}

\begin{figure}[h]
    \centering
    \includegraphics[width=0.85\textwidth]{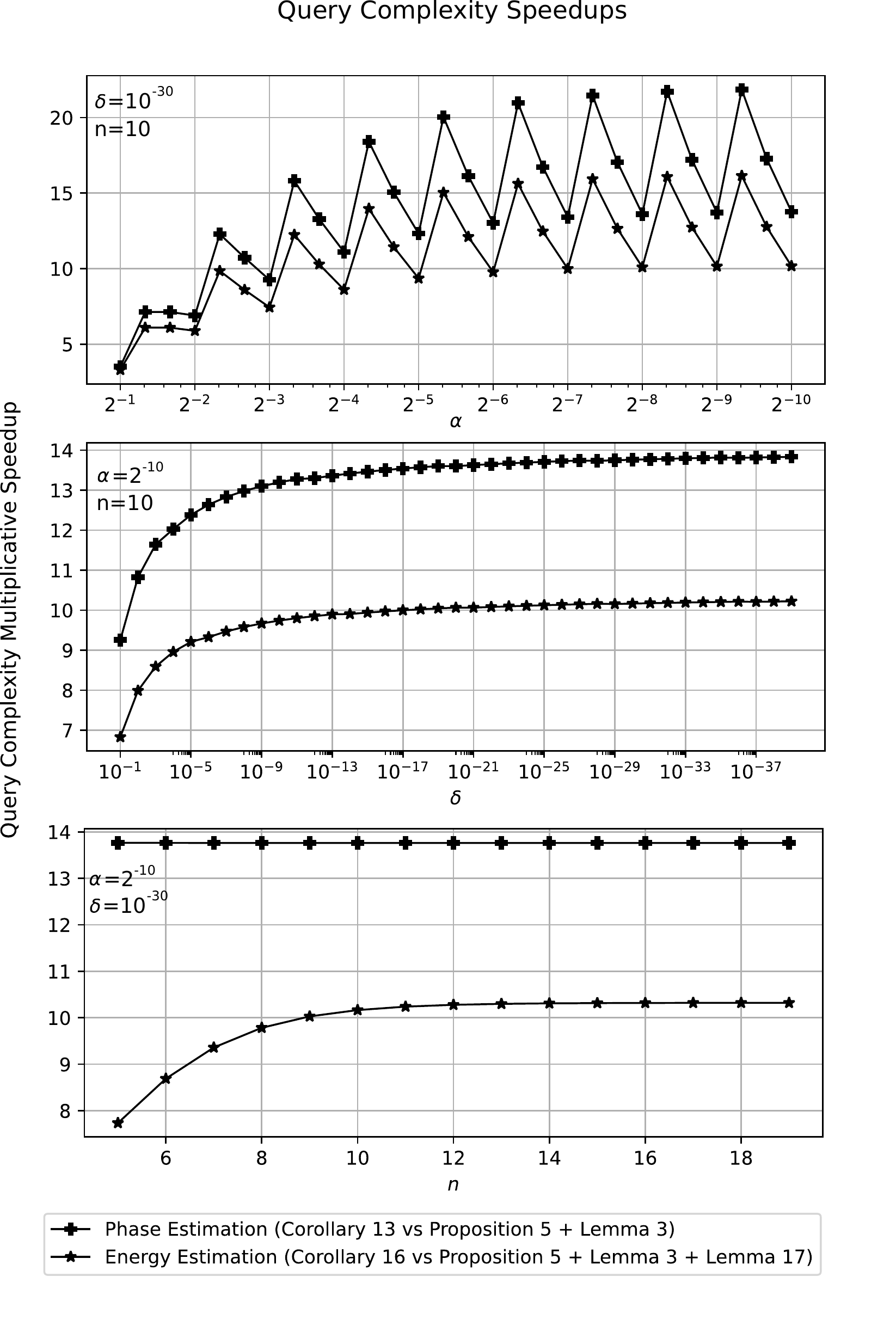}
    \caption{ \label{fig:speedup} Speedup over traditional methods by our new methods for phase and energy estimation. When $n \gtrsim 10$, $\delta \lesssim 10^{-30}$ and $\alpha \lesssim 2^{-10}$ the speedups become stable.  Together with Figure~\ref{fig:performance_comparison} we can conclude that the speedup for phase estimation is about 14x and the speedup for energy estimation is about 10x.  } 
\end{figure}

\clearpage

\section{Block-Measurement}  \label{sec:blockmeas}

We have presented improved algorithms for phase and energy estimation. In this section we prove the block-measurement theorem from the introduction. The goal of the block-measurement protocol is the following: given an approximate block-encoding of a projector $\Pi$, implement a channel close to the unitary:
\begin{align}
    \ket{1} \otimes \Pi + \ket{0} \otimes (I - \Pi)
\end{align}

When the block-encoding of $\Pi$ is exact, then the above is easily accomplished through linear combinations of unitaries and oblivious amplitude amplification. In particular, we can rewrite the above as $\ket{0}\otimes I -\sqrt{2}\ket{-}\otimes\Pi $, so we need to amplify away a factor of $1/(1+\sqrt{2})$ from the linear combination. Following Theorem~28 of \cite{1806.01838}, we observe that $T_5(x)$ is the first Chebyshev polynomial that has a solution $T_5(x) = \pm 1$ such that $x < 1/(1+\sqrt{2})$. We nudge the factor down to the solution $x$ with some extra postselection, and then we meet the conditions of this theorem. This demands five queries to the block-encoding of $\Pi$. Then we invoke our Proposition~\ref{prop:blocktochannel} to turn the block-encoding of $\ket{1} \otimes \Pi + \ket{0} \otimes (I - \Pi)$ into a channel.

However, the above strategy is both more complicated and more expensive than necessary -  we can do this in just two queries. Say $U_\Pi$ is a block-encoding of $\Pi$ with $m$ ancillae. Then, let $V_\Pi$:
    \begin{align}
        \label{eq:blockmeasure} V_\Pi  :=  \hspace{5mm} \begin{array}{c}\Qcircuit @C=1em @R=1em {
                     & \qw & \targ & \qw  & \qw\\
                 & \multigate{1}{U_\Pi} & \ctrl{-1} & \multigate{1}{U_\Pi^\dagger} & \qw \\
             & \ghost{U_A} & \qw & \ghost{U^\dagger_A} & \qw
    }\end{array} 
    \end{align}
where the CNOT above refers to $X \otimes \ket{0^m}\bra{0^m} + I \otimes (I - \ket{0^m}\bra{0^m})$. Then, $V_\Pi$ satisfies:
\begin{align}
      &(I \otimes \bra{0^m} \otimes I)  V_\Pi (\ket{0} \otimes \ket{0^m} \otimes I)\\
    = & \ket{1} \otimes \Pi + \ket{0} \otimes (I - \Pi )
\end{align}
So viewing the $\ket{0^m}$ register as the postselective part of a block-encoding, we see that $V_\Pi$ is a block-encoding of the desired operation. Then we invoke Proposition~\ref{prop:blocktochannel} to construct a channel $\Lambda_\Pi$ from $V_\Pi$. Now all that is left to do is the error analysis.

In fact, when $m=1$, then there is an extent that we do not even need the modified CNOT gate and can get away with just a single query. We used this fact in Theorem~\ref{thm:iterativephaseestimation}. This is because if $\Pi = \sum_j \alpha_j \ket{\psi_j}\bra{\psi_j}$, then we can write:
\begin{align}
    U_\Pi (\ket{0} \otimes I)   &=  \sum_j \left( \alpha_j \ket{0} + \beta_j \ket{1} \right) \ket{\psi_j}\bra{\psi_j}
\end{align}
where $\beta_j$ is some amplitude satisfying $|\beta_j|^2 + |\alpha_j|^2 = 1$. To compare, the desired operation is:
    \begin{align}
        \sum_j \left( \alpha_j \ket{1} + (1 - \alpha_j) \ket{0} \right) \ket{\psi_j}\bra{\psi_j}
    \end{align}
Since $\alpha_j \in \{0,1\}$, we know that $\beta_j = e^{i\phi_j}(1-\alpha_j)$ for some phase $\phi_j$ that may depend on $\ket{\psi_j}$. So, with the exception of the phase correction, we are already one $X$ gate away from the desired unitary.  

Is it possible to remove this phase correction without uncomputation? If $U_\Pi$ is obtained through singular value transformation of some real eigenvalues $\lambda_j$ then we actually have more control than Lemma~\ref{lemma:svt} would indicate. Looking at Theorem~3 of \cite{1806.01838}, we can actually select polynomials $P,Q$ such that $\alpha_j = P(\lambda_j)$ and $\beta_j = Q(\lambda_j)$, provided that $P,Q$ satisfy some conditions including $|P(x)|^2 + (1-x^2)|Q(x)|^2 = 1$. For what positive-real-valued polynomials $P(x)$ is it possible to choose a positive-real-valued $Q(x)$ that satisfies this relation, thus removing the phases $e^{i\phi_j}$? We leave this question for future work.

Now we proceed to the formal statement and error analysis of the block-measurement theorem.

\begin{theorem} \textbf{Block-measurement.} \label{thm:blockmeasure} Say $\Pi$ is a projector, and $A$ is a hermitian matrix satisfying $|\Pi - A^2| \leq \eps $. Also, $A$ has a block-encoding $U_A$. Then the channel $\Lambda_A$, constructed in the same way as $\Lambda_\Pi$ above just with $U_A$ instead of $U_\Pi$, approximates $\Lambda_\Pi$ in diamond norm:
    \begin{align}
        \left| \Lambda_A  - \Lambda_\Pi  \right|_\diamond \leq 4\sqrt{2}\eps 
    \end{align}
\end{theorem}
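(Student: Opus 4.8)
The plan is to bound $|\Lambda_A - \Lambda_\Pi|_\diamond$ by constructing an auxiliary unitary and applying the tools already developed in the excerpt, principally Proposition~\ref{prop:blocktochannel}. The key observation is that $V_A$ (the circuit \eqref{eq:blockmeasure} built from $U_A$) is a block-encoding of some operator that is close in spectral norm to the unitary $W_\Pi := \ket{1}\otimes\Pi + \ket{0}\otimes(I-\Pi)$, and that $V_\Pi$ is a block-encoding of $W_\Pi$ exactly. So if I can show that $V_A$ is an $\eps'$-accurate block-encoding of a unitary within spectral-norm $\eps'$ of $W_\Pi$ for a suitable $\eps'$, then Proposition~\ref{prop:blocktochannel} gives $|\Lambda_A - \Gamma_{W_\Pi}|_\diamond \le 4\eps'$, and since $\Lambda_\Pi$ is built the same way but from an exact block-encoding, $|\Lambda_\Pi - \Gamma_{W_\Pi}|_\diamond \le 4\eps'' $ with $\eps''=0$ (actually $\Lambda_\Pi = \Gamma_{W_\Pi}$ exactly since $V_\Pi$ block-encodes $W_\Pi$ exactly, so the postselection succeeds with probability one). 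The triangle inequality then finishes the job once the constant is tracked.

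First I would compute the operator that $V_A$ block-encodes. Expanding \eqref{eq:blockmeasure} with the modified CNOT $X\otimes\ket{0^m}\bra{0^m} + I\otimes(I-\ket{0^m}\bra{0^m})$ sandwiched between $U_A$ and $U_A^\dagger$, and projecting the $m$-qubit register onto $\ket{0^m}$ on both sides, I expect to get $B := \ket{1}\otimes A^2 + \ket{0}\otimes(I - A^2)$ in the top-left block — this is the exact analogue of the identity displayed for $V_\Pi$, with $\Pi$ replaced by $A^2 = (\bra{0^m}\otimes I)U_A(\ket{0^m}\otimes I)\cdot(\text{same})$; here I should be slightly careful that the relevant quantity is $(\bra{0^m}\otimes I)U_A^\dagger(\ket{0^m}\bra{0^m}\otimes I)U_A(\ket{0^m}\otimes I)$, which equals $A^\dagger A = A^2$ since $A$ is hermitian. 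Then I compare $B$ with $W_\Pi$: their difference is $\ket{1}\otimes(A^2-\Pi) + \ket{0}\otimes(\Pi - A^2)$, which has spectral norm at most $\sqrt{2}\,|A^2 - \Pi| \le \sqrt{2}\eps$ (the $\sqrt 2$ coming from the orthogonality of the $\ket{0}$ and $\ket{1}$ branches — the norm of a block-column vector stacked in two orthogonal sectors). So $V_A$ is a block-encoding of $B$ with $|B - W_\Pi|\le\sqrt2\eps$, i.e.\ it is a block-encoding of a matrix $\sqrt2\eps$-close to the unitary $W_\Pi$.

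Now I apply Proposition~\ref{prop:blocktochannel} with $A \mapsto B$, $U_A\mapsto V_A$, $V\mapsto W_\Pi$, and $\eps \mapsto \sqrt2\eps$: this yields a channel $4\sqrt2\eps$-close in diamond norm to $\Gamma_{W_\Pi}(\rho) = W_\Pi\rho W_\Pi^\dagger$, and this channel is precisely $\Lambda_A$ by construction. Meanwhile $\Lambda_\Pi$, being built identically from the exact block-encoding $U_\Pi$ (so that $V_\Pi$ block-encodes $W_\Pi$ exactly and the $\ket{0^m}$-postselection succeeds deterministically), equals $\Gamma_{W_\Pi}$ exactly. Hence $|\Lambda_A - \Lambda_\Pi|_\diamond = |\Lambda_A - \Gamma_{W_\Pi}|_\diamond \le 4\sqrt2\eps$, as claimed. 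The only genuine subtlety — the ``main obstacle'' — is the bookkeeping in the first step: verifying that the modified-CNOT sandwich really produces $\ket1\otimes A^2 + \ket0\otimes(I-A^2)$ in the top-left corner (this uses that $A$ is hermitian so that the ``come back'' through $U_A^\dagger$ contributes $A^\dagger = A$, not $A^\dagger$ in a way that breaks the square), and correctly identifying that the relevant ancilla space for the Proposition-\ref{prop:blocktochannel} argument is the $m$-qubit $\ket{0^m}$ register together with whatever internal ancillae $U_A$ allocates and returns. Everything after that is a direct chaining of Proposition~\ref{prop:blocktochannel} and the triangle inequality, with the factor $4\sqrt2$ falling out transparently.
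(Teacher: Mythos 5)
Your proof is correct and follows essentially the same route as the paper: you identify $V_A$ as a block-encoding of $\ket{1}\otimes A^2 + \ket{0}\otimes(I-A^2)$, bound its spectral-norm distance to the exact unitary by $\sqrt{2}\,|\Pi - A^2|\leq\sqrt{2}\eps$ using the orthogonality of the two branches, and then invoke Proposition~\ref{prop:blocktochannel} (with the observation that $\Lambda_\Pi$ equals the ideal channel exactly) to conclude with the $4\sqrt{2}\eps$ bound. This matches the paper's proof step for step, including the bookkeeping showing $A^\dagger A = A^2$ in the top-left block.
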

\begin{proof} Just like $V_\Pi$, $V_A$ is a block-encoding of the unitary:
    \begin{align}
        X \otimes A^2 + I \otimes (I - A^2)
    \end{align}
The distance in spectral norm to $V_\Pi =  X \otimes \Pi + I \otimes (I - \Pi)$ is:
    \begin{align}
        & \Big|   \ket{1} \otimes \Pi + \ket{0} \otimes (I - \Pi) -  \ket{1} \otimes A^2 - \ket{0} \otimes (I - A^2)   \Big|\\
       = & \Big|   \ket{1} \otimes (\Pi - A^2)  + \ket{0} \otimes (A^2 - \Pi)   \Big|\\
       = & \sqrt{2} \left|   \frac{\ket{0} - \ket{1}}{\sqrt{2}} \otimes (\Pi - A^2)   \right|\\
       = & \sqrt{2} \left|\Pi - A^2  \right| \leq \sqrt{2}\eps 
    \end{align}
    So we have a block-encoding of a matrix $\sqrt{2}\eps$-close in spectral norm to the unitary $ X \otimes \Pi + I \otimes (I - \Pi)$. Now we just use Proposition~\ref{prop:blocktochannel}, which gives a channel with diamond norm accuracy $4\sqrt{2}\eps$. Then we get the desired operation by initializing an extra qubit to $\ket{0}$ before applying the channel.
\end{proof}


What happens when $A$ is not a projector? For simplicity, say the input state to the circuit is $\ket{\Psi}$, which happens to be an eigenstate of $A$. Then:
\begin{align}
    V_A \left( \ket{0} \otimes \ket{0^m} \otimes \ket{\Psi} \right) &= (\ket{1} \otimes  U_A^\dagger\ket{0^m}\bra{0^m}  U_A \ket{0^m}+ \ket{0} \otimes  U_A^\dagger \left(I - \ket{0^m}\bra{0^m} \right) U_A \ket{0^m} ) \otimes  \ket{\Psi}\\
    &= (-\sqrt{2}\ket{-} \otimes  U_A^\dagger\ket{0^m}\bra{0^m}  U_A \ket{0^m} + \ket{0} \otimes U_A^\dagger U_A \ket{0^m} ) \otimes  \ket{\Psi}
\end{align}

We are interested in the state obtained by tracing out the $\ket{0^m}$ register above. For $j \in \{0,1\}^m$, we let $\gamma_j$ abbreviate a matrix element of $U_A$:
\begin{align}
    \gamma_j \ket{\Psi} &:=  (\bra{0^m} \otimes I) U_A \left( \ket{j} \otimes \ket{\Psi} \right)\\
    \left( I \otimes \bra{j} \otimes I \right)  V_A \left( \ket{0} \otimes \ket{0^m} \otimes \ket{\Psi} \right) &= \left( -\gamma_j^\dagger \gamma_0\sqrt{2} \ket{-}  + \delta_{j,0} \ket{0} \right) \otimes \ket{\Psi}
\end{align}
Note that $\gamma_0$ is the eigenvalue of $A$ corresponding to $\ket{\Psi}$. Then, the reduced density matrix after tracing out the middle register is:
\begin{align}
    & \sum_{j \in \{0,1\}^m}   \left(-  \gamma_j^\dagger \gamma_0\sqrt{2} \ket{-}  + \delta_{j,0} \ket{0} \right)  \left( - \gamma_j \gamma_0^\dagger \sqrt{2} \bra{-}  + \delta_{j,0} \bra{0} \right) \otimes \ket{\Psi}\bra{\Psi}\\
    =&  \sum_{j \in \{0,1\}^m}   \left( 2 |\gamma_j|^2 |\gamma_0|^2 \ket{-}\bra{-}  - \sqrt{2} \delta_{j,0} \gamma_j \gamma_0^\dagger \ket{0}\bra{-} - \sqrt{2} \delta_{j,0} \gamma_j^\dagger \gamma_0\ket{-}\bra{0} +  \delta_{j,0} \ket{0}\bra{0} \right) \otimes \ket{\Psi}\bra{\Psi}\\
    =&   \left( 2 |\gamma_0|^2 \ket{-}\bra{-}  - \sqrt{2 }|\gamma_0|^2 \ket{0}\bra{-} - \sqrt{2} |\gamma_0|^2\ket{-}\bra{0} +  \ket{0}\bra{0} \right) \otimes \ket{\Psi}\bra{\Psi}\\
    =&   \left(  |\gamma_0|^2 \ket{1}\bra{1}  + (1-|\gamma_0|^2) \ket{0}\bra{0} \right) \otimes \ket{\Psi}\bra{\Psi}
\end{align}
This somewhat complicated calculation produced a simple result: tracing out the $\ket{0^m}$ register collapses the superposition on the output register. A more general version of this calculation where the input state is not an eigenstate of $A$ demonstrates that this collapse also damages the superposition on the input register (although it does not fully collapse it). Intuitively, we can see why this is the case even without doing the full calculation: since measuring the middle register collapses the superposition that encodes the eigenvalues, the environment that traced out the middle register thus learns information about the eigenvalues. But the distribution over eigenvalues also contains information about the input superposition over eigenvectors, so therefore the input state will be damaged.

In essence, we have re-derived the fact that uncomputation is impossible unless the output of the computation is deterministic.

\section{Non-Destructive Amplitude Estimation} \label{sec:ampest}

In this section we show how to use our energy estimation algorithm to perform amplitude estimation. The algorithms for energy and phase estimation demanded a rounding promise on the input Hamiltonian, which guarantees that they do not damage the input state even if it is not an eigenstate of the unitary or Hamiltonian of interest. However, as we shall see, this is not a concern for amplitude estimation. This is because we can construct a Hamiltonian whose eigenstate is the input state.

Say $\Pi$ is some projector and $\ket{\Psi}$ is a quantum state. Non-destructive amplitude estimation obtains an estimate of $a = |\Pi\ket{\Psi}|$ given exactly one copy of $\ket{\Psi}$, and leaves that copy of $\ket{\Psi}$ intact. This subroutine is explicitly required by the algorithms in \cite{1907.09965, 2009.11270}, which can be used to perform Bayesian inference, thermal state preparation and partition function estimation. However, it is also quite useful in general when $\ket{\Psi}$ is expensive to prepare. For example, when estimating the expectation of an observable on the ground state of a Hamiltonian, preparing the ground state can be very expensive. Thus, it may be practical to only prepare it once.

The only previously known algorithm for non-destructive amplitude estimation is given in \cite{1907.09965}. It works via several invocations of amplitude estimation according to \cite{0005055}, which is based on phase estimation. We argue that our new algorithm has several performance advantages over the prior art. However, these advantages are not very quantifiable, preventing us from computing a numerical constant-factor speedup. The advantages are as follows:

\begin{itemize}
    \item Our new algorithm requires dramatically fewer ancillae. This is because \cite{1907.09965} relies on phase estimation with median amplification. As argued above, median amplification requires $O(n \log(\delta^{-1}))$ ancillae. In contrast, we simply use the protocol for energy estimation which requires $n + O(1)$ ancillae.
    \item Our new algorithm runs in a fixed amount of time: one application of our energy estimation algorithm suffices. In contrast, \cite{1907.09965}'s algorithm works by repeatedly attempting to `repair' the input state. This process succeeds only with probability $1/2$, so while the expected number of attempts is constant, the resulting algorithm is highly adaptive and may need a variable amount of time.
    \item Our new algorithm does not require knowledge of a lower bound on the amplitude $a$. The `repair' step in \cite{1907.09965} itself involves another application of phase estimation, which must produce an estimate with enough accuracy to distinguish $\arcsin(a)$ and $-\arcsin(a)$.  This also implies that \cite{1907.09965} can only produce relative-error estimates, even when an additive-error estimate might be sufficient, as it is in \cite{2009.11270}.
    \item While, due to the above differences, it is not really possible to perform a side-by-side constant-factor comparison of the algorithms, we do expect to inherit a modest constant-factor speedup from our energy estimation algorithm. \cite{0005055} has no need to perform Hamiltonian simulation, so we expect the speedup to look more like the one in the case of phase estimation. Since no rounding promise is required, making $\alpha$ tiny is not really necessary. But it \emph{is} necessary that $\alpha < 1/2$ since this makes traditional phase estimation round correctly. Looking at Figure~\ref{fig:performance_comparison}, we already see constant factor speedups in this regime.
\end{itemize}

Another minor advantage of our method over \cite{1907.09965} is that it estimates $a^2$ directly, rather than going through the Grover angle $\theta := \arcsin(a)$. Coherently evaluating a sine function in superposition to correct this issue, while possible, will have an enormous ancilla overhead. $a^2$ is the probability with which a measurement of $\ket{\Psi}$ yields a state in $\Pi$, which may be useful directly. 

We note that an additive error estimate of $a$ is slightly stronger than an additive error estimate of $a^2$. Moreover this accuracy seems to be necessary for some versions of quantum mean estimation: see for example Appendix~C of \cite{2009.11270}. If the amplitude $a$ is desired rather than $a^2$, then, since the algorithm proceeds by obtaining a block-encoding of $a^2$, one can use singular value transformation to make a block-encoding of $a$ instead. A polynomial approximation of $\sqrt{x}$ could be constructed from Corollary~66 of \cite{1806.01838}. We leave a careful error analysis of a direct estimate of $a$ to future work.

\begin{corollary} \label{cor:ampest} \textbf{Non-destructive amplitude estimation.} Say $\Pi$ is a projector and $R_\Pi$ is a unitary that reflects about this projector:
    \begin{align}
        R_\Pi := 2\Pi - I
    \end{align}
    Let $\ket{\Psi}$ be some quantum state such that $a := |\Pi  \ket{\Psi}|$. Let $M := \text{floor}(a^2 2^{n})$. Then for any positive integer $n$ and any $\delta > 0$ there exists a quantum channel that implements $\delta$-approximately in diamond norm the map:
    \begin{align}
        \ket{0^n}\bra{0^n} \otimes \ket{ \Psi }\bra{\Psi} \to \left(p \ket{M}\bra{M} + (1-p) \ket{M-1 \text{ mod }2^n}\bra{M-1 \text{ mod }2^n}   \right)  \otimes  \ket{\Psi}\bra{\Psi}
    \end{align}
    for some probability $p$ (i.e., there is some probability that instead of obtaining $\text{floor}(a^2 2^{n})$ we obtain $\text{floor}(a^2 2^{n}) - 1$ ). This channel uses $O\left(2^n \log(\delta^{-1}) \right)$ controlled applications of $R_\Pi$ and $R_{\ket{\Psi}} :=  2\ket{\Psi}\bra{\Psi} - I$.
\end{corollary}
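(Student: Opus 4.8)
# Proof Proposal for Corollary~\ref{cor:ampest}

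The plan is to reduce non-destructive amplitude estimation to the energy estimation machinery of Corollary~\ref{cor:improvedenergyestimation}. The key idea, as hinted in the surrounding text, is to manufacture a Hamiltonian that has $\ket{\Psi}$ as an eigenstate with eigenvalue $a^2$, so that running energy estimation on this Hamiltonian (with input $\ket{\Psi}$) produces an estimate of $a^2$ without damaging $\ket{\Psi}$. The standard gadget here is the product of the two reflections, $G := R_{\ket{\Psi}} R_{\Pi}$, the Grover iterate. Restricted to the two-dimensional subspace spanned by $\ket{\Psi}$ and its complement within $\mathrm{span}\{\ket{\Psi}, \Pi\ket{\Psi}\}$, $G$ acts as a rotation by angle $2\theta$ where $\theta = \arcsin(a)$, so its eigenvalues on this subspace are $e^{\pm 2i\theta}$. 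From $G$ one builds a block-encoding of $\cos(2\theta) = 1 - 2a^2$, hence of $a^2$ after an affine rescaling via linear combinations of unitaries; alternatively one works directly with a block-encoding whose relevant eigenvalue is $a^2$. The point is that $\ket{\Psi}$ itself is not an eigenstate of $G$, but the symmetric combination $\ket{w_+}\propto\ket{\Psi}$ is not quite right either — so the cleaner route is to form $H := \tfrac{1}{2}(I - \tfrac{1}{2}(G + G^\dagger))$ restricted appropriately, which is Hermitian, has $\ket{\Psi}$ in its two-dimensional invariant subspace, and whose eigenvalue on the relevant eigenvector is exactly $a^2$. Care is needed because $\ket{\Psi}$ is a superposition of the two eigenvectors $\ket{w_\pm}$ of $G$ with eigenvalues $e^{\pm 2i\theta}$ — but crucially \emph{both} of these eigenvectors have the \emph{same} eigenvalue $a^2$ under $\tfrac12(I - \cos(2\theta))\!\cdot\!(\text{rescaled})$, since $\cos$ is even. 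So $\ket{\Psi}$ \emph{is} an eigenstate of this Hermitian $H$ with eigenvalue $a^2$, which is exactly what we need.

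Concretely, the steps are: (1) Define $G := R_{\ket{\Psi}}R_{\Pi}$ and observe it is a block-encoding of itself; using linear combinations of unitaries on $\{I, G, G^\dagger\}$ construct a block-encoding $U_H$ of $H := \tfrac{1}{2}I - \tfrac{1}{4}(G + G^\dagger)$, using $O(1)$ queries to $G$ (hence to $R_\Pi, R_{\ket{\Psi}}$) and $O(1)$ ancillae. (2) Verify that $\ket{\Psi}$ is an eigenvector of $H$ with eigenvalue $a^2$: write $\ket{\Psi} = \cos(\theta')\ket{w_+} + \sin(\theta')\ket{w_-}$ or the appropriate combination in the Grover two-dimensional subspace where $G\ket{w_\pm} = e^{\pm 2i\theta}\ket{w_\pm}$; then $H\ket{w_\pm} = (\tfrac12 - \tfrac12\cos 2\theta)\ket{w_\pm} = \sin^2\theta \,\ket{w_\pm} = a^2\ket{w_\pm}$, so $H\ket{\Psi} = a^2\ket{\Psi}$. (One must check $0 \preceq H \prec I$, which holds since $a^2 \in [0,1]$; strictly one may need a tiny rescaling to enforce strict inequality, or note the estimator from Corollary~\ref{cor:improvedenergyestimation} tolerates the endpoints in its no-rounding-promise mode.) (3) Apply the energy estimator of Corollary~\ref{cor:improvedenergyestimation} in its ``no rounding promise'' mode to $U_H$ with input $\ket{\Psi}$: since $\ket{\Psi}$ is an eigenstate, that mode's guarantee applies, giving a channel $\delta$-close in diamond norm to $\ket{0^n}\bra{0^n}\otimes\ket{\Psi}\bra{\Psi} \to (p\ket{M}\bra{M} + (1-p)\ket{M-1 \bmod 2^n}\bra{M-1 \bmod 2^n})\otimes\ket{\Psi}\bra{\Psi}$ where $M = \mathrm{floor}(a^2 2^n)$. (4) Collect the query complexity: Corollary~\ref{cor:improvedenergyestimation} gives $O(\alpha^{-1}\log(\delta^{-1})(2^n + \log\alpha^{-1}))$ queries to $U_H$ with $\alpha$ any fixed constant; since each $U_H$ query costs $O(1)$ queries to $R_\Pi, R_{\ket{\Psi}}$, and $\alpha = \Theta(1)$, the total is $O(2^n\log(\delta^{-1}))$ as claimed.

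I expect the main obstacle to be step (2): getting the two-dimensional-subspace bookkeeping exactly right so that one can honestly assert $\ket{\Psi}$ is an eigenstate of $H$ rather than merely of $G+G^\dagger$ restricted to a subspace. The subtlety is that $G$ is a rotation only on the invariant plane $\mathrm{span}\{\ket{\Psi}, \Pi\ket{\Psi}\} + \mathrm{span}\{\ket{\Psi}, (I-\Pi)\ket{\Psi}\}$ (which is two-dimensional, spanned by the normalized $\Pi\ket{\Psi}/a$ and $(I-\Pi)\ket{\Psi}/\sqrt{1-a^2}$), and outside this plane $G$ may act as $\pm I$ with corresponding $H$-eigenvalues $0$ — which is harmless since we never leave the plane starting from $\ket{\Psi}$. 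The cleanest phrasing: $\ket{\Psi}$ lies in a two-dimensional $G$-invariant subspace on which $G$ has eigenvalues $e^{\pm 2i\arcsin a}$, so $\ket{\Psi}$ decomposes into those two eigenvectors, both of which $H$ maps to $a^2$ times themselves, hence $H\ket{\Psi} = a^2\ket{\Psi}$. A secondary obstacle is confirming the ``no rounding promise'' clause of Corollary~\ref{cor:improvedenergyestimation} really does deliver exactly the stated output form — but that corollary states precisely this (given an eigenstate input), so invoking it is legitimate, and the erroneous outcome $\mathrm{floor}(a^2 2^n) - 1 \bmod 2^n$ matches verbatim. One should also remark briefly why no rounding promise is needed here: it is because we engineered the input to be a genuine eigenstate, so the coherence-damage issue simply does not arise — the only residual nondeterminism is the first-bit flip inherited from Corollary~\ref{cor:improvedenergyestimation}, which is captured by the probability $p$.
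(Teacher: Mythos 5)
Your route is genuinely different from the paper's. The paper bypasses the Grover-subspace analysis entirely: it uses linear combinations of unitaries to block-encode the two projectors $\Pi = \tfrac{1}{2}(I + R_\Pi)$ and $\ket{\Psi}\bra{\Psi} = \tfrac{1}{2}(I + R_{\ket{\Psi}})$, then multiplies them to form $A := \ket{\Psi}\bra{\Psi}\, \Pi\, \ket{\Psi}\bra{\Psi} = a^2 \ket{\Psi}\bra{\Psi}$, which is manifestly Hermitian, has $\ket{\Psi}$ as an eigenstate with eigenvalue exactly $a^2$, and is zero on the orthocomplement. That one-line identity does everything your two-dimensional invariant-subspace bookkeeping was aiming for, and it sidesteps the subtlety you flag as the ``main obstacle.'' Your approach does work in spirit and both proofs end identically (invoke Corollary~\ref{cor:improvedenergyestimation} in its no-rounding-promise mode, argue the input is an eigenstate, read off the $O(2^n \log \delta^{-1})$ query count).

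There is, however, a sign error in your concrete construction. With $G := R_{\ket{\Psi}} R_{\Pi}$, the eigenvalues on the two-dimensional invariant plane are $-e^{\pm 2i\theta}$, not $e^{\pm 2i\theta}$: in the basis $\{\Pi\ket{\Psi}/a,\, (I-\Pi)\ket{\Psi}/\sqrt{1-a^2}\}$ one finds $G$ has trace $-2\cos 2\theta$ and determinant $1$, so its eigenvalues are $e^{\pm i(\pi - 2\theta)}$. (The familiar ``$e^{\pm 2i\theta}$'' statement is for the Grover iterate $-R_{\ket{\Psi}} R_\Pi$.) Consequently $\tfrac{1}{2}(G + G^\dagger)$ has eigenvalue $-\cos 2\theta = 2a^2 - 1$ on the plane, and your $H := \tfrac{1}{2}I - \tfrac{1}{4}(G + G^\dagger)$ gives $\tfrac{1}{2} - \tfrac{1}{2}(2a^2 - 1) = 1 - a^2$, not $a^2$. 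The fix is to flip a sign — take $H := \tfrac{1}{2}I + \tfrac{1}{4}(G + G^\dagger)$, or equivalently define $G$ with the conventional minus sign — after which the rest of your argument goes through. You would also want to note, as you partially do, that on the orthocomplement of the invariant plane $G$ acts as $\pm I$, so $\|H\| \le 1$ still holds and the block-encoding is well-defined, even though $H \prec I$ may fail to be strict there (harmless because the algorithm only ever sees the eigenstate $\ket{\Psi}$).
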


\begin{proof} We use linear combinations of unitaries to make block-encodings of $\Pi$ and $\ket{\Psi}\bra{\Psi}$.     

    \begin{align}
        \Pi = \frac{I + R_\Pi}{2}, \hspace{1cm} \ket{\Psi}\bra{\Psi} = \frac{I + R_{\ket{\Psi}}}{2}
        \end{align}
        Then, we multiply these projectors together to make a block-encoding of $A$:
    \begin{align}
        A :=  \ket{\Psi}\bra{\Psi} \cdot \Pi \cdot \ket{\Psi}\bra{\Psi} = a^2 \ket{\Psi}\bra{\Psi}
    \end{align}
    Now we simply invoke the algorithm described in Corollary~\ref{cor:improvedenergyestimation} for the case when no rounding promise is present, and apply it to the input state $\ket{\Psi}$. Since $\ket{\Psi}$ is an eigenstate of $A$, we are guaranteed to measure either $\text{floor}(a^2 2^{n-1})$ or $\text{floor}(a^2 2^{n-1}) - 1$ mod $2^n$.
\end{proof}

Observe that the random error only occurs for particular values of $a^2$, which is where the rounding promise is violated. We can ignore the rounding promise precisely because the input state $\ket{\Psi}$ is an eigenstate of the Hamiltonian: an incorrect estimate does not damage the input state. 


\section{Acknowledgments}

The author thanks Scott Aaronson, Andrew Tan, Pawel Wocjan, William Zeng, Yosi Atia, Andr\'as Gily\'en, Sukin Sim, Philip Jensen, Lasse Bjørn Kristensen and Al\'an Aspuru-Guzik for helpful conversations. This work was supported by Scott Aaronson's Vannevar Bush Faculty Fellowship from the US Department of Defense. The v4 updates were performed while the author was at IBM Quantum.

\end{document}